\newcommand{\bbr}{\mathbb{R}}
\newcommand{\E}{\mathbb{E}}
\newcommand{\bbn}{\mathbb{N}}
\renewcommand{\P}{\mathbb{P}}
\newcommand{\bbq}{\mathbb{Q}}
\newcommand{\Q}{\bbq}
\newcommand{\EQ}{\E^\Q}
\newcommand{\bbg}{\mathbb{G}}
\newcommand{\bbx}{\mathbb{X}}
\newcommand{\bbt}{\mathbb{T}}
\newcommand{\bbd}{\mathbb{D}}
\newcommand{\fcal}{\mathcal{F}}
\newcommand{\gcal}{\mathcal{G}}
\newcommand{\dcal}{\mathcal{D}}
\newcommand{\bcal}{\mathcal{B}}
\newcommand{\ncal}{\mathcal{N}}
\newcommand{\ccal}{\mathcal{C}}
\newcommand{\acal}{\mathcal{A}}
\newcommand{\mcal}{\mathcal{M}}
\newcommand{\xcal}{\mathcal{X}}
\newcommand{\ycal}{\mathcal{Y}}
\newcounter{modcount}
\newcommand{\modulo}[2]{%
\setcounter{modcount}{#1}\relax
\ifnum\value{modcount}<#2\relax
\else\relax
\addtocounter{modcount}{-#2}\relax
\modulo{\value{modcount}}{#2}\relax
\fi}
\newcommand{\tablepictures}[4][c]{\begin{tabular}[#1]{@{}c@{}}#2\vspace{0.5cm}\\(\alph{#4}) #3\end{tabular}}
\newcounter{gridsearch}
\newcommand{\tabpic}[2]{
    \stepcounter{gridsearch}
    \modulo{\thegridsearch}{2}
    \ifnum\value{modcount}=0
        \tablepictures[t]{#1}{#2}{gridsearch}\\[2.0cm]
    \else
        \tablepictures[t]{#1}{#2}{gridsearch}&~&
    \fi
}
\newtheorem{lemma}{Lemma}[section]
\newtheorem{proposition}[lemma]{Proposition}
\newtheorem{theorem}[lemma]{Theorem}
\newtheorem{corollary}[lemma]{Corollary}
\newtheorem{definition}[lemma]{Definition}
\newtheorem{example1}[lemma]{Example}
\newtheorem{rem1}[lemma]{Remark}
\newtheorem{assumption}[lemma]{Assumption}
\newtheorem{alg1}[lemma]{Algorithm}
\newtheorem{me1}[lemma]{Mechanism}
\newenvironment{remark}{\begin{rem1}\rm}{\end{rem1}}
\newenvironment{example}{\begin{example1}\rm}{\end{example1}}
\newcommand{\T}{\top}
\newcommand{\diag}{\operatorname{diag}}
\DeclareMathOperator*{\argmax}{arg\,max}
\newcommand\ind[1]{\mathbb{I}_{\{#1\}}}
\begin{document}

\title{Pricing of debt and equity in a financial network with comonotonic endowments}
\author{Tathagata Banerjee \thanks{Washington University in St.\ Louis, Department of Electrical and Systems Engineering, St.\ Louis, MO 63130, USA.} \and
Zachary Feinstein \thanks{Stevens Institute of Technology, School of Business, Hoboken, NJ 07030, USA. \tt{zfeinste@stevens.edu}}}
\date{\today}
\maketitle
\abstract{
In this paper we present formulas for the valuation of debt and equity of firms in a financial network under comonotonic endowments.  We demonstrate that the comonotonic setting provides a lower bound and Jensen's inequality provides an upper bound to the price of debt under Eisenberg-Noe financial networks with bankruptcy costs.  Such financial networks encode the interconnection of firms through debt claims.  The proposed pricing formulas consider the realized, endogenous, recovery rate on debt claims.  
We endogenously construct the comonotonic endowment setting from a equity maximizing standpoint with capital transfers.  We conclude by, numerically, comparing the network valuation problem with two single firm baseline heuristics which can, respectively, approximate the price of debt and equity.
}

\section{Introduction}\label{sec:intro}
Valuation adjustments, especially credit valuation adjustment [CVA], have become an important part of derivative valuation by any bank since the 2007-2009 financial crisis.  CVA proposes an adjustment to the ``traditional'' price of a derivative, as in the seminal paper of \cite{merton1974}, to account for counterparty risk of the instrument.  However, CVA does not take into account the full network of interconnections that exist within a financial system. As evidenced by the 2007-2009 financial crisis, considering the risk of a single firm alone can cause gross misspecification in firm health.  In this work, we will focus on interconnections through correlated assets as well as interbank debt claims. These interconnections effectively link the balance sheets of different banks and make the value of a firm dependent on the performance of other firms. These shared connections might open up avenues for shared prosperity but also introduce potential channels for contagion. These avenues of contagion become particularly significant during a financial crisis where the default of one firm might cause the failure in other firms.  This effect is also referred to as cascading defaults.

The rest of the paper is organized as follows. A review of relevant literature is provided in Section~\ref{sec:literature}. Detailed motivation for the comonotonic endowment setting, which is central to this work, is provided in Section~\ref{sec:motivation}. The primary innovations of this paper are highlighted in Section~\ref{sec:contributions}. In Section~\ref{sec:setting} we provide a description of our mathematical setting and necessary background on the Eisenberg-Noe framework.  Section~\ref{sec:price} considers network clearing when firms have comonotonic endowments. We provide the expectation of the equilibrium payments, equity, and wealth.  
Further, we prove that these expected values can provide upper and lower bounds for the general random endowment setting of, e.g., \cite{gourieroux2012,barucca2016valuation} in Section~\ref{sec:bounds}.  
Section~\ref{sec:statics} considers simple comparative statics of the provided valuations with respect to the different system parameters under {a lognormal} setting for clear comparisons to \cite{merton1974}.  
Section~\ref{sec:conclusion} concludes. All proofs can be found in Online Appendix~\ref{sec:proofs}.

\subsection{Literature review}\label{sec:literature}
Since the 2007-2009 financial crisis there have been significant efforts to study the effects of interconnection within the financial system, with particular emphasis on modeling and quantifying systemic risk. One of the primary approaches for modeling systemic risk is fundamentally network-based. The seminal paper \cite{EN01} models these financial dependencies as a directed graph. In this approach, firms must meet their full liabilities by transferring their assets, otherwise they are deemed insolvent and are unable to pay out in full. This inability to make the full payment can, in turn, make other banks default, thus resulting in cascading failures. This interdependency of realized (clearing) payments is modeled as a fixed point problem. \cite{EN01} prove the existence and uniqueness of the clearing payments and provides an elegant algorithm for the computation of the same. This baseline setting of Eisenberg-Noe has been extended in many directions to account for more realistic situations. Interconnections through cross-holdings have been considered in \cite{suzuki2002,gourieroux2012, gourieroux2013}. 
Realistic default mechanisms in the form of bankruptcy cost has been studied in \cite{E07,RV13,GY14,AW_15,CCY16,veraart2017distress}.  Central banks and regulatory bodies have incorporated these network models into their stress tests of the financial system (see, e.g., \cite{Anand:Canadian,HK:Modeling,ELS13,U11,GHK2011}).  As such, valuing claims that take the full network effects into account is imperative so that the true risk of the claims are also taken into account.  And, in fact, \cite{SS18pricing} concludes, empirically, that at present financial contagion is \emph{not} priced into interbank markets; as such, the focus of this work is on a network valuation adjustment scheme [NVA] to determine what prices should be if markets were to accurately price such contagion.  The NVA approach is closely related to the work of~\cite{CS07network} which constructs a network of obligations with infinite maturity and in which debts are refinanced at time of default.  In this work we construct a tractable NVA approach specifically in an Eisenberg-Noe framework.

The NVA approach in an Eisenberg-Noe framework was, to our knowledge, first considered in \cite{gourieroux2012} and extended to more general clearing mechanisms in \cite{barucca2016valuation}.  We wish to note that the general methodology of NVA in the Eisenberg-Noe framework, as studied in \cite{gourieroux2012}, requires partitioning the endowment space into $2^n$ possible default scenarios in a system with $n$ banks. Therefore computing the expectation suffers from the curse of dimensionality and is typically computationally intractable for realistic systems.  In fact, a common thread of the existing literature is that explicit, analytical solutions are considered only for cases with either no direct interconnection between firms or where the number of firms in the system is very low.  Further complications arise if we consider bankruptcy costs along the lines of \cite{RV13}. Without bankruptcy costs, the $2^n$ default scenarios result in the partition of the bank endowment space into $2^n$ mutually exclusive and convex regions. However, in the presence of bankruptcy costs, these partitioned regions are not, in general, convex.  We refer the reader to Online Appendix \ref{sec:general} for elucidation of this point. Providing any analytical solution in this case becomes very challenging even in small systems (as has been highlighted in past works such as \cite{suzuki2002}). Hence a numerical approach has been generally followed, i.e., via Monte Carlo simulations.

\subsection{Motivation for comonotonic endowments}\label{sec:motivation}
We note, first, that the problem of finding the expectation of the clearing wealths and payments under (integrable and nonnegative) random endowments $X \in (L^1_+)^n$ was considered in~\cite{gourieroux2012} in the case of no bankruptcy costs ($\alpha_x = \alpha_L = 1$) but with cross-ownership.  We replicate those results and extend them to consider the case with bankruptcy costs in Online Appendix~\ref{sec:general}.  
However, in this work we focus on a financial system in which banks hold comonotonic endowments; this is used directly in Section~\ref{sec:price} and as bounds for the general setting in Section~\ref{sec:bounds}.  
Briefly, a random vector $X \in (L^1)^n$ is called \emph{comonotonic} if it is equal in distribution to $f(q)$ for some random variable $q \in L^0$ and nondecreasing function $f: \bbr \to \bbr^n$; this is formalized in Section~\ref{sec:comonotone}.
There are three complementary motivating justifications for the comonotonic endowment setting fundamental to this work: theoretical, empirical, and computational.
\begin{enumerate}
\item \emph{Portfolio optimization}: 
First, as is standard in the literature, if all firms are portfolio optimizers and do not take any other firm's investments into account, the chosen endowments will all be countermonotonic to the pricing kernel~(see, e.g., \cite{PY75}). 
Formally, we wish to consider the risk-sharing equilibrium problem of~\cite{buhlmann1980economic,buhlmann1984general}; this problem, {under a finite probability space}, is equivalent to the Arrow-Debreu equilibrium~\cite{arrow1954existence} (see, e.g.,~\cite{anthropelos2017equilibrium,bichuch2020endogenous}).  Let $(\Omega,\fcal,\P)$ denote a probability space describing the financial market of $n$ utility maximizing agents.  Agent $i$ has some initial (uniformly bounded random) endowment $X_i \in L^\infty$ which she can trade with the other agents so as to maximize her own expected utility $\E[u_i]$.  A \emph{B\"uhlmann equilibrium} is a pair $(Y^*,\Q)$ so that
    \begin{enumerate}
    \item \emph{utility maximizing}: for every agent $i$, $Y_i^* \in \argmax_{Y_i \in L^\infty} \left\{\E[u_i(Y_i)] \; | \; \E^\Q[Y_i] \leq \E^\Q[X_i]\right\}$ and
    \item \emph{market clearing}: $\sum_{i = 1}^n Y_i^* = \sum_{i = 1}^n X_i$.
    \end{enumerate} 
    As proven in Lemma 3 of~\cite{tsanakas2006risk}, any {equilibrium vector of} portfolio holdings {$Y^*$} will be comonotonic.  {The details of this argument can be found in Online Appendix~\ref{sec:buhlmann}.}  In fact, \cite{tsanakas2006risk} proves the comonotonicity of the portfolio holdings even with (potentially heterogeneous) distortions on the probability measure $\P$ which can account for, e.g., ambiguity aversion of the economic agents.

\item \emph{Empirical correlations}: 
The empirical evidence shows that the bank assets exhibit a very high degree of rank correlation. The Spearman correlations of the daily returns (adjusted for dividends and stock splits) for five U.S.\ banks (JP Morgan Chase and Co, Goldman Sachs Group, Bank of America Corporation, Morgan Stanley, and Citigroup Inc.) are shown in Table~\ref{tab:Correlation} from January 1, 2015 to December 31, 2020. This notion of a homophilic financial system is reported in US and German banking systems by~\cite{elliott2019systemic}: ``Banks often have similar real exposures to their financial counterparties.''
\begin{table}
\centering
\begin{tabular}{@{}|c|c||c|c|c|c|c|@{}}
\hline
Name                        & Ticker & JPM  & GS   & BAC  & MS   & C    \\ \hline\hline
JP Morgan Chase and Co      & JPM    & 1.00 & 0.82 & 0.88 & 0.85 & 0.87 \\ \hline
Goldman Sachs               & GS     & 0.82 & 1.00 & 0.81 & 0.86 & 0.81 \\ \hline
Bank of America Corporation & BAC    & 0.88 & 0.81 & 1.00 & 0.84 & 0.87 \\ \hline
Morgan Stanley              & MS     & 0.85 & 0.86 & 0.84 & 1.00 & 0.83 \\ \hline
Citigroup Inc               & C      & 0.87 & 0.81 & 0.87 & 0.83 & 1.00 \\ \hline
\end{tabular}
\caption{Spearman correlations of the daily returns for five large U.S.\ institutions.}
\label{tab:Correlation}
\end{table}

\item \emph{Computational and analytical tractability}: 
The general random endowment setting with $X \in (L^1_+)^n$, as considered in~\cite{gourieroux2012}, requires computing the measure of $2^n$ regions in $\bbr^n$.  This would typically require Monte Carlo simulation and suffer greatly from the curse of dimensionality.  The comonotonic framework, in contrast, requires computing the measure of only $n$ intervals in $\bbr$; this is tractable both computationally and analytically.  As such, the comonotonic framework allows for resilience and stability analysis as considered in~\cite{AOT13}.  That work restricts the network topologies to those constructed from ring and completely connected networks with i.i.d.\ Bernoulli shocks; by considering comonotonic endowments instead, we are able to analytically study resilience and stability for general network topologies and more general shock types. 
In particular, in Section~\ref{sec:bounds}, we will demonstrate that we are able to provide lower and upper bounds on the expectation of the system behavior under any random endowment using the comonotonic setting.  
This is considered in the special case of a common, systematic, shock with idiosyncratic shocks in {Online Appendix}~\ref{sec:capm}.  
By focusing on the comonotonic framework, we are able to take advantage of the tractable analytical results with at most $n$ intervals for consideration.
\end{enumerate}

\subsection{Primary contributions}\label{sec:contributions}

In light of the aforementioned analytical and computational limitations to NVA in the current literature, and with the highlighted motivations in Section~\ref{sec:motivation}, we define and study the price of debt and equity under comonotonic endowments.  
Within this work, we broadly equate prices with expectations.  That is, up to discounting, the expected payments $\E[p_i]$ of a total obligation $\bar p_i$ is viewed as the price of debt and the expected equity $\E[E_i]$ is viewed as the price of equity.  This is made explicit in Section~\ref{sec:statics} with the use of discounting and a risk-neutral measure $\Q$.
The primary contributions of this paper are as follows:
\begin{enumerate}
\item We formulate an \textbf{analytical formulation for NVA} under the comonotonic endowment setting in Section~\ref{sec:price}.
In this setting, the default regions can be characterized by at most $n$ intervals on $\bbr_+$ and become tractable analytically. This tractability extends to the case where bankruptcy costs are considered;  this is in contrast to prior considerations of NVA, such as~\cite{gourieroux2012}, in which $2^n$ nonconvex regions need to be evaluated (see Online Appendix~\ref{sec:general}). 
Thus the comonotonic setting allows us to explore the network effects from an analytical perspective.  
\item Under the comonotonic framework, we are able to provide \textbf{bounds on the expectation of the system behavior} under any general random endowment. The lower bound assumes particular importance from a stress-testing perspective and for assessing systemic risk in a financial network.  While it may appear contradictory that banks would choose the riskiest scenario (as considered in the portfolio optimization motivation above and revisited specifically under the Eisenberg-Noe network setting in Corollary~\ref{cor:comonotonic}) the upside benefits for individual bank and sector equities outweigh, for the institutions themselves, the downside systemic risks for the system.
The upper bound, by conditioning on a systematic factor, also follows a comonotonic endowment setting and thus the computational improvements can be applied for that bound as well.
\item We provide special consideration to the analytical bounds for the setting in which the shocks to the banking sector can be decomposed into a systematic component felt (heterogeneously) by all firms and an idiosyncratic component. The lower bound, in particular, can be used to \textbf{perform stress-testing} and assess the health of the system. In doing so, we are able to quantify the effects of diversity of investment strategies on pricing and, therefore, also systemic risk.  By deconstructing the returns of any bank into the systematic or market component and the idiosyncratic component, we can recover the tradeoffs between systematic risk and idiosyncratic risks on systemic risk.  
This so-called diversity versus diversification problem is well-studied in the price-mediated contagion literature; we refer the interested reader to, e.g.,~\cite{CW19,detering2020suffocating}.
\item From an application standpoint, we wish to note the \textbf{comparison of the NVA setting to that taken in the single-firm setting} by \cite{merton1974} (see, also, Online Appendix~\ref{sec:merton}). Numerical case studies are presented to study the comparative statics for the performance of the system with respect to important system parameters and highlight the difference with respect to \cite{merton1974}.  These illustrative exercises permit us to study the extent to which interbank networks impact prices through direct comparison to equivalent balance sheets without the network contagion effects.  In fact, we construct \textbf{two baseline heuristic balance sheets without financial networks} in Section~\ref{sec:statics} which approximate the network effects on the price of debt and equity.
\end{enumerate}
Though not undertaken in this work, our results allow for consideration of financial networks as in \cite{AOT13} to compare the health and stability of various network topologies to random shocks.  In contrast to \cite{AOT13}, this setting allows for the formulation of stability results under any network topology and not only the two stylized networks (ring and completely connected) undertaken in that work.  We wish to highlight that, though the comonotonic setting may appear restrictive for this purpose, \cite{AOT13} imposes i.i.d.\ shocks to symmetric systems in order to obtain analytical results. 
As far as the authors are aware, one other work has analytically studied the stability and resilience of general network topologies; that work, \cite{amini2021optimal}, is based on the results presented herein in order to solve the optimal network compression problem under systematic shocks.

\section{Setting}\label{sec:setting}
We begin with some simple notation that will be consistent for the entirety of this paper.  Let $x,y \in \bbr^n$ for some positive integer $n$, then 
\[x \wedge y = \left(\min(x_1,y_1),\min(x_2,y_2),\ldots,\min(x_n,y_n)\right)^\T,\] $x^- = -(x \wedge 0)$, and $x^+ = (-x)^-$.  Further, to ease notation, we will denote $[x,y] := [x_1,y_1] \times [x_2,y_2] \times \ldots \times [x_n,y_n] \subseteq \bbr^n$ to be the $n$-dimensional compact interval for $y - x \in \bbr^n_+$.  Similarly, we will consider $x \leq y$ if and only if $y - x \in \bbr^n_+$.  We will also make wide use of the vectors $e_j \in \{0,1\}^n$ for $j = 1,2,...,n$.  This vector is defined so that $e_j$ has a 1 in its $j^{th}$ element and 0 in all other elements.

\subsection{Financial networks}\label{sec:EN}
Throughout this paper we will consider a network of $n$ financial institutions.  
Often we will consider an additional node $n+1$, which encompasses the entirety of the financial system outside of the $n$ banks; this node $n+1$ will also be referred to as society or the societal node. 
We refer to \cite{feinstein2014measures,GY14} for further discussion of the meaning and concepts behind the societal node. 

In this paper, we consider obligations with a single maturity date, as considered in \cite{EN01}. Any bank $i \in \{1,2,...,n\}$ may have obligations $L_{ij} \geq 0$ to any other firm or society $j \in \{1,2,...,n+1\}$.  We will assume that no firm has any obligations to itself, i.e., $L_{ii} = 0$ for all firms $i \in \{1,2,...,n\}$, and the society node has no liabilities at all, i.e., $L_{n+1,j} = 0$ for all firms $j \in \{1,2,...,n+1\}$.  Thus the \emph{total liabilities} for bank $i \in \{1,2,...,n\}$ is given by $\bar p_i := \sum_{j = 1}^{n+1} L_{ij} \geq 0$ and \emph{relative liabilities} from bank $i \in \{1,2,...,n\}$ to bank $j \in \{1,2,...,n\}$ is given by $\pi_{ij} := \frac{L_{ij}}{\bar p_i}$ if $\bar p_i > 0$ and arbitrary otherwise; for simplicity, in the case that $\bar p_i = 0$, we will let $\pi_{ij} = 0$ for all $j \in \{1,2,...,n\}$.  Note that, for any firm $i$, we recover the property that $\sum_{j = 1}^n \pi_{ij} \leq 1$.  Throughout this work we will consider the square matrix $\Pi \in [0,1]^{n \times n}$; the relative liabilities from firm $i$ to the societal node $n+1$ can be defined as being $1 - \sum_{j = 1}^n \pi_{ij} \geq 0$. On the other side of the balance sheet, all firms are assumed to begin with some endowments $x_i \geq 0$ for all $i \in \{1,2,...,n\}$. 

The central question explored in the network models is the determination of the firm wealths after network clearing. Let the clearing wealths be given by $V \in \bbr^n$. In this paper to determine the clearing wealths, we assume the following stylized rules, in adherence to the standard literature, i.e.\ \cite{EN01,RV13}: 
\begin{enumerate}
\item \emph{Limited liabilities}: the total payment made by any firm will never exceed the total assets available to the bank.
\item \emph{Priority of debt claims}: the shareholders of a firm receive no value unless all its debts are paid in full.
\item \emph{All debts are of the same seniority}: in case a bank defaults, debts are paid out in proportion to the size of the nominal claims.
\end{enumerate}
Throughout this work we consider a system with some exogenous recovery rates in case of default, i.e.\ the model proposed in~\cite{RV13}. This means if bank $i$ has negative wealth $V_i < 0$ then it is defaulting and its assets are reduced with recovery rates $\alpha_x \in [0,1]$ on its external assets and $\alpha_L \in [0,1]$ on its interbank assets.

We will briefly define this setting mathematically, the details can be found in~\cite{EN01,RV13} and are replicated for our discussion in Online Appendix~\ref{sec:network}.
With the rules set, we formalize the clearing process $\Psi^*: \bbr^n \to \bbr^n$ in wealths to describe this system as
\begin{align}
\label{eq:wealth2} \Psi^*_i(V) &:= \left(\ind{V_i \geq 0} + \ind{V_i < 0}\alpha_x\right) x_i + \left(\ind{V_i \geq 0} + \ind{V_i < 0}\alpha_L\right) \sum_{j = 1}^n \pi_{ji} (\bar p_j - V_j^-) - \bar p_i
\end{align}
As such, the clearing procedure $\Psi^*$ implies: if bank $i$ has nonnegative wealth $V_i \geq 0$ then it is solvent and its wealth is equal to its total assets minus its total liabilities; if bank $i$ has negative wealth $V_i < 0$ then it is defaulting and its assets are reduced by the recovery rates $\alpha_x,\alpha_L$.  From~\cite{RV13}, we immediately recover a greatest and least clearing solution to $V = \Psi^*(V)$ within the lattice $[-\bar p , x + \Pi^\T \bar p]$.
We note that with $\alpha_x = \alpha_L = 1$ (i.e.\ under no bankruptcy costs) we recover the model of~\cite{EN01}; if additionally all firms have obligations to the societal node $n+1$ (i.e.\ $\sum_{j = 1}^n \pi_{ij} < 1$ with $\bar p_i > 0$ for all firms $i$), then there exists a unique clearing solution $V = \Psi^*(V)$ in this setting. 

\begin{assumption}\label{ass:network}
For the remainder of this paper we will assume that all firms have obligations to the societal node $n+1$ (i.e.\ $\sum_{j = 1}^n \pi_{ij} < 1$ with $\bar p_i > 0$ for all firms $i$).
\end{assumption}

Throughout this work we will focus on the greatest clearing wealths solution $V^\uparrow$.  We choose this equilibrium as all firms and regulators, if given the choice, would prefer these clearing wealths to all others as no firm can improve on their performance beyond that given by $V^\uparrow$.  We wish to note that if the least clearing wealths were desired instead, all subsequent results of this paper would follow comparably.

\begin{definition}\label{defn:maximal}
Define the mapping $V: \bbr^n_+ \to \bbr^n$ so that $V(x)$ is the maximal clearing wealth solution under endowments $x \in \bbr^n_+$.
Further, define $x \mapsto p(x) := \bar p - V(x)^-$ and $x \mapsto E(x) := V(x)^+$ to be the associated payments and equity.
\end{definition}

\subsection{Comonotonicity}\label{sec:comonotone}
For the remainder of this paper, we will consider a probability space $(\Omega,\fcal,\P)$.  Denote by $L^0 := L^0(\Omega,\fcal,\P)$ all measurable random variables.  Further, denote by $L^1 \subseteq L^0$ those random variables that have finite absolute expectation, i.e.\ $X \in L^1$ if $X$ is $(\Omega,\fcal,\P)$ measurable and $\E[|X|] < \infty$.  We will denote by $L^1_+$ those random variables that are almost surely nonnegative.

As highlighted in Section~\ref{sec:motivation}, within this work we will focus on comonotonic random vectors.
\begin{definition}[Definition 4 of~\cite{dhaene2002concept}]\label{defn:comonotonic}
$X \in (L^1)^n$ is comonotonic if it has comonotonic support, i.e.\ either $x \leq y$ or $y \leq x$ for any $x,y \in \operatorname{supp}(X)$ the smallest Borel set $A \subseteq \bbr^n$ such that $\P(X \in A) = 1$.
\end{definition}
Rather than using the formal definition of comonotonicity given above, we will focus on a widely utilized equivalent formulation which comes from, e.g., Proposition 7.18 of~\cite{MFE15}.
\begin{proposition}[Proposition 7.18 of~\cite{MFE15}]\label{prop:comonotonic-defn}
$X \in (L^1)^n$ is comonotonic if and only if $X \overset{(d)}{=} f(q)$ (i.e.\ equal in distribution) for some random variable $q \in L^0$ and $f: \bbr \to \bbr^n$ nondecreasing.
\end{proposition}

\section{Expectations of debt and equity under comonotonic endowments}\label{sec:price}
Consider the motivation expressed in Section~\ref{sec:motivation} for a financial system with comonotonic endowments.  With the justification that banks would choose comonotonic endowments in theory as well as the data to support this being (approximately) accurate, a general endowment space is not necessary for understanding systemic risk and financial contagion.  As such, in this section we will consider comonotonic endowments. 
We will present herein the expectations and probability distributions under general comonotonic endowments. We wish to emphasize that in this paper, we consider a single maturity model along the lines of \cite{EN01,RV13}, i.e.\ the network is formed and fixed at time $0$ and all claims mature (and solvency is determined) at time $T > 0$.

Though all results of this section are presented as the expectation of clearing solutions under the probability measure $\P$, we consider these as pricing formulas.  That is, up to discounting, we view the term $\E[p_i(X)]$ as the price of debt and $\E[E_i(X)]$ as the price of equity for bank $i$ with system endowments $X \in (L^1_+)^n$.  In Section~\ref{sec:statics}, we explicitly consider a market model with risk-neutral measure $\Q$ under which prices can be given.
Throughout this work we, additionally, consider an effective interest rate as an equivalent measure for the price of debt; this interest rate $R_i(X)$ for bank $i$, under some pricing measure $\Q$, is such that $\bar p_i e^{-R_i(X) T} = \EQ[p_i(X)]e^{-rT}$ with risk-free rate $r \geq 0$ and network maturity $T > 0$. 
\begin{definition}\label{defn:R}
Consider a financial network with maturity $T > 0$ and a market with risk-free rate $r \geq 0$ and prices determined by the probability measure $\Q$.  Define the \emph{effective interest rate} on firm $i$'s debt by:
\begin{equation}\label{eq:interest}
R_i(X) = \frac{1}{T}\left[\log(\bar p_i) - \log(\EQ[e^{-rT}p_i(X)])\right].
\end{equation}
\end{definition}
For bank $i$, we can define $R_i(X)-r$ as the risk premium along the lines of \cite{merton1974}. Note that either the effective interest rate or the risk premium is often taken as \emph{the} measure of the price of debt due to the monotonic relation between the expectation and the interest rate (see, e.g.,~\cite{merton1974}).

\begin{assumption}\label{ass:comonotonic}
Throughout this section we will restrict our consideration to comonotonic nonnegative random vectors of endowments $X \in (L^1_+)^n$ that are equal in distribution to $f(q)$ for some nonnegative random variable $q \in L^0_+$ and nondecreasing map $f: \bbr_+ \to \bbr^n_+$.
\end{assumption}

\subsection{Piecewise linear formulation of clearing wealths}\label{sec:piecewise}
From the fictitious default algorithm of~\cite{RV13} (see Corollary~\ref{cor:FDA}) we are able to give a linear construction for the clearing vector provided the defaulting set is known.  This is given by the following construction.  We compare this linear structure to the directional derivative proposed in~\cite{LS10} and the ``network multipliers'' from~\cite{CLY14} when considering only the model of~\cite{EN01}, i.e.\ with full recovery ($\alpha_x = \alpha_L = 1$).
For the remainder of this paper we will use the following definitions:
\begin{align}
\label{eq:Delta} \Delta(z) &:= \left(I - \left(I - (1-\alpha_L)\diag(z)\right)\Pi^\T \diag(z)\right)^{-1} \left(I - (1-\alpha_x)\diag(z)\right)\\
\label{eq:delta} \delta(z) &:= \left(I - \left(I - (1-\alpha_L)\diag(z)\right)\Pi^\T \diag(z)\right)^{-1} \left[I - \left(I - (1-\alpha_L)\diag(z)\right)\Pi^\T\right]\bar p
\end{align}
for $z \in \{0,1\}^n$ denoting the set of defaulting institutions as in the fictitious default algorithm.
Thus 
\begin{align*}
V(x) := \Delta(\ind{V(x) < 0})x - \delta(\ind{V(x) < 0})   
\end{align*} 
for any endowment $x \in \bbr^n_+$ by construction.

\begin{remark}\label{rem:crossownership}
The results of this work can be compared to those from~\cite{gourieroux2012} with cross-ownership of equity.  In that setting the ownership of equity is denoted by $\Gamma \in [0,1]^{n \times n}$ where bank $j$ owns $\gamma_{ij}$ of bank $i$'s equity.  Under the assumption that no firm has sold off all of its equity to other firms within the financial system, we consider only the case that $\sum_{j = 1}^n \gamma_{ij} < 1$.  Though~\cite{gourieroux2012} does not consider bankruptcy costs, we allow for these within our generalized framework through the use of the recovery rates $\alpha_x,\alpha_L \in [0,1]$ as presented above.  That is, the clearing wealths satisfy 
\[V_i = \left(\ind{V_i \geq 0} + \ind{V_i < 0}\alpha_x\right)x_i + \left(\ind{V_i \geq 0} + \ind{V_i < 0}\alpha_L\right)\sum_{j = 1}^n \left[\pi_{ji}\left(\bar p_j - V_j^-\right) + \gamma_{ji}V_j^+\right] - \bar p_i\]  
for every bank $i$.
We can replicate all results of this section -- the formulations of the expected debt payments and equity values under comonotonic endowments -- for the setting of~\cite{gourieroux2012} with bankruptcy costs by simply redefining $\Delta,\delta$ as: 
\begin{align*}
\Delta(z) &:= \left(I - \left(I - (1-\alpha_L)\diag(z)\right)\left[\Pi^\T \diag(z) + \Gamma^\T (I-\diag(z))\right]\right)^{-1}\\
    &\qquad\qquad \times \left(I - (1-\alpha_x)\diag(z)\right)\\
\delta(z) &:= \left(I - \left(I - (1-\alpha_L)\diag(z)\right)\left[\Pi^\T \diag(z) + \Gamma^\T (I-\diag(z))\right]\right)^{-1}\\
    &\qquad\qquad \times \left[I - \left(I - (1-\alpha_L)\diag(z)\right)\Pi^\T\right]\bar p
\end{align*}
without any other modification.
The comonotonic cases that we consider in this work would solve the curse of dimensionality issues that exists in the work of~\cite{gourieroux2012}.  We wish to highlight that, though the comonotonic endowment setting presented in this section can be approached with equity cross-ownership, the bounds presented in Section~\ref{sec:bounds} do \emph{not} hold for this more general setting.
\end{remark}

{
\subsection{Defaulting regions}\label{sec:defaulting}
}
For any random endowment satisfying Assumption~\ref{ass:comonotonic}, we can now consider the defaulting regions, i.e.\ the regions in which different combinations of banks are deemed to be defaulting on a portion of their liabilities.  In fact, under the comonotonic setup considered herein, all such regions in the $q$-space must be convex intervals in $\bbr_+$.  This is in contrast to a general endowment space in which the regions need not be convex if the recovery rate is strictly less than 1 ($\min\{\alpha_x,\alpha_L\} < 1$); we refer to~\cite{gourieroux2012} and Online Appendix~\ref{sec:general} for more on this analysis.  Thus, with the comonotonicity assumption we can uniquely define the regions of $q$ under which different firms default, which we will do so with the vector $q^* \in \bbr^n_+$.  {Notably, this construction of the defaulting regions is fully characterized by the monotonic map $f$ defining the comonotonic endowments without consideration for the underlying random variable.}
\begin{definition}\label{defn:q*}
Fix some random endowments satisfying Assumption~\ref{ass:comonotonic}.  Define $q^* \in \bbr^n_+$ so that $q_i^*$ is the minimal value such that firm $i$ is solvent, i.e.\
\[q_i^* = \inf\left\{q \geq 0 \; | \; V_i(f(q)) \geq 0\right\}.\]
\end{definition}
{A discussion on the computation of $q^*$ is provided within Online Appendix~\ref{sec:q*}.}
The values of $q^*$ have a clear financial meaning if the construction $f$ is considered to be a single factor model.  This is expanded upon in Remark~\ref{rem:sfm}. 
{Specifically, the value $q_i^*$ is, in some sense, indicative of the financial stability of bank $i$.}
\begin{assumption}\label{ass:q*order}
Without loss of generality we will assume for the remainder of this paper (except where explicitly mentioned otherwise) that the banks are placed in descending order of $q^*$, i.e.\ so that $q_1^* \geq q_2^* \geq ... \geq q_n^*$.  Additionally, define $q_0^* = \infty$ and $q_{n+1}^* = 0$.
\end{assumption}

\subsection{Price of debt and equity}
Immediately with {the} construction of minimal values $q^*$ for which each firm is solvent {(as defined in Section~\ref{sec:defaulting} and with computation provided in Online Appendix~\ref{sec:q*})}, we are able to deduce formulations for the defaulting probabilities for each bank as well as the expectations of the wealth, payments, and equity for each firm; such a construction of minimal threshold prices can only exist in the comonotonic setting assumed herein.  As such, and in contrast to the general formulation in~\cite{gourieroux2012} which requires a partition of the endowment space into $2^n$ defaulting regions, the formulations given in the below theorem require only a partition into $n$ intervals due to comonotonicity.  
Further, as demonstrated in Lemma~\ref{lemma:bound} below, in the setting of~\cite{RV13}, this formulation provides a tractable bound on the general expectations given in, e.g.,~\cite{gourieroux2012,barucca2016valuation} for large networks. 
\begin{theorem}\label{thm:comonotonic}
Let the endowments be defined by $X = f(q)$ satisfy Assumption~\ref{ass:comonotonic}.
The probability of default, the expected wealth, the expected payment, and the expected equity for firm $i$ are given, respectively, by:
\begin{align*}
\P(V_i(X) < 0) &= \P(q < q_i^*)\\
\E[V_i(X)] &= e_i^\T \sum_{k = 0}^n \left[\Delta_k \E[f(q)\ind{q \in [q_{k+1}^*,q_k^*)}] - \delta_k \P(q \in [q_{k+1}^*,q_k^*))\right]\\
\E[p_i(X)] &= \bar p_i + e_i^\T \sum_{k = i}^n \left[\Delta_k \E[f(q)\ind{q \in [q_{k+1}^*,q_k^*)}] - \delta_k \P(q \in [q_{k+1}^*,q_k^*))\right]\\
\E[E_i(X)] &= e_i^\T \sum_{k = 0}^{i-1} \left[\Delta_k \E[f(q)\ind{q \in [q_{k+1}^*,q_k^*)}] - \delta_k \P(q \in [q_{k+1}^*,q_k^*))\right]
\end{align*}
where we define $\Delta_k$ and $\delta_k$ by:
\begin{align*}
\Delta_k &:= \begin{cases}\Delta\left(\sum_{j = 1}^k e_j\right) &\text{if } k = 1,2,...,n\\ I &\text{if } k = 0\end{cases} \text{ and } \delta_k := \begin{cases}\delta\left(\sum_{j = 1}^k e_j\right) &\text{if } k = 1,...,n\\ (I - \Pi^\T)\bar p &\text{if } k = 0\end{cases}
\end{align*}
{with $\Delta,\delta$ defined in~\eqref{eq:Delta} and~\eqref{eq:delta} respectively and $q^*$ as in Definition~\ref{defn:q*}.}
\end{theorem}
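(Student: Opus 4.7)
The plan is to exploit the fact that under comonotonicity the state space of $q$ is naturally partitioned into intervals on which the default set is constant, and on each such piece the map $V \circ f$ is affine in $q$ via the representation $V(x) = \Delta(z)x - \delta(z)$ stated right after Definition~\ref{defn:maximal}. The first step is therefore to establish that, under Assumption~\ref{ass:q*order}, for each $k = 0, 1, \ldots, n$ and every $q \in [q_{k+1}^*, q_k^*)$ the set of defaulting firms equals $\{1, 2, \ldots, k\}$; equivalently, the default indicator is $z^{(k)} = \sum_{j = 1}^{k} e_j$ (with $z^{(0)} = 0$). This follows because $V_i \circ f$ is nondecreasing in $q$ (by monotonicity of clearing wealths in endowments, the same monotonicity cited in the proof of Proposition~\ref{prop:q*-gen}), so firm $i$ is defaulting iff $q < q_i^*$; the ordering $q_1^* \geq \ldots \geq q_n^*$ then gives $q < q_k^* \leq q_i^*$ for $i \leq k$ and $q \geq q_{k+1}^* \geq q_i^*$ for $i > k$. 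The boundary conventions $q_0^* = \infty$ and $q_{n+1}^* = 0$ handle the extreme intervals $[q_1^*, \infty)$ (no firm defaults) and $[0, q_n^*)$ (all firms default); if two thresholds coincide, the associated interval is empty and contributes nothing.

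With this partition in hand, the probability of default formula follows immediately from $\{V_i(X) < 0\} = \{q < q_i^*\}$. For the expected wealth, I apply the linear representation on each piece, giving $V(f(q)) = \Delta_k f(q) - \delta_k$ for $q \in [q_{k+1}^*, q_k^*)$, and then sum over $k$ using linearity of expectation to obtain the stated formula. The payment and equity identities then arise from decomposing $p_i(X) = \bar p_i - V_i(X)^-$ and $E_i(X) = V_i(X)^+$ over the same partition: firm $i$'s payment equals $\bar p_i$ on the solvent set $\{q \geq q_i^*\} = \bigcup_{k = 0}^{i-1}[q_{k+1}^*, q_k^*)$ and equals $\bar p_i + V_i(X)$ on the defaulting set $\{q < q_i^*\} = \bigcup_{k = i}^{n}[q_{k+1}^*, q_k^*)$, so $\E[p_i(X)] = \bar p_i + \E[V_i(X)\ind{V_i(X) < 0}]$, which reduces to the wealth formula restricted to the sum over $k = i, \ldots, n$. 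Symmetrically, $\E[E_i(X)] = \E[V_i(X)\ind{V_i(X) \geq 0}]$ yields the formula restricted to $k = 0, \ldots, i-1$.

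The main obstacle is the identification of the default set on each interval together with the validity of the affine representation there; once that is in place, everything else is bookkeeping with indicators and linearity of expectation. A minor subtlety worth noting is the treatment of the boundary point $q = q_i^*$, where the infimum in Definition~\ref{defn:q*} combined with the nondecreasingness of $V_i \circ f$ ensures that firm $i$ is solvent at $q_i^*$, which is consistent with the left-closed, right-open convention $[q_{k+1}^*, q_k^*)$ used throughout the statement.
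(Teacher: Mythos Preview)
Your proposal is correct and follows essentially the same approach as the paper: the paper's proof is a one-line appeal to the fictitious default algorithm (Corollary~\ref{cor:FDA}) and the construction of $q^*$, and your argument simply unpacks that appeal by making explicit the interval partition, the resulting default sets $z^{(k)}$, and the affine representation $V(f(q)) = \Delta_k f(q) - \delta_k$ on each piece. Your decomposition of $p_i$ and $E_i$ via the negative and positive parts of $V_i$ is exactly the bookkeeping the paper leaves implicit.
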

These expectation formulas implicitly encode the network and clearing model of~\cite{RV13} in two key points: (i) through the piecewise linear constructions for $\Delta_k,\delta_k$ and (ii) through the minimal solvency prices $q^*$.  With these components encoding the network, the various expectations are found by conditioning on all possible default sets; in this comonotonic framework that requires partitioning the $q$-space into $n$ intervals with endpoints determined by $q^*$.  In {Online Appendix}~\ref{sec:capm}, we make these formulas more explicit in a lognormal setting akin to that taken by~\cite{merton1974} (summarized in Online Appendix~\ref{sec:merton}) for a single firm only.

\begin{remark}\label{rem:R}
Theorem~\ref{thm:comonotonic} provides an explicit representation for the (discounted) price of debt $e^{-rT}\E[p_i(X)]$ and equity $e^{-rT}\E[E_i(X)]$ under comonotonic endowments $X = f(q)$ for a financial network with maturity $T > 0$, risk-free rate $r \geq 0$, and with pricing measure $\P$.  The effective interest rates $R(X)$ can, likewise, be computed explicitly through~\eqref{eq:interest}.
\end{remark}

\section{Comonotonic pricing bounds}\label{sec:bounds}

In this section we will utilize comonotonic endowments to consider upper and lower bounds on the pricing of debt with general random endowments.
We wish to note that in the following the {threshold values} $q^*$ for the comonotonic versions {(see Definition~\ref{defn:q*})} may not have a physical interpretation.  In the special case that a single factor model is considered, the $q^*$ regains a clear meaning with regards to this factor analysis.  This is expanded upon in Remark~\ref{rem:sfm} below {and considered explicitly in Online Appendix~\ref{sec:capm}}.

\subsection{General upper and lower bounds for wealth and payments}\label{sec:gen-bounds}
We now wish to consider how the formulas above for the expectations of wealth and debt under comonotonic endowments can provide a bound for the more general random endowments.  As previously mentioned, the expectations of debt and equity were studied in~\cite{gourieroux2012,barucca2016valuation}, but the formulations required suffer from the curse of dimensionality. More generally, if the correlations between firm endowments is unknown, the following lemma is useful from a stress-test viewpoint as we find that the comonotonic case is a lower bound on the health of the system.  We wish to note that these bounds are to be considered mathematically and not as a statement on portfolio rebalancing. {In Online Appendix~\ref{sec:strict-bound}, we demonstrate that these bounds can be binding.}
\begin{lemma}\label{lemma:bound}
Let $X \in (L^1_+)^n$ and $Z$ be its comonotonic copula, i.e., $Z = (F_{X_1}^{-1}(U),...,F_{X_n}^{-1}(U))$ for uniform random variable $U$ on the support $[0,1]$ and marginal distributions $F_{X_1},...,F_{X_n}$ for $X_1,...,X_n$ respectively.\footnote{For non-continuous distributions we define $F_{X_i}^{-1}(u) := \inf\{x_i \in \bbr_+ \; | \; F_{X_i}(x_i) \geq u\}$ as in~\cite{MFE15}.}  
Further, let {$\bbg = \left\{\gcal \subseteq \fcal \; | \; \gcal \text{ is a $\sigma$-algebra}, \; \E[X \; | \; \gcal] \text{ is comonotonic}\right\}$ be the set of sub-$\sigma$-algebras such that $\E[X \; | \; \gcal]$ is a comonotonic projection of $X$.}  Then
\begin{align*}
\E[V_i(\alpha_x Z;\alpha_L \Pi,\bar p,1,1)] &\leq \E[V_i(X;\Pi,\bar p,\alpha_x,\alpha_L)] \leq \inf_{{\gcal \in \bbg}} \E[V_i(\E[X \; | \; {\gcal}];\Pi,\bar p,1,1)] \leq V_i(\E[X];\Pi,\bar p,1,1),\\
\E[p_i(\alpha_x Z;\alpha_L \Pi,\bar p,1,1)] &\leq \E[p_i(X;\Pi,\bar p,\alpha_x,\alpha_L)] \leq \inf_{{\gcal \in \bbg}} \E[p_i(\E[X \; | \; {\gcal}];\Pi,\bar p,1,1)] \leq p_i(\E[X];\Pi,\bar p,1,1)
\end{align*}
for any bank $i$. 
{The bound on wealth holds, also, for the societal node where the relative liabilities owed to society are fixed by $\Pi$, i.e.\ such that the societal wealth is explicitly defined by $V_{n+1}(Y;\tilde\Pi,\bar p,\alpha_x,\alpha_L) := \sum_{i = 1}^n \pi_{i,n+1} p_i(Y;\tilde\Pi,\bar p,\alpha_x,\alpha_L)$.}
\end{lemma}
\begin{remark}\label{rem:bound-R}
As discussed previously in, e.g., Remark~\ref{rem:R}, we view the expectation $\E[p_i(X)]$ as the price of debt (up to modification by discounting).  Thus, we can view the bounds on the expectation provided in Lemma~\ref{lemma:bound} as a bound on the price of debt.  Similarly, due to the monotonic relationship between the effective interest rate $R(X)$ and the price of debt, we can determine that
\[R_i(\E[X];\Pi,\bar p,1,1) \leq \sup_{{\gcal \in \bbg}} R_i(\E[X \; | \; {\gcal}];\Pi,\bar p,1,1) \leq R_i(X;\Pi,\bar p,\alpha_x,\alpha_L) \leq R_i(\alpha_x Z;\alpha_L \Pi,\bar p,1,1)\]
for every bank $i$ utilizing the notation from Lemma~\ref{lemma:bound}.
\end{remark}

The conditional upper bounds with endowments $\E[X \; | \; {\gcal}]$ for some {sub-$\sigma$-algebra $\gcal \in \bbg$} in Lemma~\ref{lemma:bound} are comonotonic.  This means that each possible objective value over which we are optimizing is computationally tractable via the formulations in Theorem~\ref{thm:comonotonic}. {Though we present this upper bound as the infimum over all such comonotonic projections, in practice it can be challenging to determine the set of all appropriate sub-$\sigma$-algebras.  As presented in Lemma~\ref{lemma:bound}, the trivial $\sigma$-algebra is always an element of $\bbg$ which provides a bound to this infimum.}  In Remark~\ref{rem:sfm} we consider {another} specific, financially meaningful, choice of {sub-$\sigma$-algebra} when considering a single factor model.

We wish to briefly discuss some intuition surrounding Lemma~\ref{lemma:bound} before continuing.  When banks have interbank relationships, these assets serve as a method for diversification.  However, if banks are all subject to a common shock, this diversification, in effect, disappears and the interconnectedness becomes irrelevant from a risk perspective.  From a computational standpoint, we construct upper bounds for the expected wealths and payments via the comonotonic setting as well, and thus this is not only useful from a worst-case scenario. {}

\begin{remark}\label{rem:sfm}
For simplicity of exposition, consider the full recovery setting $\alpha_x = \alpha_L = 1$ of~\cite{EN01}.
Consider a single factor model for endowments that allows for errors or idiosyncratic terms in which every firm is long the factor: $X = f(q,\epsilon)$ where $f$ and $q$ are as in Assumption~\ref{ass:comonotonic} and $\epsilon \in (L^0)^n$ is independent from $q$ ($f$ need not be monotonic in $\epsilon$).  The conditional upper bound considered in Lemma~\ref{lemma:bound} implies $\E[V_i(X)] \leq \E[V_i(\E[X | q])]$ and $\E[p_i(X)] \leq \E[p_i(\E[X | q])]$ for all banks $i$ ({i.e., with $\gcal = \sigma(q)$}).  We wish to note that this upper bound is again with respect to comonotonic endowments and, thus, can be computed via the formulations in Theorem~\ref{thm:comonotonic}. {Such a setting is of particular interest when $q$ is a ``systematic factor'' and $\epsilon$ defines idiosyncratic risks; we refer the interested reader to Online Appendix~\ref{sec:capm} for a detailed example of this setting.}
\end{remark}

\begin{remark}\label{Acemoglu}
The lower bound discussed in this section assumes particular importance from a stress-testing perspective and for assessing systemic risk in a financial network. This can be used to construct measures of stability and resilience as discussed in \cite{AOT13}. However, the results of stability and resilience discussed in \cite{AOT13} are only proven for ring and completely connected network. In contrast our comonotonic setting allows for these results to be extended (analytically) to any general network topology. {These bounds can, furthermore, be applied to both scalar systemic risk measures~(\cite{chen2013axiomatic,kromer2013systemic}) and set-valued ones~(\cite{feinstein2014measures,AR16}) via the robust representation; this is detailed in Online Appendix~\ref{sec:sysrisk}.}
\end{remark}

\subsection{Price bounds for the market capitalization}\label{sec:equity}
Intriguingly, though the payments and wealth attain their worst-case under comonotonic endowments and best case under expected endowments, the equity of the different firms in the financial system typically is higher under the comonotonic endowments than other correlation structures and lower under the expected endowments.  (We wish to note that the societal node would exhibit the same bounding properties as given in Lemma~\ref{lemma:bound} since its equity is equal to its wealth by construction.)  In fact, as proven in Corollary~\ref{cor:bound} below, the total market capitalization of the financial sector is bounded in the reverse order from those given in Lemma~\ref{lemma:bound}.
\begin{corollary}\label{cor:bound}
Consider the notation and setting of Theorem~\ref{lemma:bound}.
Then the total expected equity of the financial sector is bounded from above and below by the comonotonic endowment setting, i.e., 
\begin{align*}
\sum_{i = 1}^n &E_i(\alpha_x \E[X];\alpha_L \Pi,\bar p,1,1) \leq \sup_{{\gcal \in \bbg}}\sum_{i = 1}^n \E[E_i(\alpha_x\E[X \; | \; {\gcal}];\alpha_L\Pi,\bar p,1,1)]\\
    &\leq \sum_{i = 1}^n \E[E_i(X;\Pi,\bar p,\alpha_x,\alpha_L)] \leq \sum_{i = 1}^n \E[E_i(Z;\Pi,\bar p,1,1)].
\end{align*}
\end{corollary}

In fact, as in, e.g.,~\cite{acharya2009theory,elliott2019systemic}, these market capitalization bounds allow us to consider the portfolio choice problem endogenously in our network framework with full recovery $\alpha_x = \alpha_L = 1$, i.e., that of~\cite{EN01}.\footnote{{As the upper bound for market capitalization is considered with full recovery, the below results are only guaranteed to hold in that setting.}}  We undertake this in order to reiterate (a different) portfolio optimization motivation of the comonotonic endowment framework.  In order to study this problem, 
consider banks who seek to maximize their own expected equity in the worst-case subject to some budget and regulatory constraints $\bcal_i \subseteq L^1_+$, i.e., bank $i$ seeks to solve the minimax problem relative to all other institutions
\begin{equation}\label{eq:max-equity}
\sup_{X_i \in \bcal_i} \inf_{X_{-i} \in \prod_{j \neq i} \bcal_j} \E[E_i(X_i,X_{-i})].
\end{equation}
We will assume that $\bcal_i$ is law-invariant for every bank $i$, i.e., if $X_i \in \bcal_i$ and $X_i \overset{(d)}{=} Y_i$ (equal in distribution) then $Y_i \in \bcal_i$ as well.
Consider the setting in which banks can coordinate and jointly strategize to maximize the total equity of the coalition whose market capitalization is then shared, i.e., coalition $\ccal$ seeks to optimize the modified minimax problem
\[v(\ccal) := \sup_{X_\ccal \in \prod_{i \in \ccal} \bcal_i} \inf_{X_{-\ccal} \in \prod_{j \not\in \ccal} \bcal_j} \sum_{i \in \ccal} \E[E_i(X_\ccal,X_{-\ccal})].\]
The below corollary proves that the grand coalition will invest comonotonically and is stable against individual defections and, thus, the banks would endogenously strategize to invest in the comonotonic endowment space.
\begin{corollary}\label{cor:comonotonic}
Consider a financial system with full recovery (i.e., $\alpha_x = \alpha_L = 1$) in which all banks are equity maximizers according to~\eqref{eq:max-equity} in which they can coordinate investment strategies.
The Shapley value 
\[\phi_i := \sum_{\ccal \subseteq \ncal\backslash\{i\}} \frac{|\ccal|!(n-|\ccal|-1)!}{n!}\left[v(\ccal \cup \{i\}) - v(\ccal)\right], \quad \forall i = 1,2,...,n\]
of the grand coalition $\ncal {:= \{1,2,...,n\}}$ is an imputation, i.e., it is an efficient equity sharing arrangement ($\sum_{i = 1}^n \phi_i = v(\ncal)$) which is individually rational ($\phi_i \geq v(\{i\})$ for every bank $i$) so that no bank will unilaterally choose to leave the grand coalition under this arrangement.
Furthermore, all banks in the grand coalition will invest comonotonically. 
\end{corollary}
As a result of this corollary, though we consider general space of random endowments to be bounded by comonotonic endowments, the comonotonic setting can arise endogenously in an Eisenberg-Noe framework due to a simple equity maximization principle.

\section{Comparative statics}\label{sec:statics}
In this section we provide the comparative statics for the performance of the system with respect to important system parameters through numerical examples. In these numerical examples we will assume {a} lognormal setting {as expressed in Assumption~\ref{ass:lognormal}. 
\begin{assumption}\label{ass:lognormal}
Let bank endowments be given by $X = sq$ for vector of holdings $s \in \bbr^n_+$ and such that the risky investment has \emph{lognormal} payout $q = \exp([r - \frac{\sigma^2}{2}]T + \sigma Z\sqrt{T})$ with risk-free rate $r \geq 0$, volatility $\sigma > 0$, and maturity $T > 0$ where $Z$ is some standard normal random variable. To simplify the setting further, we will assume $r = 0$ and $T = 1$ throughout this section.
\end{assumption} 
Assumption~\ref{ass:lognormal} presents a simplified setting of Online Appendix~\ref{sec:capm} (under the risk-neutral measure) such that the comonotonic condition $\rho_{iM} = 1$ with the additional restriction that $\sigma_i = \sigma_M =: \sigma$ for all firms $i$ and risk-free rate $r = 0$.}  We utilize these illustrative numerical examples to emphasize the impacts of default contagion through the interbank network by comparing the prices formulated in this work to those in~\cite{merton1974} without interbank liabilities (see, also, Online Appendix~\ref{sec:merton}).
We note that the comonotonic construction from the prior sections would allow us to consider more complicated underlying market models, e.g.\ a jump diffusion model; this would be accomplished by considering the appropriate distribution of $q$ at maturity $T$. But for simplicity and due to its use in the seminal work by Merton, we will restrict ourselves to the lognormal setting.

As mentioned above, in the following numerical examples we wish to compare the pricing of debt and equity with network adjustments to the formulation of~\cite{merton1974}, which is presented also in Online Appendix~\ref{sec:merton}.  In particular, we wish to study two simple baseline heuristics in the single bank Merton setting which we numerically find approximate the prices of debt and equity (respectively) in markets with high recovery $\alpha_x,\alpha_L \approx 1$ rates.
\begin{enumerate}
\item\label{baseline:risky} \textbf{Risky approximation for debt}: Assume all interbank assets -- reduced by the recovery rate in some way (herein we choose to consider $\frac{1}{2}[\alpha_x + \alpha_L]$) -- are fully invested in risky assets (i.e.\ exhibiting the lognormal distribution and not capped by the total obligations).  In particular, as demonstrated below, this setting provides an approximation for the price of debt of the firms for high recovery rates.  Heuristically, due to the comonotonicity assumption utilized in this work, the default of the banks occur under overlapping market conditions with payments comonotonic to the risky asset.  Therefore, taking these ideas to the extreme, any bank is defaulting only if all others are as well with those banks paying out proportionally to their obligations and to the (low) asset value; if the (future) asset value is high then these interbank payments are valued comparably, which results in full payments and no defaults.  
\item\label{baseline:riskfree} \textbf{Risk-free approximation for equity}: Assume all interbank assets are paid off in full in units of the risk-free asset.  In particular, as demonstrated below, this setting provides an (upper) approximation for the market capitalization of the firms.  Heuristically, due to the comonotonicity assumption utilized in this work, the solvency of the banks occur under overlapping market conditions.  Therefore, taking this idea to the extreme, any bank has positive equity only in the cases in which all other banks pay their liabilities in full.
\end{enumerate}
Notably, these heuristics have the added advantage in that they can be computed under only aggregate network information (i.e., knowing the total interbank assets) without granular network information. As expressed in the network reconstruction literature (see, e.g.,~\cite{UW04,GV16}) and seen in Section~\ref{sec:EBA}, though total interbank assets and liabilities are known, granular information on the interbank obligations is often unavailable.  As such, these heuristics take on a larger importance as they can approximate the network valuation adjustments without requiring the exact network construction.

\subsection{Two bank system}\label{sec:2bank}
\begin{figure}[h!]
\centering
\begin{tikzpicture}
\tikzset{node style/.style={state, minimum width=0.36in, line width=0.5mm, align=center}}
\node[node style] at (0,0) (x1) {Bank $1$\\$s_1 = 3$};
\node[node style] at (10,0) (x2) {Bank $2$\\$s_2 = 4$};
\node[node style] at (5,0) (x3) {Societal $3$};

\draw[every loop, auto=right, line width=0.5mm, >=latex]
(x1) edge node {$L_{13} = 3$} (x3)
(x2) edge node {$L_{23} = 3$} (x3)
(x1) edge[line width=1.167mm, bend right=20] node {$L_{12} = 7$} (x2)
(x2) edge[bend right=20] node {$L_{21} = 3$} (x1);
\end{tikzpicture}
\caption{A simple network topology for Section~\ref{sec:2bank}.}
\label{fig:2bank}
\end{figure}
Consider the financial system with 2 banks and an additional societal node as depicted in Figure~\ref{fig:2bank}. 
This two bank system with an additional societal node is such that bank $1$ owes $7$ units to bank $2$ and $3$ to the societal node and bank $2$ owes $3$ units to both bank $1$ and the societal node.
Additionally, for simplicity and where otherwise we are not varying that parameter, we consider the risky asset to have volatility $\sigma = 1$ and the claims to have maturity at time $T = 1$.  Further, {recall that} the risk-free rate is assumed to be $r = 0$.  We consider this simple, illustrative, example so as to demonstrate the effects of the financial network (in comparison to the same system in two baseline systems without interbank debt as in~\cite{merton1974}).  For a clear comparison we will take this system without bankruptcy costs ($\alpha_x = \alpha_L = 1$)\footnote{{Except where otherwise noted, the choice of bankruptcy costs do not alter any conclusions within this case study.}} and with a common risky asset exhibiting a lognormal distribution at maturity.  Specifically, we will consider the same comparative statics on the effective interest rate as undertaken by~\cite{merton1974}, i.e.\ by varying the debt-firm value ratios
and the maturity of the debt claims.  We omit a consideration of varied volatility as the conclusions from that study are directly comparable to that of modifying the maturity of the debt claims. We wish to note that since we have assumed $r=0$, the risk premium $R_i(sq) - r$ utilized by~\cite{merton1974} is equivalent to the effective interest rate herein.  We wish to highlight two key insights from this case study.  First, we wish to study the impacts that network effects have on the price of debt and equity. This comparison with the single firm setting of~\cite{merton1974} demonstrates that qualitative conclusions no longer hold in general once the network effects and systemic risk are taken into account; namely the debt-firm value ratio does \emph{not} uniquely define the price of debt and equity (Section~\ref{sec:2bank-d}) nor does it define the shape of the term structure, i.e., the dependence of the effective interest rate on maturity $T$ (Section~\ref{sec:2bank-T}). Second, we consider the performance of our aforementioned baseline heuristics to demonstrate their relevance in estimating the price of debt and equity.

\subsubsection{Impact of debt-firm value ratio}\label{sec:2bank-d}
First we will consider the impact of the debt-firm value ratios $d = \diag\left(s + \Pi^\T \bar p\right)^{-1}\bar p$ on the effective interest rate and thus the price of debt for each firm.  In~\cite{merton1974} in which no firm holds any interbank assets ($\sum_{j = 1}^n \pi_{ji} \bar p_j = 0$), it was shown that an individual firm's debt-firm value ratio can completely determine its own interest rate $R(sq)$.  However, herein we consider explicitly the effects of the interbank assets. In our case, we can vary $d$ by either:
\begin{enumerate}
\item altering the liabilities $\bar p$ and keeping investments $s$ constant, i.e.
\[\bar p = \left[\begin{array}{cc} 1 & -\pi_{21}d_1 \\ -\pi_{12}d_2 & 1 \end{array}\right]^{-1} \left[\begin{array}{c} s_1d_1 \\ s_2d_2 \end{array}\right]\]
given the desired debt-firm value ratio $d \in \bbr^2_+$ constrained by $d_1d_2 \leq \pi_{12}\pi_{21}$; or
\item altering the assets $s$ and keeping liabilities $\bar p$ constant, i.e.
\[s = \diag(d)^{-1}\left[\bar p - \diag(d)\Pi^\T \bar p\right]\]
given the desired debt-firm value ratio $d \in \bbr^2_+$ so that $d \leq \diag(\Pi^\T \bar p)^{-1} \bar p$. 
\end{enumerate}
The distinction between the two approaches to varying $d$ is important because we find that the manner in which the debt-firm value ratio is modified can greatly affect the price of debt as measured by the effective interest rate. The contour plots of the interest rate of bank 2 with respect to debt-firm 1 value ratio $d_1$ and debt-firm 2 value ratio $d_2$ are shown in Figure~\ref{fig:R_pdd2} and Figure~\ref{fig:R_xdd2} for varying the debt-firm values by altering liabilities and altering assets respectively. To provide further clarity on how the individual debt-firm values affect each other, we consider three slices of this data, by fixing the level of $d_1$ and varying $d_2$ through either altering the liabilities or the assets in Figure~\ref{fig:R_pd2} and Figure~\ref{fig:R_xd2} respectively. Notably, if firm 1 has a lower debt-firm value ratio $d_1$ constructed through the change in assets, then firm 2 consistently has a lower effective interest rate for any debt-firm ratio chosen.  However, there is no such monotonicity when the debt-firm value ratios are constructed through changes in the liabilities.
\begin{figure}[h!]
\centering
\begin{subfigure}[b]{.46\textwidth}
\includegraphics[width=\textwidth]{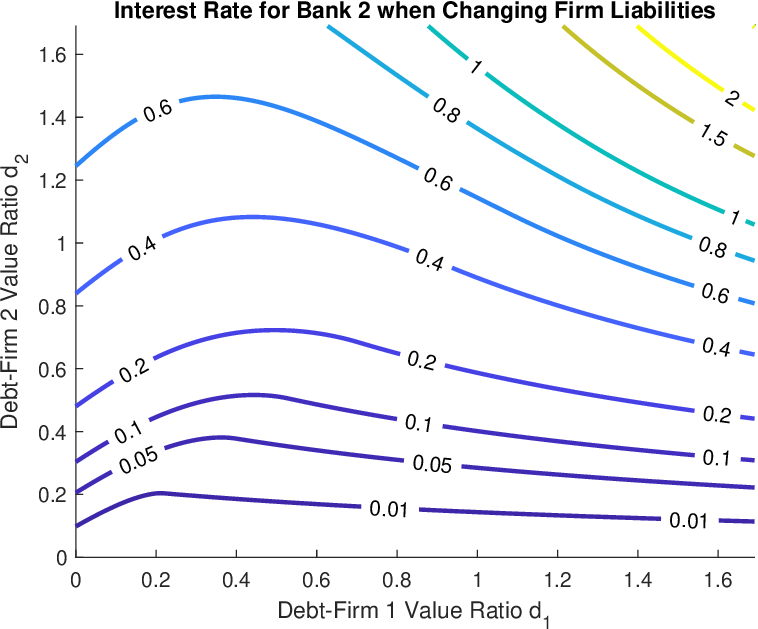}
\caption{Contour plot of the effective interest rate for firm 2 under different debt-firm value ratios when determined by the total liabilities.}
\label{fig:R_pdd2}
\end{subfigure}
~
\begin{subfigure}[b]{.46\textwidth}
\includegraphics[width=\textwidth]{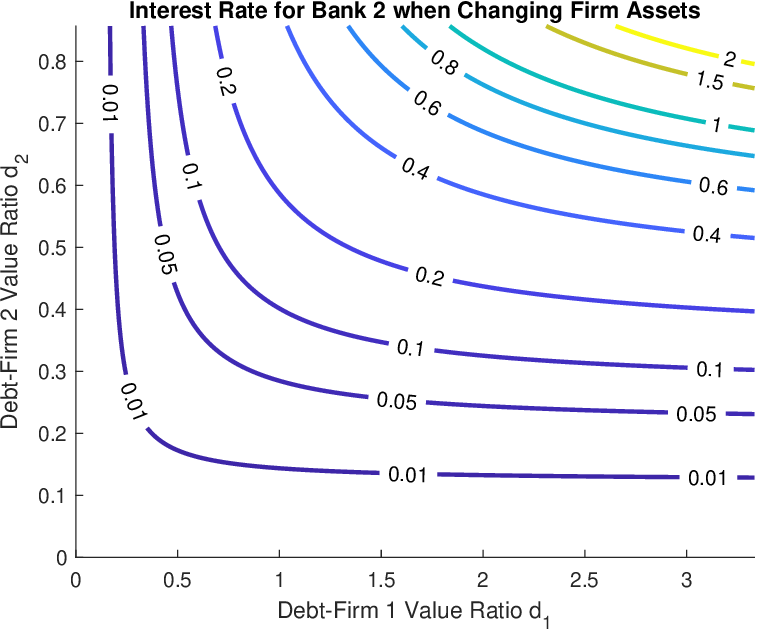}
\caption{Contour plot of the effective interest rate for firm 2 under different debt-firm value ratios when determined by the total endowments.}
\label{fig:R_xdd2}
\end{subfigure}
\\~\\
\begin{subfigure}[b]{.46\textwidth}
\includegraphics[width=\textwidth]{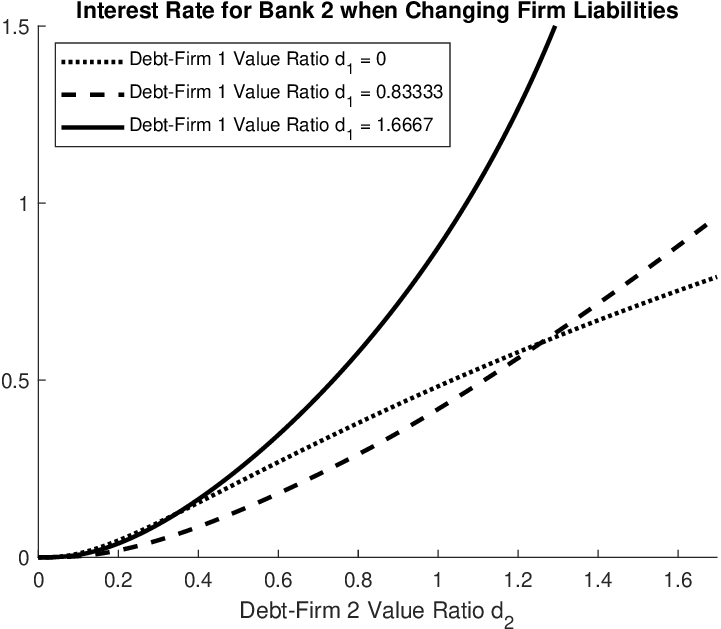}
\caption{Cross-section of the effective interest rate for firm 2 under changes in its own debt-firm value ratio and with three levels of debt-firm 1 value ratio when changes are accomplished through modifications in total obligations.}
\label{fig:R_pd2}
\end{subfigure}
~
\begin{subfigure}[b]{.46\textwidth}
\includegraphics[width=\textwidth]{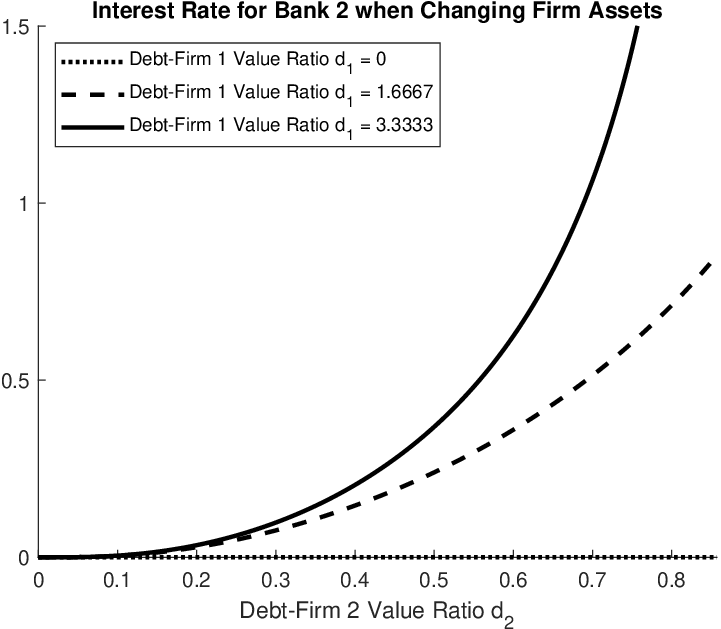}
\caption{Cross-section of the effective interest rate for firm 2 under changes in its own debt-firm value ratio and with three levels of debt-firm 1 value ratio when changes are accomplished through modifications in total assets.}
\label{fig:R_xd2}
\end{subfigure}
\caption{Section~\ref{sec:2bank-d}: The effective interest rate of firm 2 versus changes in the debt-firm value ratios.}
\label{fig:R_d}
\end{figure}

\subsubsection{Impact of maturity}\label{sec:2bank-T}
Second we will consider the impact of the maturity $T$ for the claims on the effective interest rate (and thus the price of debt) and the market capitalization for each firm.  
In this consideration we wish to compare the network effects with the two baseline models without a network presented above.  
Due to the risk of the interbank assets in a network, it is clear that the effective interest rate would be higher and market capitalization lower when including the network effects than when all interbank assets are treated as in baseline model \eqref{baseline:riskfree}.  As depicted in Figure~\ref{fig:R_T1}, firm 1 has similar effective interest rate in all three scenarios, though the market capitalization is nearly double when interbank assets are treated as the market asset (baseline model \eqref{baseline:risky}).  Firm 2, as depicted in Figure~\ref{fig:R_T2}, has orders of magnitude higher effective interest rates under the network effects than if they had no counterparty risk and noticeably higher effective interest rate when full network effects are taken into account than if interbank assets are treated no differently than other risky assets (i.e.\ following the market model).  As expected due to their added risks, the network effects greatly reduce the market capitalization compared to the two single-firm scenarios considered herein.  Notably, the results found herein match the heuristic expectations we have for our 2 baseline models; however, and as depicted in Figure~\ref{fig:EBA_beta} for the next case study, these heuristics lose much of their predictive power for low recovery rates $\alpha_x,\alpha_L < 1$.

We wish to conclude by considering the shapes of the interest rates as a function of the maturity of the claims under network effects.  In~\cite{merton1974} the hyperbolic shape of firm 1's effective interest rate would only occur if its debt-firm value ratio was greater than or equal to 1; similarly the shape exhibited by firm 2's effective interest rate would only occur if its debt-firm value ratio was strictly less than 1.  However, as discussed above, the debt-firm value ratio does not have as unique a property under network effects as it did in~\cite{merton1974} without counterparty risk.  Thus we find that the change in shape need not (and in this numerical example, does not) {occur} at the individual debt-firm value ratios of 1.
\begin{figure}[h!]
\centering
\begin{subfigure}[b]{.46\textwidth}
\includegraphics[width=\textwidth]{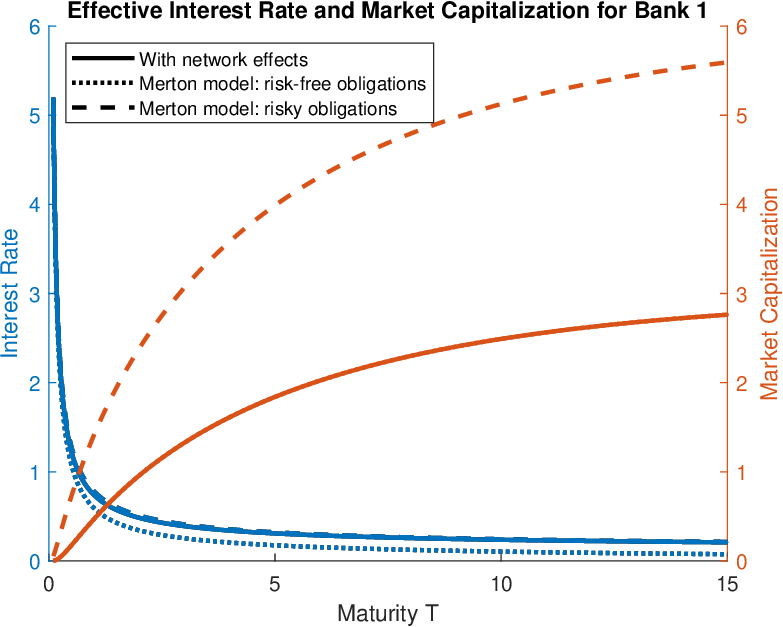}
\caption{The effective interest rate and market capitalization for firm 1 under changes to the maturity of the debt claims with network effects and under single firm effects only.}
\label{fig:R_T1}
\end{subfigure}
~
\begin{subfigure}[b]{.46\textwidth}
\includegraphics[width=\textwidth]{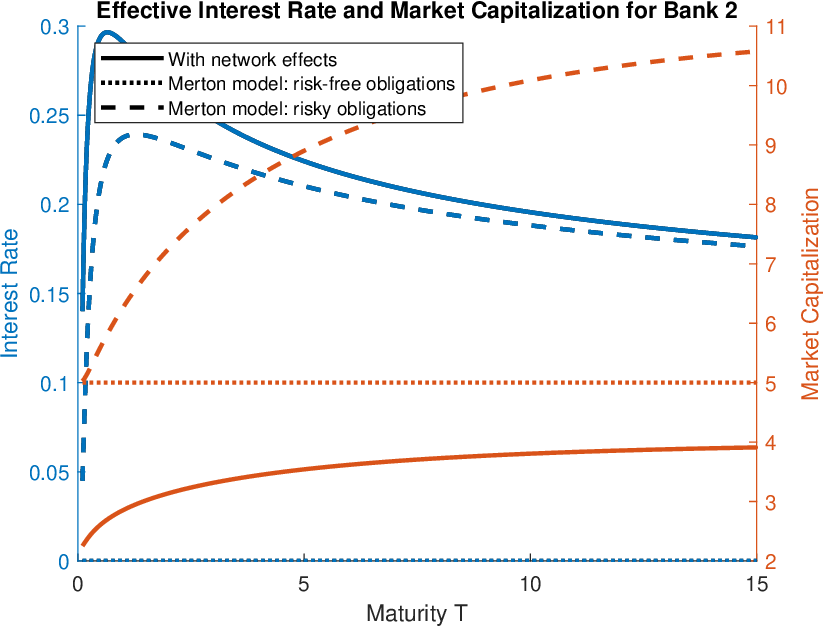}
\caption{The effective interest rate and market capitalization for firm 2 under changes to the maturity of the debt claims with network effects and under single firm effects only.}
\label{fig:R_T2}
\end{subfigure}
\caption{Section~\ref{sec:2bank-T}: The effective interest rate and market capitalization versus changes in the maturity of the debt claims.}
\label{fig:R_T}
\end{figure}

\subsection{European banking system}\label{sec:EBA}
We will now consider a larger financial network consisting of $n = 87$ banks.  This large network provides clear reasoning for considering the comonotonic approach taken within this paper.  As previously discussed, with 87 banks, there are $2^{87} > 10^{26}$ potential combinations of defaulting banks $z \in \{0,1\}^{87}$.  As such, the general framework for considering expected payments from \cite{gourieroux2012} would be computationally intractable.  However, the comonotonic framework presented herein (and which, under the setting of~\cite{EN01}, provides a worst-case for the general setting as discussed in Section~\ref{sec:gen-bounds}) is computationally tractable as only $87$ defaulting regions need to be considered.  As such, we wish to use this case study to highlight the computational tractability of the comonotonic setting as well as the performance of our baseline heuristics on a larger, more realistic, network.

For this example, we will consider these 87 banks to come from the 2011 European Banking Authority EU-wide stress tests.\footnote{Due to complications with the calibration methodology, we only consider 87 of the 90 institutions. DE029, LU45, and SI058 were not included in this analysis.}  This dataset has been used in multiple prior empirical case studies (e.g.~\cite{GV16,CLY14}) of financial contagion in interbank networks.  To calibrate this system, we will take the same approach from~\cite{feinstein2017currency} {which is provided in Online Appendix~\ref{sec:calibration}}.  We note, however, that though we are calibrating the financial network to a real dataset, the marginal distribution for bank endowments are \emph{not} calibrated and as such this example is for \emph{illustrative purposes only}.  We believe that there would be significant value in a further, detailed, case study to empirically determine the marginal distributions of the bank endowments and, with that result, consider yield rates and bond prices to compare with the realized prices in the market.  This is, however, beyond the scope of the current example.  In fact, the primary purpose of using this dataset in this example, as opposed to a large fictional network, is to demonstrate the order of magnitude that the effective interest rates (i.e., the price of debt) can achieve (in comparison to the values presented in the prior case studies on the 2 bank system).

In order to complete our model, we need to consider the remaining parameters of the system.  First, as all economic data pulled from the EBA EU-wide stress test dataset are already in a consistent unit (millions of euros), we will consider the (risk-neutral) value of the market portfolio $q$ to be $1$ (million euros).  Further, during the period over which this data was collected, central banks were setting a low interest rate environment.  Therefore we estimate that the risk-free interest rate is $r = 0$ {(as is assumed throughout this section)}.  Additionally, as this is data from a single year's stress test, we will consider maturity on all debt claims to be $T = 1$ (year).  Finally, the volatility of the risky asset is estimated to be $\sigma = 20\%$ from comparisons to annualized historical volatility of European markets in 2011.

First, we wish to consider the impact of the full network effects on the effective interest rates and market capitalization in the setting without bankruptcy costs ($\alpha_x = \alpha_L = 1$).  For this analysis we consider the same two baseline heuristic models as presented above. 
The data for these comparisons are provided in Figures~\ref{fig:R_EBA} and~\ref{fig:E_EBA} respectively.  We note that, as with our intuition and as in Section~\ref{sec:2bank} above, the price of debt with full network effects is generally comparable to the single firm effect case with all interbank assets treated as the risky asset.  In fact, the interest rate of debt with full network effects is lower than if all interbank assets are treated as the risky asset, but significantly higher than when interbank assets are treated as the risk-free asset.  In contrast, and again matching our intuition and comparable to that in Section~\ref{sec:2bank} above, the market capitalization for firms is strikingly similar between the full network effects and the single firm effects with interbank assets treated as the risk-free asset.  The single firm effects with interbank assets treated as the risky asset can differ by a large degree from the network effects for the market capitalization of the individual firms.
\begin{figure}[h!]
\centering
\begin{subfigure}[b]{.46\textwidth}
\includegraphics[width=\textwidth]{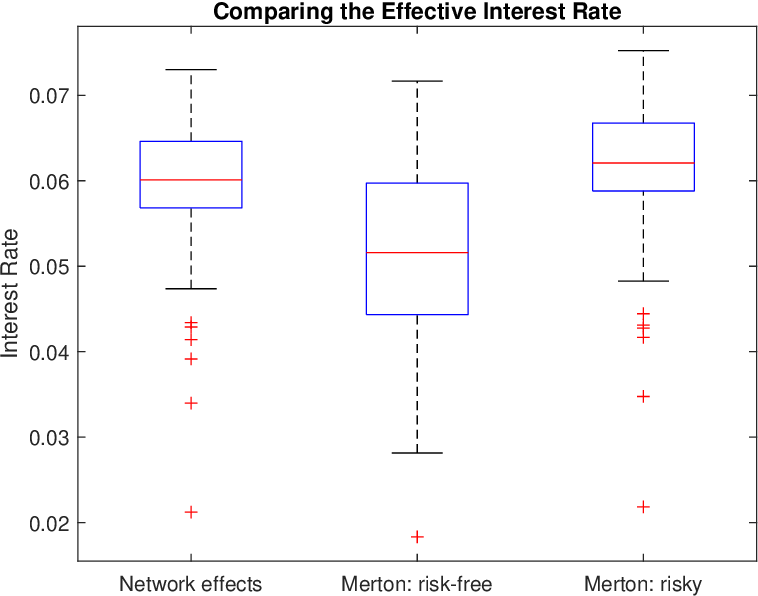}
\caption{Box plot of effective interest rates under network effects and under single firm effects only.}
\label{fig:R_EBA_box}
\end{subfigure}
~
\begin{subfigure}[b]{.46\textwidth}
\includegraphics[width=\textwidth]{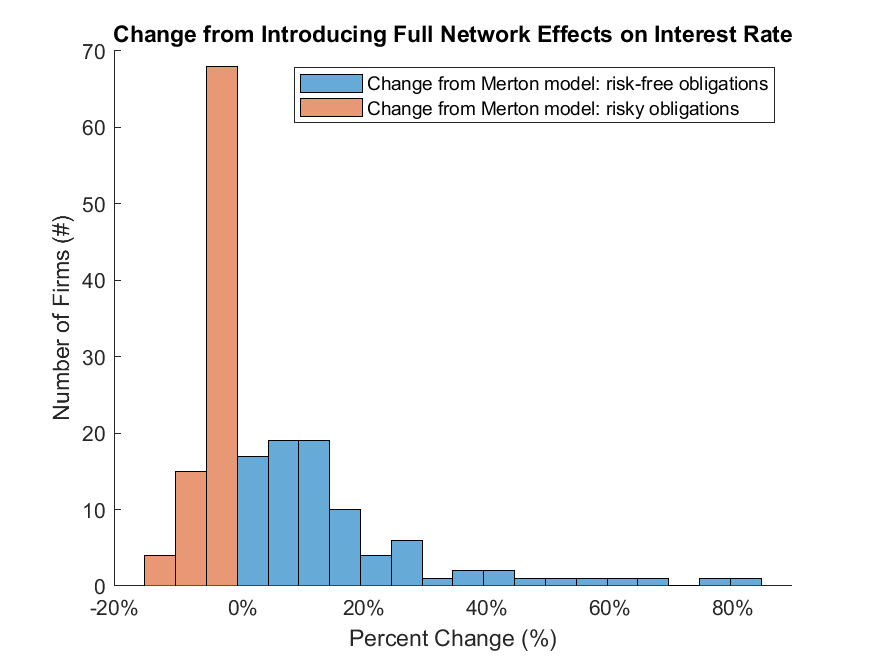}
\caption{Histogram of the relative change in the effective interest rates by including full network effects.}
\label{fig:R_EBA_change}
\end{subfigure}
\caption{Section~\ref{sec:EBA}: Comparison of the effective interest rates under network effects and under single firm effects only without bankruptcy costs ($\alpha_x = \alpha_L = 1$).}
\label{fig:R_EBA}
\end{figure}

\begin{figure}[h!]
\centering
\begin{subfigure}[b]{.46\textwidth}
\includegraphics[width=\textwidth]{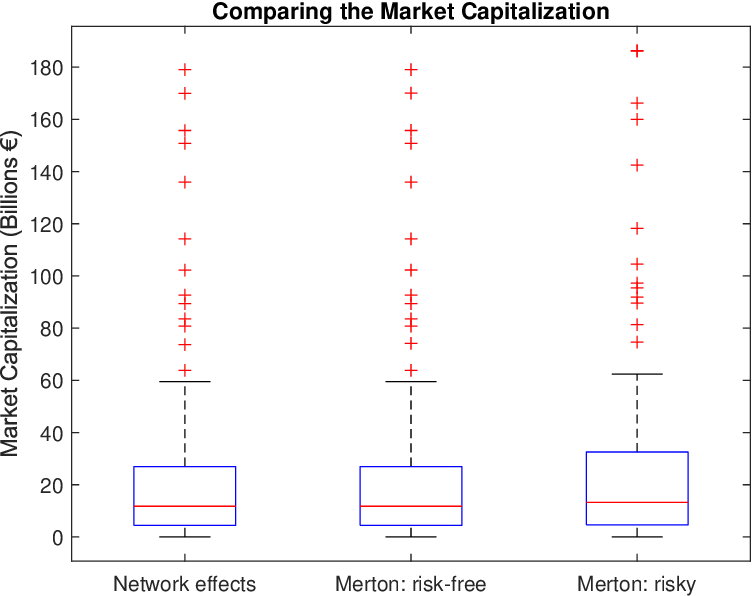}
\caption{Box plot of the market capitalization under network effects and under single firm effects only.}
\label{fig:E_EBA_box}
\end{subfigure}
~
\begin{subfigure}[b]{.46\textwidth}
\includegraphics[width=\textwidth]{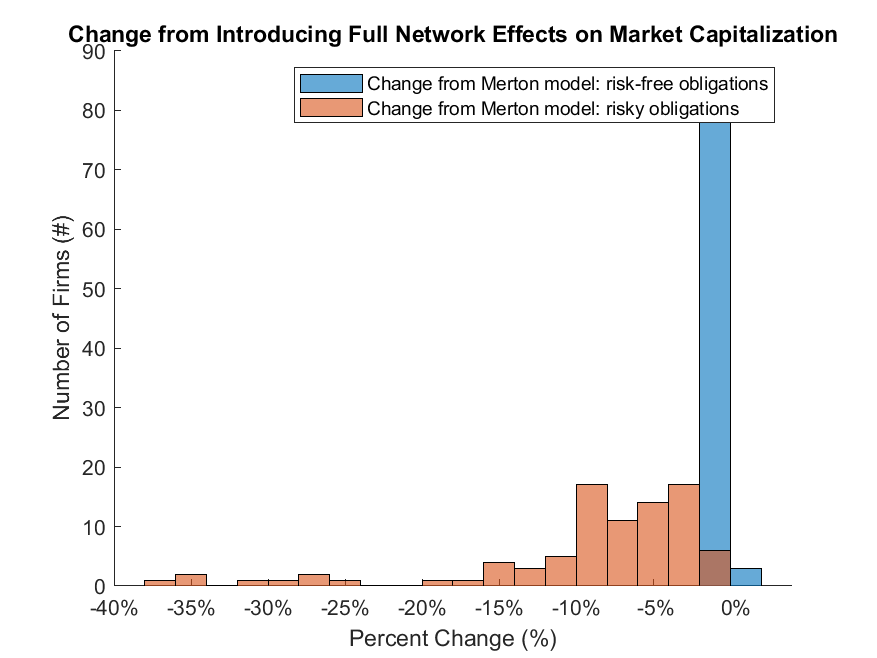}
\caption{Histogram of the relative change in the market capitalizations by including full network effects.}
\label{fig:E_EBA_change}
\end{subfigure}
\caption{Section~\ref{sec:EBA}: Comparison of market capitalization under network effects and under single firm effects only without bankruptcy costs ($\alpha_x = \alpha_L = 1$).}
\label{fig:E_EBA}
\end{figure}

Second, though above we consider the setting without bankruptcy costs, we now wish to consider how the price of debt and equity are affected by the bankruptcy costs.  Analytically, we can conclude before any simulations, that the effective interest rates will decrease and the market capitalization will increase with the recovery rates $\alpha_x$ and $\alpha_L$.  For the purposes of this case study we restrict ourselves to the special case that $\alpha_x = \alpha_L$ as in~\cite{veraart2017distress} and only plot the relevant baseline model for debt (with interbank assets treated as risky assets) and equity (with interbank assets treated as risk-free assets). Figure~\ref{fig:EBA_beta} depicts the median effective interest rate and market capitalization for the 87 banks under consideration; this demonstrates that the heuristics are reasonable at high values of $\alpha_x=\alpha_L$, but lose their predictive power if $\alpha_x = \alpha_L < 1$.  Thus, if bankruptcy costs exist, considering the interbank assets as either the risk-free or risky asset can cause mispricing of risk.  
\begin{figure}[h!]
\centering
\begin{subfigure}[b]{.46\textwidth}
\includegraphics[width=\textwidth]{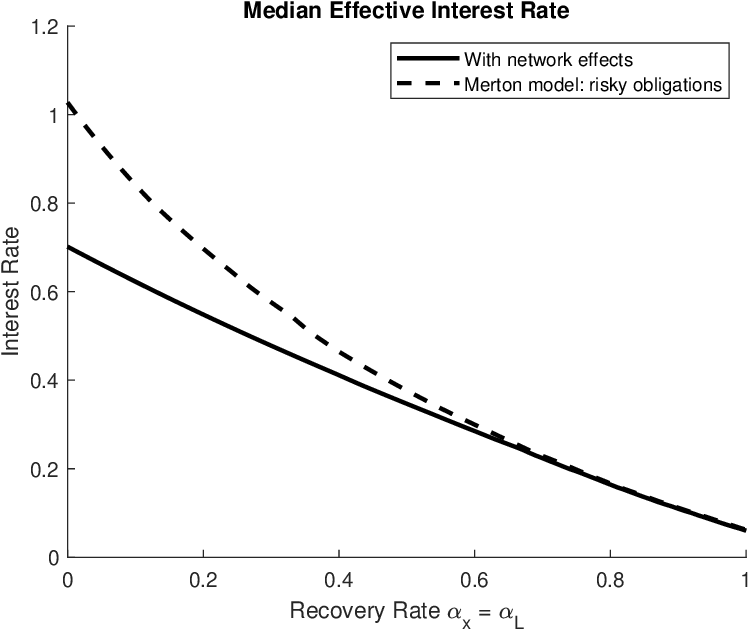}
\caption{Effect of recovery rate on the median effective interest rate.}
\label{fig:R_EBA_beta}
\end{subfigure}
~
\begin{subfigure}[b]{.46\textwidth}
\includegraphics[width=\textwidth]{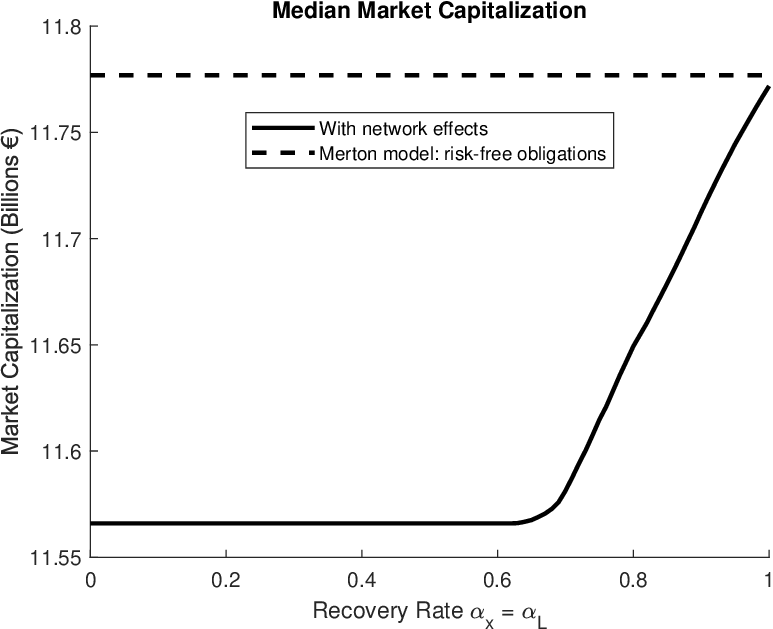}
\caption{Effect of recovery rate on the median market capitalization.}
\label{fig:E_EBA_beta}
\end{subfigure}
\caption{Section~\ref{sec:EBA}: Comparative statics on the recovery rate (with $\alpha_x = \alpha_L$) to the market prices of debt and equity.} 
\label{fig:EBA_beta}
\end{figure}

\section{Conclusion}\label{sec:conclusion}
In this work we present formulas for considering pricing of debt and equity of firms in a financial network under comonotonic endowments.  This methodology extends considerations of CVA for valuation adjustment so that the whole financial network of counterparties is accounted for.   Additionally, the comonotonic framework is theoretically justified, though this is only approximated in market data; under these approximations, we provide upper and lower bound for the price of debt in the Eisenberg-Noe framework.  This is particularly valuable as financial networks are of specific interest in performing stress tests and studying systemic risk.

The models considered herein are simple compared to many modern processes considered for pricing financial instruments.  Though updating the stochastic model of the risky asset would be of interest, herein we will only propose a few extensions for more complete financial network models.  First, we propose utilizing an extension of the Eisenberg-Noe framework in which obligations are neither zero-coupon nor have the same maturities.  Such an underlying financial network has been proposed in~\cite{CC15,KV16,BBF18}.  In particular,~\cite{BBF18} already proposes a setting with stochastic endowments.  We believe using mark-to-market pricing of debt and market capitalization would allow for more realistic determination of default times over an exogenous deficit level.
Second, as during systemic crises the failure of banks and drop in asset prices are inexplicably linked, we believe that including more complicated fire sale dynamics into this system would be of interest.  In particular, we highlight~\cite{AFM16,feinstein2015illiquid,feinstein2016leverage,CW13,CW14} as possible underlying models for use in pricing with price impact dynamics.  
Finally, as highlighted in many empirical works such as~\cite{HalajKok:ECB,HK:Modeling,ELS13,Anand:slides2015}, the network is typically unknown and needs to be estimated from partial information.  Thus sensitivity analysis of pricing under misspecification of the network would be of interest.  In the static, deterministic, setting of~\cite{EN01} this was studied by~\cite{feinstein2017sensitivity}.

\bibliographystyle{plainnat}
\bibliography{bibtex2}

\newpage
\appendix

{
This appendix is organized as follows.  First, in Appendix~\ref{sec:network}, we provide details on the network clearing problem of~\cite{EN01,RV13}.  Then, in Appendix~\ref{sec:general}, we provide comments on computing the expectation of, e.g., the clearing wealths under general random endowments.  In particular, we study the partitioning of the endowment space by the set of defaulting banks.  In Appendix~\ref{sec:q*}, we provide an algorithm for computing the threshold prices $q^*$ introduced in Section~\ref{sec:defaulting} to partition $q$-space by the set of defaulting banks under the comonotonicity assumption.  This algorithm allows for efficient construction of the these threshold prices.  We then consider systemic risk measures in Appendix~\ref{sec:sysrisk} in which we generalize the results of Lemma~\ref{lemma:bound} to find that bounds on these objects can similarly be provided by the comonotonic endowment setting.  In Appendix~\ref{sec:strict-bound} we return to the pure expectation setting presented in the main body of this work to provide simple examples demonstrating that the upper and lower bounds provided in Lemma~\ref{lemma:bound} can be binding.  This is followed by a summary of the Merton model for pricing debt and equity in Appendix~\ref{sec:merton}.  The lognormal setting of the Merton model is then considered in a CAPM setting with idiosyncratic risks and placed in a financial network in Appendix~\ref{sec:capm}.  Following this, we briefly describe the calibration of the interbank network for Section~\ref{sec:EBA} in Appendix~\ref{sec:calibration}.  In Appendix~\ref{sec:buhlmann}, we review details on the risk sharing problem presented in Section~\ref{sec:motivation}.  Finally, the proofs of the results within the main body of this paper are provided in Appendix~\ref{sec:proofs}. 
}

\section{Details of financial networks}\label{sec:network}
In this section we wish to give a formalized construction of all the details of the financial networks considered in this paper.  We begin with the setting proposed in Section~\ref{sec:EN}.

First, we wish to formalize the clearing process $\Psi: \bbr^n \to \bbr^n$ in wealths to describe this system. 
We refer to \cite{veraart2017distress,barucca2016valuation,BBF18,banerjee2017insurance} for detailed discussion of the clearing wealths and their relation to the more typical clearing payments from \cite{EN01,RV13}.
As in \cite{BBF18}, we can define the payments and equity from the wealths $V$ as $p = (\bar p - V^-)^+$ and $E = V^+$ respectively. 
The clearing process is defined for all firms $i$ as
\begin{align}
\label{eq:wealth} \begin{split} \Psi_i(V) &:= \ind{V_i \geq 0} \left[x_i + \sum_{j = 1}^n \pi_{ji} (\bar p_j - V_j^-)^+ - \bar p_i\right]\\
&\qquad + \ind{V_i < 0} \left[\alpha_x x_i + \alpha_L \sum_{j = 1}^n \pi_{ji} (\bar p_j - V_j^-)^+ - \bar p_i\right].\end{split}
\end{align}
As such, the clearing procedure $\Psi$ implies: if bank $i$ has nonnegative wealth $V_i \geq 0$ then it is solvent and its wealth is equal to its total assets minus its total liabilities; if bank $i$ has negative wealth $V_i < 0$ then it is defaulting and its assets are reduced by the recovery rates $\alpha_x,\alpha_L$.  
We note that with $\alpha_x = \alpha_L = 1$ (i.e.\ under no bankruptcy costs) we recover the model of~\cite{EN01}. 

We will now consider existence and uniqueness results on the clearing wealth $V$. In general, we can get existence by applying Tarski's fixed point theorem.

\begin{proposition}\label{prop:existence}
There exists a greatest and least clearing solution to $V = \Psi(V)$ for $V \in \bbr^n$ and any finite clearing solution falls within the lattice $[-\bar p , x + \Pi^\T \bar p - \bar p]$.
\end{proposition}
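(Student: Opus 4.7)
The plan is to apply Tarski's fixed point theorem to $\Psi$ restricted to the box $\mathcal{L} := [-\bar p,\; x + \Pi^\T \bar p - \bar p]$, which is a complete lattice under the componentwise order. Three things need to be verified: (a) $\Psi$ maps $\mathcal{L}$ into itself (and any finite fixed point in $\bbr^n$ also lands in $\mathcal{L}$); (b) $\Psi$ is monotone nondecreasing on $\mathcal{L}$; (c) $\mathcal{L}$ is a complete lattice. Item (c) is immediate since $\mathcal{L}$ is a compact order-interval in $\bbr^n$.

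For (a), fix any $V \in \bbr^n$. Regardless of whether $V_i \geq 0$ or $V_i < 0$, the term $(\bar p_j - V_j^-)^+ \in [0,\bar p_j]$, so
\[
\Psi_i(V) \;\leq\; x_i + \sum_{j=1}^n \pi_{ji}\bar p_j - \bar p_i \;=\; x_i + (\Pi^\T \bar p)_i - \bar p_i,
\]
which gives the upper bound. For the lower bound, using $\alpha_x, \alpha_L \geq 0$ and $x_i \geq 0$,
\[
\Psi_i(V) \;\geq\; \alpha_x x_i + \alpha_L \sum_{j=1}^n \pi_{ji}\cdot 0 - \bar p_i \;\geq\; -\bar p_i.
\]
Both bounds hold for every $V \in \bbr^n$, so in particular $\Psi(\mathcal{L}) \subseteq \mathcal{L}$ and any finite solution of $V = \Psi(V)$ must lie in $\mathcal{L}$.

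For (b), I would show that whenever $V \leq V'$ componentwise, $\Psi_i(V) \leq \Psi_i(V')$ for each $i$. The subtle point is the discontinuity of the indicators at $V_i = 0$. I would split into three cases based on the signs of $V_i$ and $V_i'$. In the two cases where $V_i$ and $V_i'$ fall on the same side of $0$, the same formula is used for both and monotonicity reduces to the fact that $V_j \mapsto (\bar p_j - V_j^-)^+$ is nondecreasing. The nontrivial case is $V_i < 0 \leq V_i'$, where
\[
\Psi_i(V') - \Psi_i(V) = (1-\alpha_x)x_i + \sum_{j=1}^n \pi_{ji}\bigl[(\bar p_j - (V_j')^-)^+ - \alpha_L(\bar p_j - V_j^-)^+\bigr].
\]
The first summand is nonnegative, and each bracketed term is nonnegative because $(\bar p_j - (V_j')^-)^+ \geq (\bar p_j - V_j^-)^+ \geq \alpha_L(\bar p_j - V_j^-)^+$ by the monotonicity of $V \mapsto (\bar p - V^-)^+$ and $\alpha_L \in [0,1]$. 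This is the main obstacle, as it is the only place where the recovery rates and the discontinuity at zero interact, but the worst-case algebraic check succeeds cleanly.

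Combining (a)--(c), Tarski's theorem applied to the monotone self-map $\Psi: \mathcal{L} \to \mathcal{L}$ on the complete lattice $\mathcal{L}$ produces a greatest fixed point $V^\uparrow$ and a least fixed point $V^\downarrow$ in $\mathcal{L}$; by (a), these are the greatest and least fixed points among all finite solutions in $\bbr^n$, completing the proof.
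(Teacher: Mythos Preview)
Your proof is correct and follows the same route as the paper: verify that $\Psi$ is monotone and that fixed points (in fact, the whole image of $\Psi$) land in the box $[-\bar p,\,x+\Pi^\T\bar p-\bar p]$, then invoke Tarski. The only differences are cosmetic: the paper merely asserts monotonicity while you spell out the three-case argument, and for the upper bound the paper uses $V=\Psi(V)\leq\Psi(V^+)=x+\Pi^\T\bar p-\bar p$ whereas you bound $\Psi_i(V)$ directly for arbitrary $V$---your version is slightly stronger (it shows $\Psi(\bbr^n)\subseteq\mathcal{L}$, not just that fixed points lie there) but the content is the same.
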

\begin{proof}
First note that $\Psi$ is nondecreasing in wealths $V$.  Now we will prove that $-\bar p \leq V \leq x + \Pi^\T \bar p - \bar p$ for any $V = \Psi(V) \in \bbr^n$.
\begin{itemize}
\item For any bank $i$: $V_i \geq \ind{V_i \geq 0} [-\bar p_i] + \ind{V_i < 0} [-\bar p_i] = -\bar p_i$ by construction.
\item By monotonicity of the clearing procedure we recover $V \leq \Psi(V^+) = x + \Pi^\T \bar p - \bar p$.
\end{itemize}
The proof is completed by an application of Tarski's fixed point theorem on the lattice $[-\bar p , x + \Pi^\T \bar p - \bar p]$.
\end{proof}

In general, however, the clearing wealth $V$ is not unique. In the special case without bankruptcy costs ($\alpha_x = \alpha_L = 1$), this reduces to the network described in~\cite{EN01}. In that setting we can get uniqueness under very mild assumptions.

\begin{corollary}\label{cor:uniqueness}
Consider a setting with no bankruptcy costs ($\alpha_x = \alpha_L = 1$) and all firms have obligations to the societal node $n+1$ (i.e.\ $\sum_{j = 1}^n \pi_{ij} < 1$ with $\bar p_i > 0$ for all firms $i$), then there exists a unique clearing solution $V = \Psi(V)$.
\end{corollary}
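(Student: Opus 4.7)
The plan is to reduce the wealth fixed-point problem to the classical payment fixed-point of~\cite{EN01} and then adapt the original uniqueness argument of Eisenberg–Noe, which exploits the fact that every firm leaks a strictly positive fraction of its payments to the societal node.

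First, I would translate the statement into the payment formulation. Under $\alpha_x=\alpha_L=1$, Proposition~\ref{prop:existence} guarantees that any clearing wealth $V$ lies in $[-\bar p,\,x+\Pi^\T\bar p-\bar p]$; in particular $V_j^-\le \bar p_j$, so the clipped term $(\bar p_j-V_j^-)^+$ in~\eqref{eq:wealth} coincides with $\bar p_j-V_j^-$. Defining the associated payment vector $p_i:=\bar p_i-V_i^-\in[0,\bar p_i]$, the identity $V_i+\bar p_i = p_i+V_i^+$ and the fixed-point equation $V=\Psi(V)$ are jointly equivalent to
\[
p_i \;=\; \bar p_i\wedge\Bigl(x_i+\sum_{j=1}^n\pi_{ji}\,p_j\Bigr)\qquad\forall i,\qquad V = x+\Pi^\T p-\bar p,
\]
which is exactly the Eisenberg–Noe clearing system. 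Hence uniqueness of $V$ follows from uniqueness of $p$.

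Second, I would invoke the greatest/least fixed-point pair: by Proposition~\ref{prop:existence} (applied in payment form, or via monotonicity of $\Phi(p):=\bar p\wedge(x+\Pi^\T p)$ on the complete lattice $[0,\bar p]$), there exist $p^\uparrow\ge p^\downarrow$ clearing payments, and every clearing $p$ lies between them. It therefore suffices to show $p^\uparrow=p^\downarrow$. Assume for contradiction that $D:=\{i:p_i^\uparrow>p_i^\downarrow\}$ is non-empty. For each $i\in D$ we have $p_i^\downarrow<p_i^\uparrow\le\bar p_i$, so the minimum in $\Phi_i(p^\downarrow)$ is attained by the second argument, giving $p_i^\downarrow=x_i+\sum_j\pi_{ji}p_j^\downarrow$, while $p_i^\uparrow\le x_i+\sum_j\pi_{ji}p_j^\uparrow$. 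Subtracting and using $p_j^\uparrow=p_j^\downarrow$ for $j\notin D$ yields
\[
p_i^\uparrow-p_i^\downarrow \;\le\; \sum_{j\in D}\pi_{ji}\bigl(p_j^\uparrow-p_j^\downarrow\bigr)\qquad(i\in D).
\]
Summing over $i\in D$ and reorganising gives
\[
\sum_{j\in D}\bigl(p_j^\uparrow-p_j^\downarrow\bigr)\Bigl(1-\sum_{i\in D}\pi_{ji}\Bigr) \;\le\; 0.
\]
Here Assumption~\ref{ass:network} enters crucially: the hypothesis $\sum_{i=1}^n\pi_{ji}<1$ for every firm $j$ forces $1-\sum_{i\in D}\pi_{ji}>0$, and every $p_j^\uparrow-p_j^\downarrow>0$ for $j\in D$, so the left-hand side is strictly positive, a contradiction. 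Hence $D=\emptyset$, $p^\uparrow=p^\downarrow$, and then $V=x+\Pi^\T p-\bar p$ is unique as well.

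The main obstacle is the second step, where one must both identify $D$ and exploit the leakage-to-society condition in exactly the right form. The wrinkle is that the sum $\sum_{i\in D}\pi_{ji}$ depends on $j$, so a single global Lipschitz-type estimate is not available; the argument must be assembled term-by-term via the row-sum inequality above. The reduction to payments in the first step is routine but needs the lattice bound from Proposition~\ref{prop:existence} to discharge the $(\cdot)^+$ in~\eqref{eq:wealth}; without it the equivalence between the wealth and payment fixed points would not be immediate.
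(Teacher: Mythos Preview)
Your argument is correct. The reduction from wealths to payments is handled cleanly (the lattice bound from Proposition~\ref{prop:existence} is indeed what lets you drop the $(\cdot)^+$), and the contradiction step is sound: for $j\in D$ you need $\sum_{i\in D}\pi_{ji}<1$, which follows from the hypothesis $\sum_{i=1}^n\pi_{ji}<1$ applied with the first index equal to $j$. One small remark on your closing commentary: the dependence of $\sum_{i\in D}\pi_{ji}$ on $j$ is not really an obstacle, since the assumption gives a strict row-sum bound for \emph{every} $j$, so a uniform inequality is available; the term-by-term organisation is convenient but not forced.

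The paper takes a different, much shorter route: it simply observes that the assumption makes the network \emph{regular} in the sense of \cite[Definition~5]{EN01} and invokes \cite[Theorem~2]{EN01} directly, without reproducing any of the mechanics. Your approach is self-contained and essentially re-derives (a streamlined version of) the Eisenberg--Noe uniqueness argument via an aggregate leakage inequality; the paper's approach is a one-line citation. What you gain is transparency about exactly where the leakage-to-society condition enters; what the paper gains is brevity and a direct link to the established literature.
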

\begin{proof}
The external node $n+1$ implies the system is a regular network \cite[Definition 5]{EN01}. Thus by Theorem 2 of~\cite{EN01} we recover the uniqueness of the clearing solution.
\end{proof}

\begin{proposition}\label{prop:greatest}
Let $V^\uparrow = \Psi(V^\uparrow)$ denote the greatest clearing solution from Proposition~\ref{prop:existence}.  Then $V^\uparrow = \Psi^*(V^\uparrow)$ and is the greatest real-valued fixed point of $\Psi^*$ (defined in~\eqref{eq:wealth2}).
\end{proposition}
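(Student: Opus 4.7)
The proposition has two assertions: $V^\uparrow$ is itself a fixed point of $\Psi^*$, and every real-valued fixed point of $\Psi^*$ is dominated by $V^\uparrow$. The first is nearly automatic: by Proposition~\ref{prop:existence}, $V^\uparrow$ lies in the lattice $[-\bar p, x + \Pi^\T\bar p - \bar p]$, so $(V_j^\uparrow)^- \leq \bar p_j$ for every $j$, which makes the positive part in \eqref{eq:wealth} redundant at $V = V^\uparrow$; a direct comparison of \eqref{eq:wealth} with \eqref{eq:wealth2} then gives $\Psi^*(V^\uparrow) = \Psi(V^\uparrow) = V^\uparrow$.

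The substantive task is maximality. The plan is: start from an arbitrary real-valued fixed point $V = \Psi^*(V)$, clip it from below by setting $V' := V \vee (-\bar p)$, show that $V'$ sits inside the lattice $[-\bar p, x + \Pi^\T\bar p - \bar p]$ and is a subsolution of $\Psi$, and then invoke the Tarski characterization already used in Proposition~\ref{prop:existence} to conclude $V' \leq V^\uparrow$. The upper bound $V \leq x + \Pi^\T\bar p - \bar p$ follows because $V_j^- \geq 0$ forces $\bar p_j - V_j^- \leq \bar p_j$, and a short case split on the sign of the interbank sum $S_i := \sum_j \pi_{ji}(\bar p_j - V_j^-)$ --- needed because the $\alpha_L$ coefficient in $\Psi^*$ scales $S_i$ in a sign-sensitive manner when $S_i < 0$ --- yields $\Psi^*_i(V) \leq x_i + (\Pi^\T\bar p)_i - \bar p_i$. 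Together with $-\bar p \leq x + \Pi^\T\bar p - \bar p$ (which holds since $x + \Pi^\T\bar p \geq 0$), this places $V'$ inside the required lattice.

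The heart of the argument is the subsolution inequality $V' \leq \Psi(V')$. A quick computation gives $(V'_j)^- = \min(V_j^-, \bar p_j)$, equivalently $\bar p_j - (V'_j)^- = (\bar p_j - V_j^-)^+$. Coordinates with $V_i \geq -\bar p_i$ are easy: here $V'_i = V_i$, the solvency/default indicators for $V'_i$ and $V_i$ agree, and replacing $(\bar p_j - V_j^-)$ by $(\bar p_j - V_j^-)^+$ can only enlarge each summand, so $\Psi(V')_i \geq \Psi^*_i(V) = V_i = V'_i$. The pathological coordinates $V_i < -\bar p_i$ are the hard case --- they are the only place $\Psi$ and $\Psi^*$ genuinely differ --- and this is where the clipping earns its keep: $V'_i = -\bar p_i < 0$, so $\Psi(V')_i = \alpha_x x_i + \alpha_L \sum_j \pi_{ji}(\bar p_j - V_j^-)^+ - \bar p_i \geq -\bar p_i = V'_i$ simply because every contribution to $\Psi(V')_i + \bar p_i$ is nonnegative. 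With $V'$ thus a subsolution inside the lattice, Tarski's theorem identifies $V^\uparrow$ as the supremum of all such subsolutions, yielding $V \leq V' \leq V^\uparrow$ and finishing the proof.
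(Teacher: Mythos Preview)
Your proof is correct and takes a genuinely different route from the paper's. The paper first observes (as you do) that $V^\uparrow \geq -\bar p$ forces $\Psi(V^\uparrow) = \Psi^*(V^\uparrow)$, so $V^\uparrow$ is a $\Psi^*$-fixed point. For maximality it then appeals to Tarski's theorem for $\Psi^*$ on the half-lattice $(-\infty,\, x+\Pi^\T\bar p-\bar p]$ to produce a greatest real-valued fixed point $V^*$; since $V^* \geq V^\uparrow \geq -\bar p$, the two maps again coincide at $V^*$, so $V^*$ is also a $\Psi$-fixed point and must equal $V^\uparrow$.

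You avoid the second Tarski application entirely: starting from an arbitrary $\Psi^*$-fixed point $V$, you clip to $V' = V \vee (-\bar p)$ and verify directly that $V'$ lies in $[-\bar p,\, x + \Pi^\T\bar p - \bar p]$ and satisfies $V' \leq \Psi(V')$, so the Knaster--Tarski description of $V^\uparrow$ as the supremum of $\Psi$-subsolutions yields $V \leq V' \leq V^\uparrow$. This is more work --- the sign case-split on $S_i$ and the coordinate-by-coordinate subsolution check are the price --- but it buys robustness: you never need $\Psi^*$ itself to be monotone, a property that is not obvious on regions where the interbank sums $\sum_j \pi_{ji}(\bar p_j - V_j^-)$ can go negative. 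The paper's route is shorter precisely because, once $V^* \geq -\bar p$ is known, $\Psi$ and $\Psi^*$ agree and no clipping or subsolution analysis is required.
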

\begin{proof}
By Proposition~\ref{prop:existence}, $V^\uparrow \geq -\bar p$ and thus $V^\uparrow = \Psi^*(V^\uparrow)$ as well.  Similarly to the proof of Proposition~\ref{prop:existence}, we can apply Tarski's fixed point theorem to \eqref{eq:wealth2} on the lattice $[-\infty , x + \Pi^\T \bar p - \bar p]$.  Let $V^* = \Psi^*(V^*)$ be the greatest real-valued fixed point of $\Psi^*$ and assume $V^* \geq V^\uparrow$ with $V^*_i > V^\uparrow_i$ for some bank $i$.  Then it must follow that $V^* \geq V^\uparrow \geq -\bar p$, which implies $V^* = \Psi(V^*)$.  However this is a contradiction to $V^\uparrow$ being the greatest clearing solution to $\Psi$.
\end{proof}

We can compute the maximal clearing solution, as discussed in the previous proposition, through an application of the fictitious default algorithm as described in \cite{RV13}. 

\begin{corollary}\label{cor:FDA}
The following algorithm converges to the maximal clearing solution $V^\uparrow = \Psi(V^\uparrow)$:
\begin{enumerate}
\item Initialize $V^{(0)} = x + \Pi^\T \bar p - \bar p$, $z^{(0)} = 0 \in \bbr^n$, and $k = 0$.
\item\label{alg:iterate} Iterate $k = k+1$ and define $z^{(k)} = \ind{V^{(k-1)} < 0} \in \{0,1\}^n$.
\item If $z^{(k)} = z^{(k-1)}$ then $V^\uparrow = V^{(k-1)}$ and terminate.
\item Define $\Lambda = \diag(z^{(k)})$ to be the diagonal matrix with main diagonal defined by $z^{(k)}$ and
    \begin{align*}
    V^{(k)} &= (I - \Lambda)\left[x + \Pi^\T \bar p + \Pi^\T \Lambda V^{(k)} - \bar p\right]\\
    &\qquad + \Lambda \left[\alpha_x x + \alpha_L \Pi^\T \bar p + \Pi^\T \Lambda V^{(k)} - \bar p\right]\\
    &= \left(I - \left(I - (1-\alpha_L)\Lambda\right)\Pi^\T \Lambda\right)^{-1}\left[\left(I - (1-\alpha_x)\Lambda\right)x + \left(I - (1-\alpha_L)\Lambda\right) \Pi^\T \bar p - \bar p\right].
    \end{align*}
\item Go to step~\eqref{alg:iterate}.
\end{enumerate}
\end{corollary}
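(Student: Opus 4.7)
The plan is to verify three standard properties of the fictitious default algorithm in this setting: (i) each iterate $V^{(k)}$ is well-defined by the closed-form expression in Step 4, (ii) the hypothesized default sets $z^{(k)}$ are monotonically nondecreasing in $\{0,1\}^n$ and hence stabilize in at most $n+1$ iterations at a fixed point of $\Psi$, and (iii) the terminal iterate coincides with the greatest clearing wealth $V^\uparrow$.

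For (i), I would verify that $M(\Lambda) := I - [I-(1-\alpha_L)\Lambda]\Pi^\T \Lambda$ is invertible with entrywise nonnegative inverse. Under Assumption~\ref{ass:network} the rows of $\Pi$ sum to strictly less than one, so the column-sum norm of $[I-(1-\alpha_L)\Lambda]\Pi^\T \Lambda$ is strictly less than one for any diagonal $\Lambda$ with $\{0,1\}$ entries; the Neumann series $M(\Lambda)^{-1} = \sum_{m \geq 0}([I-(1-\alpha_L)\Lambda]\Pi^\T \Lambda)^m$ converges to a nonnegative matrix. Letting $b(\Lambda) := [I-(1-\alpha_x)\Lambda]x + [I-(1-\alpha_L)\Lambda]\Pi^\T \bar p - \bar p$ and $T_{\Lambda}(W) := [I-(1-\alpha_L)\Lambda]\Pi^\T \Lambda W + b(\Lambda)$, the iterate $V^{(k)}$ is the unique fixed point of the monotone affine map $T_{\Lambda^{(k)}}$, it lies in the invariant lattice $[-\bar p, x + \Pi^\T \bar p - \bar p]$, and a direct expansion confirms $T_{\diag(\ind{W<0})}(W) = \Psi(W)$ for every $W$ in this lattice.

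For (ii) and (iii), I would proceed by simultaneous induction establishing $V^{(k)} \leq V^{(k-1)}$ (which yields $z^{(k+1)} \geq z^{(k)}$) and $V^{(k)} \geq V^\uparrow$. The base cases are immediate from $V^{(0)} = x + \Pi^\T \bar p - \bar p$ together with Proposition~\ref{prop:existence}, which places $V^\uparrow$ in $[-\bar p, V^{(0)}]$. For the inductive decrease, the key observation is that $\Psi(V^{(k-1)}) = T_{\Lambda^{(k)}}(V^{(k-1)})$ since $\Lambda^{(k)}$ is precisely the actual sign pattern of $V^{(k-1)}$. A componentwise calculation using the previous-step hypothesis $\Lambda^{(k-1)} \leq \Lambda^{(k)}$ then gives $\Psi(V^{(k-1)}) \leq V^{(k-1)}$: for banks whose default status is unchanged the correction reduces to $\sum_{j \in z^{(k)}\setminus z^{(k-1)}} \pi_{ji} V^{(k-1)}_j \leq 0$, while for banks newly entering the default set the recovery-rate discounts $-(1-\alpha_x)x_i - (1-\alpha_L)(\Pi^\T \bar p)_i$ dominate any sign-ambiguous interbank correction. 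Monotone Picard iteration of $T_{\Lambda^{(k)}}$ from $V^{(k-1)}$ then converges downward to $V^{(k)} \leq V^{(k-1)}$. For the lower bound, the hypothesis $V^{(k)} \geq V^\uparrow$ gives $\Lambda^{(k+1)} \leq \Lambda^\uparrow := \diag(\ind{V^\uparrow<0})$; a symmetric sign-tracking argument yields $T_{\Lambda^{(k+1)}}(V^\uparrow) \geq V^\uparrow$, so Picard iteration of $T_{\Lambda^{(k+1)}}$ starting above $V^\uparrow$ never dips below $V^\uparrow$, giving $V^{(k+1)} \geq V^\uparrow$. Because $z^{(k)}$ is nondecreasing in $\{0,1\}^n$, stabilization occurs within $n+1$ iterations; on stabilization the hypothesized default set equals the true sign pattern of $V^{(k-1)}$, so $V^{(k-1)} = \Psi(V^{(k-1)})$ is a fixed point sandwiched between $V^\uparrow$ and itself, hence equal to $V^\uparrow$ by maximality.

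The main obstacle will be the sign-tracking comparison of $T_{\Lambda'}(W)$ and $T_{\Lambda}(W)$ for nested $\Lambda \leq \Lambda'$, used in both directions of the induction. The calculation is delicate because enlarging the default set simultaneously shrinks $b(\Lambda)$ (the clean direction) while changing $A(\Lambda) = [I-(1-\alpha_L)\Lambda]\Pi^\T \Lambda$ non-monotonically (rows $i$ with $\Lambda_{ii}=1$ are scaled down by $\alpha_L$ while columns $j$ with $\Lambda_{jj}=1$ are turned on). Resolving this requires the a~priori bound $W \geq -\bar p$, which ensures that realized payments from defaulting banks are nonnegative, together with the nested-default hypothesis, which forces newly defaulting banks to carry negative wealth; grouping the correction terms into payment-discount contributions and recovery-rate contributions then yields the required directional monotonicity.
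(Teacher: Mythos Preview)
Your proposal is correct. The paper's own proof is a two-line appeal to the literature---citing \cite{EN01} for the convergence of the fictitious default iteration and \cite[Theorem~2.6]{feinstein2017sensitivity} for the invertibility of $I-(I-(1-\alpha_L)\Lambda)\Pi^\T\Lambda$---so your proposal is not a different route but a self-contained reconstruction of precisely those arguments. Your Neumann-series computation in part (i) is the input-output invertibility result specialised to Assumption~\ref{ass:network}, and your monotone double induction in parts (ii)--(iii) is the standard fictitious default argument, carried out with the extra bookkeeping required for the recovery factors $\alpha_x,\alpha_L$.

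One small remark: the sign-tracking obstacle you flag at the end is milder than you suggest. With $W\geq -\bar p$ and $\Lambda'=\diag(\ind{W<0})$, every realised payment $\bar p_j+\Lambda'_{jj}W_j$ is nonnegative and every newly activated column satisfies $W_j<0$; a direct case split on $(\Lambda_{ii},\Lambda'_{ii})$ then shows that each summand in $T_{\Lambda'}(W)_i-T_{\Lambda}(W)_i$ is individually nonpositive. No residual ``sign-ambiguous'' terms survive, so no domination argument is needed.
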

\begin{proof}
The convergence of this algorithm to the greatest clearing wealth solution follows from the logic of the fictitious default algorithm in \cite{EN01}.  The nonsingularity of the matrix $I - (I - (1-\alpha_L)\Lambda)\Pi^\T \Lambda$ follows from input-output results as detailed in \cite[Theorem~2.6]{feinstein2017sensitivity}.
\end{proof}

{
Before continuing, we wish to recall a notion of monotonicity for fixed points from~\cite{MR94} which we regularly revisit within these appendices.
\begin{theorem}[Theorem 3 of~\cite{MR94}\label{thm:fixedpt-monotone}]
Let $\bbx$ be a complete lattice, $\bbt$ a partially ordered set, and $f: \bbx \times \bbt \to \bbx$.  Suppose $f$ is monotone nondecreasing.  Let $x_L(t) = \inf\{x \; | \; f(x,t) \leq x\}$ and $x_H(t) = \sup\{x \; | \; f(x,t) \geq x\}$.  Then:
\begin{enumerate}
\item $x_L(t)$ and $x_H(t)$ are the least and greatest fixed points of $f(\cdot,t)$,
\item $x_L(\cdot)$ and $x_H(\cdot)$ are nondecreasing, and
\item if for all $x \in \bbx$, $f$ is strictly increasing in $t$ then $x_L(\cdot)$ and $x_H(\cdot)$ are strictly increasing.
\end{enumerate}
\end{theorem}
}

For the remainder of this section we use the notation introduced in Definition~\ref{defn:maximal}.
\begin{proposition}\label{prop:monotonic}
The greatest clearing wealth mapping $V$, and thus also the payment and equity mappings $p$ and $E$, is nondecreasing in the endowments $x$.  
\end{proposition}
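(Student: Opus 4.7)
The strategy is to show that the clearing operator $\Psi$ from \eqref{eq:wealth}, viewed as a function of \emph{both} the wealth vector $V$ and the endowment vector $x$, is jointly nondecreasing, and then to use the Knaster--Tarski characterization of the greatest fixed point as a supremum of post-fixed points to transfer this monotonicity to $V(\cdot)$. Write $\Psi(V;x)$ to make the dependence on the endowment explicit.

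First, I would verify two monotonicity statements. For each bank $i$ the coefficient of $x_i$ in $\Psi_i(V;x)$ is either $1$ or $\alpha_x\in[0,1]$, both nonnegative, so $\Psi(V;x)$ is nondecreasing in $x$. Monotonicity in $V$ was already observed in the proof of Proposition~\ref{prop:existence}; it follows because $-V_j^-$ is nondecreasing in $V_j$ (hence so is $(\bar p_j-V_j^-)^+$), and because at $V_i=0$ the indicator switches from the $(\alpha_x,\alpha_L)$-branch to the $(1,1)$-branch, which is an upward jump since $\alpha_x,\alpha_L\leq 1$ and the bracketed quantities are nonnegative.

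Next, I would invoke Knaster--Tarski on the complete lattice $[-\bar p,\, x+\Pi^\T\bar p-\bar p]$ (enlarged to a common lattice containing both endowments of interest). By Proposition~\ref{prop:existence}, for each $x$ the greatest clearing wealth satisfies
\begin{equation*}
V(x)=\bigvee\bigl\{W\in[-\bar p,\,x+\Pi^\T\bar p-\bar p]\ :\ W\leq \Psi(W;x)\bigr\}.
\end{equation*}
Now fix $x\leq y$ in $\bbr^n_+$. If $W\leq \Psi(W;x)$, then monotonicity of $\Psi$ in its second argument yields $W\leq\Psi(W;x)\leq\Psi(W;y)$, so $W$ is also a post-fixed point of $\Psi(\cdot;y)$ (after, if needed, enlarging the ambient lattice to $[-\bar p,\, y+\Pi^\T\bar p-\bar p]$, which does not shrink the post-fixed set). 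Taking suprema over both post-fixed sets gives $V(x)\leq V(y)$.

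Finally, to conclude the claim for $p$ and $E$, note that $V\mapsto V^+$ and $V\mapsto -V^-$ are each componentwise nondecreasing, so $E(x)=V(x)^+$ and $p(x)=\bar p-V(x)^-$ inherit monotonicity from $V$. The only point requiring care is the jump behavior of $\Psi$ in $V_i$ across $V_i=0$; I would verify this branch comparison explicitly to make sure monotonicity in $V$ is preserved, since the rest of the argument is a routine application of Tarski.
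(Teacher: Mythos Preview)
Your argument is correct and self-contained. The paper, by contrast, gives a one-line proof: it invokes Theorem~3 of \cite{MR94} (a general monotone comparative statics result for extremal fixed points of a parametrized increasing operator) and then observes, as you do, that $p$ and $E$ are nondecreasing transformations of $V$. What you have written is essentially an in-line proof of the relevant special case of that theorem: you verify that $\Psi(\cdot\,;\cdot)$ is jointly nondecreasing and then use the Tarski representation of the greatest fixed point as the supremum of post-fixed points to conclude that the greatest fixed point increases with the parameter. This buys you a self-contained argument at the cost of a few more lines; the paper's citation buys brevity but ships the work off to an external reference.

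One small imprecision worth tightening: when you justify the upward jump at $V_i=0$ by saying ``the bracketed quantities are nonnegative,'' note that the full bracket $x_i+\sum_j\pi_{ji}(\bar p_j-V_j^-)^+-\bar p_i$ in \eqref{eq:wealth} can be negative. What you actually need (and what holds) is that the \emph{difference} between the $V_i\geq 0$ branch and the $V_i<0$ branch equals $(1-\alpha_x)x_i+(1-\alpha_L)\sum_j\pi_{ji}(\bar p_j-V_j^-)^+\geq 0$, since each of these two summands is nonnegative. With that phrasing the monotonicity-in-$V$ step goes through cleanly.
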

\begin{proof}
The monotonicity of the clearing wealths in the endowments follow from Theorem~{\ref{thm:fixedpt-monotone}}.  
The results for the payments and equity follow directly from the definition of those mappings from the clearing wealths.
\end{proof}

\begin{proposition}\label{prop:concave-submodular}
Consider the setting of~\cite{EN01}, i.e.\ $\alpha_x = \alpha_L = 1$.  The greatest clearing wealth mapping $V$ and the payment mapping $p$ are concave and submodular\footnote{{$\phi: \bbr^n \to \bbr$ is submodular if $\phi(x \vee y) + \phi(x \wedge y) \leq \phi(x) + \phi(y)$ for any $x,y$.  $\phi$ is supermodular if $-\phi$ is submodular.}} in the endowments $x$.
\end{proposition}
\begin{proof}
We first note that, under the setting of~\cite{EN01}, we can consider this system as a fixed point in the payments $p(x) = \bar p \wedge (x + \Pi^\T p(x))$ with $V(x) = x + \Pi^\T p(x) - \bar p$.  Thus if $p: \bbr^n_+ \to [0,\bar p]$ is concave (submodular) so is $V$.
\begin{enumerate}
\item Concavity of the clearing payment vector $p$ is given by Lemma 5 of~\cite{EN01}.
\item To prove submodularity of the clearing payment vector $p$, consider that the payment function is the pointwise limit of the mappings $p^k: \bbr^n_+ \to [0,\bar p]$ defined iteratively as: 
\[p^0(x) := \bar p \quad \text{and} \quad p^{k+1}(x) := \bar p \wedge (x + \Pi^\T p^k(x)) \; \forall k \in \bbn, \; \forall x \in \bbr^n_+.\]
As $p(x) = \lim_{k \to \infty} p^k(x)$ by construction (where convergence follows from the monotonicity and boundedness of the arguments $0 \leq p^{k+1}(x) \leq p^k(x)$), if $p^k$ is submodular for all $k$ then the same must be true for the clearing payments $p$.  Trivially $p^0$ is submodular.  Now by induction assume that $p^{k-1}$ is submodular.  Take $x,y \in \bbr^n_+$ and $i \in \{1,2,...,n\}$; there are three cases that must be considered:
\begin{enumerate}
\item If $p_i^k(x) = p_i^k(y) = \bar p_i$ then $p_i^k(x) + p_i^k(y) \geq p_i^k(x \wedge y) + p_i^k(x \vee y)$ by construction.
\item If $p_i^k(x) < p_i^k(y) = \bar p_i$ then $p_i^k(x) \geq p_i^k(x \wedge y)$ and $p_i^k(y) = p_i^k(x \vee y)$ by monotonicity (Proposition \ref{prop:monotonic}); thus $p_i^k(x) + p_i^k(y) \geq p_i^k(x \wedge y) + p_i^k(x \vee y)$. 
\item If $p_i^k(x) < \bar p_i$ and $p_i^k(y) < \bar p_i$ then $p_i^k(x) = x_i + \sum_{j = 1}^n \pi_{ji} p_j^{k-1}(x)$ and $p_i^k(y) = y_i + \sum_{j = 1}^n \pi_{ji} p_j^{k-1}(y)$.  Therefore we find
    \begin{align*}
    p_i^k(x) + p_i^k(y) &= \left[x_i + \sum_{j = 1}^n \pi_{ji} p_j^{k-1}(x)\right] + \left[y_i + \sum_{j = 1}^n \pi_{ji} p_j^{k-1}(y)\right]\\
    &= x_i + y_i + \sum_{j = 1}^n \pi_{ji} \left[p_j^{k-1}(x) + p_j^{k-1}(y)\right]\\
    &\geq x_i \wedge y_i + x_i \vee y_i + \sum_{j = 1}^n \pi_{ji} \left[p_j^{k-1}(x \wedge y) + p_j^{k-1}(x \vee y)\right]\\
    &= \left[(x \wedge y)_i + \sum_{j = 1}^n \pi_{ji} p_j^{k-1}(x \wedge y)\right] + \left[(x \vee y)_i + \sum_{j = 1}^n \pi_{ji} p_j^{k-1}(x \vee y)\right]\\
    &\geq p_i^k(x \wedge y) + p_i^k(x \vee y).
    \end{align*}
\end{enumerate}
\end{enumerate}
\end{proof}

\begin{proposition}\label{prop:alpha-bound}
For any $x \in \bbr^n_+$ and any bank $i$:
\begin{align*}
V_i(x;\Pi,\bar p,\alpha_x,\alpha_L) &\in [V_i(\alpha_x x;\alpha_L \Pi,\bar p,1,1) \, , \, V_i(x;\Pi,\bar p,1,1)] \quad \text{and} \quad\\
p_i(x;\Pi,\bar p,\alpha_x,\alpha_L) &\in [p_i(\alpha_x x;\alpha_L \Pi,\bar p,1,1) \, , \, p_i(x;\Pi,\bar p,1,1)].
\end{align*}
{The bound on wealth holds, also, for the societal node where the relative liabilities owed to society are fixed by $\Pi$, i.e.\ such that the societal wealth is explicitly defined by $V_{n+1}(Y;\tilde\Pi,\bar p,\alpha_x,\alpha_L) := \sum_{i = 1}^n \pi_{i,n+1} p_i(Y;\tilde\Pi,\bar p,\alpha_x,\alpha_L)$.}
\end{proposition}
\begin{proof}
{Note that,} if the result holds for the clearing wealths, then it must also hold for the clearing payments as all bounds are given with respect to the same total obligations $\bar p$.
\begin{enumerate}
\item Consider the proposed upper bound.  Consider $\Psi^*: [-\bar p , x + \Pi^\T\bar p - \bar p] \times [0,1]^2 \to \bbr^n$ with explicit consideration for the recovery rates $\alpha_x,\alpha_L$.  By construction, $\Psi^*$ is (jointly) nondecreasing.  Therefore, by Theorem~{\ref{thm:fixedpt-monotone}}, it must follow that the (maximal) clearing wealths are nondecreasing as a function of the recovery rates as well and the upper bound is proven.
\item Consider the proposed lower bound with fixed $\alpha_x,\alpha_L \in [0,1]$.  Consider $\Psi^\dagger: [-\bar p , x + \Pi^\T\bar p - \bar p] \times [\alpha_x,1] \times [\alpha_L,1] \to \bbr^n$ be the modification of the clearing equation so that
\[\Psi^\dagger(V,(a_x,a_L)) := \left(\ind{V_i \geq 0}a_x + \ind{V_i < 0}\alpha_x\right) x_i + \left(\ind{V_i \geq 0}a_L + \ind{V_i < 0}\alpha_L\right) \sum_{j = 1}^n \pi_{ji} (\bar p_j - V_j^-) - \bar p_i.\]
Notably, by construction of the clearing wealths, $V(x;\Pi,\bar p,\alpha_x,\alpha_L)$ is the maximal fixed point of $\Psi^\dagger(\cdot,(1,1))$ and $V(\alpha_x x;\alpha_L \Pi,\bar p,1,1)$ is the maximal fixed point of $\Psi^\dagger(\cdot,(\alpha_x,\alpha_L))$.  Additionally, by construction, $\Psi^\dagger$ is (jointly) nondecreasing.  Therefore, by Theorem~{\ref{thm:fixedpt-monotone}}, it must follow that the lower bound holds.
\end{enumerate}
{The bounds for the societal node follow directly from the those for the payments $p_i,~i=1,...,n$.}
\end{proof}

\begin{proposition}\label{prop:msb}
Define the random endowments $X \in (L^1_+)^n$.  
The greatest clearing wealths $V(X)$, and thus also the payments and equities $p(X)$ and $E(X)$, is a measurable vector.
\end{proposition}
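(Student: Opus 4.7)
The plan is to establish Borel measurability of the greatest clearing wealth map $V:\bbr^n_+ \to \bbr^n$, and then conclude measurability of $V(X)$, $p(X)$, and $E(X)$ by composition with the measurable random vector $X$.

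The approach will exploit the fictitious default algorithm of Corollary~\ref{cor:FDA}, which terminates in at most $n+1$ iterations since the default indicator sequence $z^{(k)} \in \{0,1\}^n$ is nondecreasing and can take at most $n+1$ distinct values. I would argue by induction on $k$ that each iterate $V^{(k)}: \bbr^n_+ \to \bbr^n$ is Borel measurable. The base case $V^{(0)}(x) = x + \Pi^\T \bar p - \bar p$ is affine in $x$, hence continuous and Borel measurable. For the inductive step, suppose $V^{(k-1)}$ is Borel measurable. Then $z^{(k)}(x) = \ind{V^{(k-1)}(x) < 0}$ is a Borel measurable $\{0,1\}^n$-valued function, so for each fixed $z \in \{0,1\}^n$ the preimage $A_z := \{x \in \bbr^n_+ : z^{(k)}(x) = z\}$ is a Borel set. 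On $A_z$, the explicit formula from Corollary~\ref{cor:FDA} gives
\[
V^{(k)}(x) = \bigl(I - (I - (1-\alpha_L)\diag(z))\Pi^\T \diag(z)\bigr)^{-1}\bigl[(I - (1-\alpha_x)\diag(z))x + (I - (1-\alpha_L)\diag(z))\Pi^\T \bar p - \bar p\bigr],
\]
which is an affine (in particular continuous) function of $x$. Thus $V^{(k)}$ is piecewise affine on the finite Borel partition $\{A_z\}_{z \in \{0,1\}^n}$ of $\bbr^n_+$, and therefore Borel measurable. After at most $n+1$ iterations the algorithm terminates with $V = V^{(n+1)}$, so $V$ is Borel measurable.

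With Borel measurability of $V$ in hand, $V(X) = V \circ X$ is measurable as the composition of a Borel measurable map with a measurable random vector. The payment and equity mappings follow immediately, since $p(X) = \bar p - V(X)^-$ and $E(X) = V(X)^+$ are continuous (hence Borel measurable) functions of $V(X)$.

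The main subtlety is that the default set at iteration $k$ depends on $x$ through the previous iterate, so the partition of $\bbr^n_+$ on which $V^{(k)}$ is affine is endogenously defined. This is handled cleanly by the inductive argument: because $V^{(k-1)}$ is Borel measurable, so is the preimage of each element of the finite set $\{0,1\}^n$ under $\ind{V^{(k-1)} < 0}$, which yields the required Borel partition. The nonsingularity of the matrix inverse used in the explicit formula is guaranteed by the same input-output argument invoked in Corollary~\ref{cor:FDA} (cited from~\cite{feinstein2017sensitivity}), so the affine pieces are well-defined on all of $\bbr^n_+$.
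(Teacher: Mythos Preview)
Your proof is correct. Both your argument and the paper's establish that $V:\bbr^n_+\to\bbr^n$ is piecewise affine on a finite Borel partition of $\bbr^n_+$, and hence Borel measurable; the composition with $X$ and the continuity of $(\cdot)^+,(\cdot)^-$ then finish the job. The difference lies only in how the partition is produced. The paper invokes the explicit $2^n$-piece decomposition $\{\xcal(z)\}_{z\in\{0,1\}^n}$ constructed in the appendix on general random endowments and writes $V(X)[\omega]=\sum_{z\in\{0,1\}^n}\ind{X(\omega)\in\xcal(z)}(\Delta(z)X(\omega)-\delta(z))$ directly, relying on the fact that each $\xcal(z)$ is defined by finitely many affine inequalities and set complements and is therefore Borel. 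Your route instead manufactures the partition dynamically via the fictitious default algorithm: each iterate $V^{(k)}$ is Borel measurable by induction, and since the default-set sequence is nondecreasing the algorithm stabilises in at most $n+1$ steps, so $V=V^{(n+1)}$. Your approach is slightly more self-contained in that it does not appeal to the separate construction of $\xcal(z)$, whereas the paper's one-line argument is shorter once that partition is already available.
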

\begin{proof}
{First, we wish to recall the piecewise linear formulation of the clearing wealths as provided in Section~\ref{sec:piecewise}; in particular, recall $\Delta,\delta$ as defined in~\eqref{eq:Delta} and~\eqref{eq:delta} respectively.}
For any $\omega \in \Omega$ we note that $V(X)[\omega] = \sum_{z \in \{0,1\}^n} \ind{X(\omega) \in \xcal(z)} \left(\Delta(z)X(\omega) - \delta(z)\right)$ where $\xcal(z)$ 
{provides the set of endowments that leads to the defaults encoded in $z$, i.e., $\xcal(z) \subseteq \bbr^n_+$ is defined as:
\[\xcal(z) := \left\{x \in \bbr^n_+ \; \left| \; \begin{array}{l}e_i^\T\Delta(z)x \geq e_i^\T \delta(z) \; \forall i: z_i = 0,\\ e_i^\T\Delta(z)x < e_i^\T \delta(z) \; \forall i: z_i = 1\end{array}\right.\right\} \cap \bigcap_{\bar z \lneq z} \xcal(\bar z)^c.\]
That is, $\xcal(z)$ is constructed as the intersection of a finite number of closed and open halfspaces as well as an additional condition that $x \not\in \bigcup_{\bar z \lneq z} \xcal(\bar z)$.  This additional condition is the one that guarantees that $x \in \xcal(z)$ does not provide a ``better'' (i.e.\ fewer defaulting banks) clearing wealth vector than the maximal clearing solution $V(x)$.
By the use of Tarski's fixed point theorem in the proof of Proposition~\ref{prop:existence}, we are able to guarantee that this construction of $\xcal(z)$ is, in fact, disjoint.}
{As $V(X)[\omega]$ is the summation over a finite set, measurability of the clearing wealths follows.} 
\end{proof}

\section{Expectations under random endowments}\label{sec:general}
In this section we wish to consider a partition of the endowment space $\bbr^n_+$ into regions so that the defaulting set is constant in the $2^n$ subsets.   This problem was considered in great detail in~\cite{gourieroux2012} for the setting without bankruptcy costs, i.e.\ $\alpha_x = \alpha_L = 1$, and with cross-ownership.  Herein we will present a quick extension that allows for bankruptcy costs, i.e.\ for any $\alpha_x,\alpha_L \in [0,1]$.  Notably, when $\min\{\alpha_x,\alpha_L\} < 1$ the partitions need not be convex sets, while they are convex polyhedrons in a system without bankruptcy costs as given in~\cite{gourieroux2012}.  In the below, {we will consider the piecewise linear formulation of the clearing wealths as provided in Section~\ref{sec:piecewise};} if cross-ownership is desired then we refer to Remark~\ref{rem:crossownership} for the modifications necessary to the mappings $\Delta$ and $\delta$.

To consider the partitions, fix $z \in \{0,1\}^n$ to denote the defaulting banks.  By construction, the resulting wealths given endowments $x \in \bbr^n_+$ are provided by $V(x) = \Delta(z)x - \delta(z)$. 
For an endowment vector to be consistent with the defaulting set $z$, it would need to be such that $V_i(x) \geq 0$ if and only if $z_i = 0$.  That is, the set of endowments that generate the defaulting set $z$ is given by the system of inequalities:
\begin{align*}
e_i^\T\Delta(z)x \geq e_i^\T \delta(z) \quad &\forall i: \; z_i = 0\\
e_i^\T\Delta(z)x < e_i^\T \delta(z) \quad &\forall i: \; z_i = 1.
\end{align*}

However, except in the special case that there are no bankruptcy costs ($\alpha_x = \alpha_L = 1$), these regions need not be disjoint.  If an endowment $x$ has two clearing wealth vectors $V^{1} \neq V^{2}$, then it must be that $z^1 \neq z^2$ where $z^k = \ind{V^k < 0}$.  If $z^1 = z^2$ then, by construction of $\Delta(z^k),\delta(z^k)$, it must follow that $V^1 = V^2$.  
In particular, we are interested in the maximal clearing wealth, thus we can construct the partition from the least to greatest number of defaults by considering the set of endowments that lead to $z$ by $\xcal(z) \subseteq \bbr^n_+$ {as provided in the Proof of Proposition~\ref{prop:msb}.  That is, $\xcal(z)$ is} defined as:
\[\xcal(z) := \left\{x \in \bbr^n_+ \; \left| \; \begin{array}{l}e_i^\T\Delta(z)x \geq e_i^\T \delta(z) \; \forall i: z_i = 0,\\ e_i^\T\Delta(z)x < e_i^\T \delta(z) \; \forall i: z_i = 1\end{array}\right.\right\} \cap \bigcap_{\bar z \lneq z} \xcal(\bar z)^c.\]
In Figure~\ref{fig:DefaultRegions} we provide an image of the partitioning of the endowment space for a small network with 2 banks plus a societal node.  We note that the societal node in this image can never default, this is due to it having no liabilities and thus always a nonnegative wealth.
\begin{figure}[h!]
\centering
\includegraphics[width=.75\textwidth]{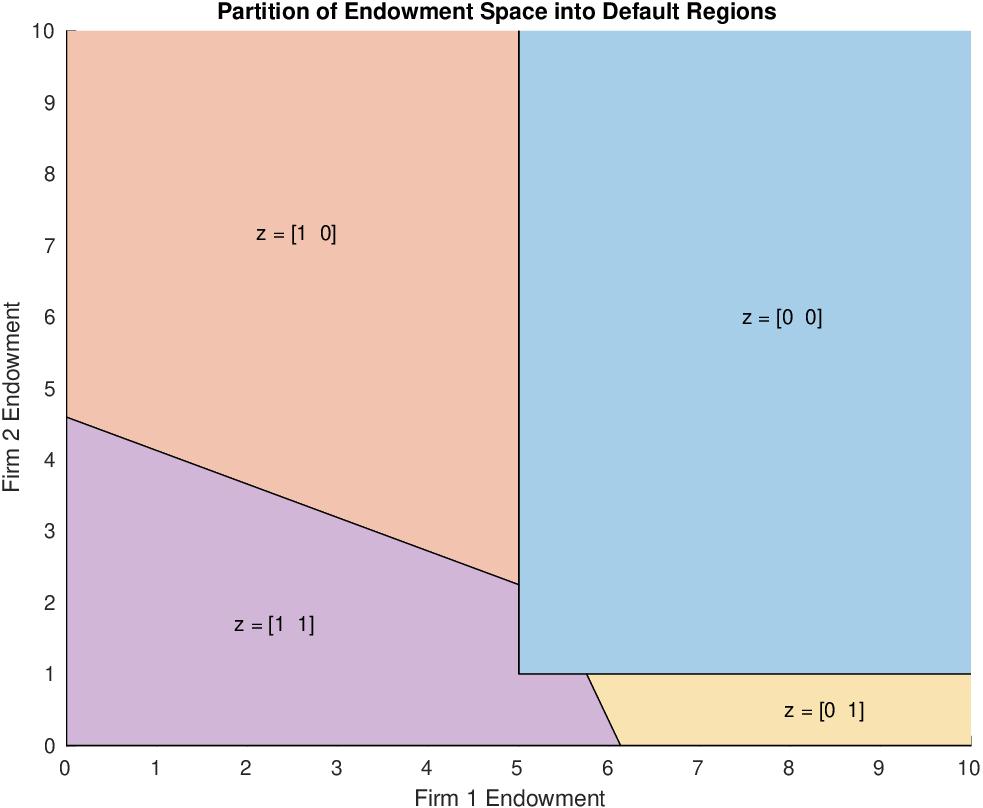}
\caption{Sample partition of the endowment space into default regions for 2 bank plus societal node network with bankruptcy costs.}
\label{fig:DefaultRegions}
\end{figure}

Lastly, we wish to consider the particular case without bankruptcy costs ($\alpha_x = \alpha_L = 1$) as provided by~\cite{gourieroux2012}.  Due to the uniqueness of the clearing wealths (Corollary~\ref{cor:uniqueness}) in this case we obtain only a single consistent default set $z \in \{0,1\}^n$ for every endowment $x \in \bbr^n_+$ and thus do not need to take a secondary intersection as in the case with bankruptcy costs.  Further, in this case, those banks with 0 wealth can be considered equivalently both solvent and defaulting; thus the set of endowments that produce $z \in \{0,1\}^n$ can be considered as the finite intersection of \emph{closed} halfspaces only, i.e.\  the convex polyhedron:
\[\xcal(z) := \left\{x \in \bbr^n_+ \; | \; (I - 2\diag(z))\Delta(z) x \geq (I - 2\diag(z))\delta(z)\right\}.\]
We note that the partition $(\xcal(z))_{z \in \{0,1\}^n}$ is no longer disjoint as boundaries would be shared by the partitions.  However, on this shared boundary the clearing wealths and payments will be equivalent, and this intersection has Lebesgue measure 0, therefore we (and prior authors) have discounted this situation for ease of constructing the sets $\xcal(z)$.

\section{Computing the threshold prices $q^*$}\label{sec:q*}
With {the} comonotonic setting described in Assumption~\ref{ass:comonotonic} we can provide an iterative representation for the lowest prices $q^* \in \bbr^n_+$ such that each firm is solvent {-- as is introduced in Definition~\ref{defn:q*}.  Throughout this section we will consider the notation for order statistics with $q^*_{[k]} \geq q^*_{[k+1]}$ for every bank $k$}. This iterative approach follows as with the fictitious default algorithm of~\cite{EN01,RV13} (see Corollary~\ref{cor:FDA}) insofar as we assume only the first $k$ defaults have occurred to determine the next threshold price $q^*_{[k+1]}$.  That is, to start, assume that that no banks are in default and we find the largest price $q^*_{[1]}$ such that some bank defaults.  Incorporating this default into the system, we find the next highest price $q^*_{[2]}$ at which a default occurs (at most $q^*_{[1]}$ -- in which case this is either a simultaneous or contagious default).  This procedure of adding a single default at a time and finding the next greatest threshold price is repeated sequentially until all bank default levels have been found.  This procedure successfully determines the sequence of defaults and threshold prices $q^*$ due to the monotonicity of the wealths of each bank w.r.t.\ the set of defaulting institutions.
\begin{proposition}\label{prop:q*-gen}
The lowest prices such that the various firms are solvent, defined by $q^*$ in Definition~\ref{defn:q*}, can be defined explicitly by the following iterative relation of decreasing values.  
Initialize $q_{[0]}^* = \infty$ and $z^{(0)} = 0 \in \bbr^n$.  Then for any $k = 1,2,...,n$:
\begin{align*}
[k] &\in \argmax_{i:\; z_i^{(k-1)} = 0} \sup\left\{q \geq 0 \; | \; e_i^\T \Delta(z^{(k-1)})f(q) < \delta_i(z^{(k-1)})\right\}^+\\
q_{[k]}^* &:= \min\left\{q_{[k-1]}^* \; , \; \sup\left\{q \geq 0 \; | \; e_{[k]}^\T \Delta(z^{(k-1)})f(q) < \delta_{[k]}(z^{(k-1)})\right\}^+\right\}\\
z^{(k)} &:= z^{(k-1)} + e_{[k]}
\end{align*}
{with $\Delta,\delta$ defined in~\eqref{eq:Delta} and~\eqref{eq:delta} respectively.}
As a convention, the supremum of the empty set is assumed to be $0$.  If the set definition of $[k]$ has cardinality greater than one, then only a single argument is chosen arbitrarily.  If $f$ is continuous and strictly increasing then $q_{[1]}^* = \left[\max_i f_i^{-1}(\bar p_i - \sum_{j = 1}^n L_{ji})\right]^+$ and $q_{[k]}^*$ can be found via bisection search between $0$ and $q_{[k-1]}^*$.
\end{proposition}
\begin{proof}
This follows directly from the monotonicity of the wealths as given in Proposition~\ref{prop:monotonic} and the construction of $\Delta,\delta$ in~\eqref{eq:Delta} and~\eqref{eq:delta}.  
The level $q_{[k]}^*$ is chosen exactly to be the largest price $q$ so that the $[k]^{th}$ bank would have $0$ wealth (i.e.\ the lowest price so that it is solvent) given that the prior $[1]$ through $[k-1]$ banks have already been deemed insolvent.  The minimum taken with $q_{[k-1]}^*$ is necessary only in the case of contagious defaults, i.e.\ from bankruptcy costs if the jump in payments from bank $[k-1]$ causes bank $[k]$ to also become insolvent at the same time. 
\end{proof}

\section{Bounding systemic risk measures}\label{sec:sysrisk}
Within this section, we wish to extend the main results of this work, i.e.~Lemma~\ref{lemma:bound}, to provide bounds on systemic risk measures.  In fact, we will prove Lemma~\ref{lemma:bound} (within Appendix~\ref{sec:proofs}) by demonstrating that result is a special case of the bounds presented within this section on systemic risk measures.  Herein we will present two notions of systemic risk measures from the literature.  In Section~\ref{sec:insensitive}, we will consider the scalar systemic risk measures as presented in, e.g.,~\cite{chen2013axiomatic,kromer2013systemic} to demonstrate that we can bound the systemic risk via comonotonic endowment settings under certain (weak) conditions.  This is extended in Section~\ref{sec:sensitive} to the case of set-valued risk measures as presented in, e.g.,~\cite{feinstein2014measures,AR16}; in such a setting, the bounding inequalities are, in fact, inclusions.
For simplicity of exposition, within this section, we will assume all random variables are uniformly bounded, i.e., elements of $L^\infty$.

\subsection{Monetary risk measures}\label{sec:riskmsr}

Within this section we will introduce the notion of monetary risk measures.  We refer the interested reader to \cite[Chapter 4]{FS04} for detailed description of these objects; the definitions provided in Definition~\ref{defn:riskmsr} can be found in Chapter 4.1 of~\cite{FS04}.  Briefly, a monetary risk measure is a function which maps the profits and losses of a portfolio into a capital requirement. 
\begin{definition}\label{defn:riskmsr}
A mapping $\rho: L^\infty \to \bbr$ is called a monetary risk measure if it satisfies the following properties.
\begin{itemize}
\item Monotonicity: $\rho(X) \leq \rho(Y)$ for any $X,Y \in L^\infty$ and $X \geq Y$ a.s.
\item Cash invariance: $\rho(X+m) = \rho(X) - m$ for any $X \in L^\infty$ and $m \in \bbr$.
\end{itemize}
A monetary risk measure $\rho: L^\infty \to \bbr$ is called a convex risk measure if it satisfies the following properties.
\begin{itemize}
\item Convexity: $\rho(\lambda X + (1-\lambda)Y) \leq \lambda \rho(X) + (1-\lambda) \rho(Y)$ for any $X,Y \in L^\infty$ and $\lambda \in [0,1]$.
\item Continuity from above: $\rho(X_n) \nearrow \rho(X)$ if $X_n \searrow X$ a.s.
\end{itemize}
A convex risk measure $\rho: L^\infty \to \bbr$ is called a coherent risk measure if it satisfies the following property.
\begin{itemize}
\item Positive homogeneity: $\rho(\lambda X) = \lambda \rho(X)$ for any $X \in L^\infty$ and $\lambda \in \bbr_+$.
\end{itemize}
\end{definition}
Though continuity from above is, often, not assumed as part of the definition of a convex risk measure, that property is necessary for the robust representation considered in Theorem~\ref{thm:dual} below.  It is that representation which we will make extensive use of in Sections~\ref{sec:insensitive} and~\ref{sec:sensitive}.

\begin{example}\label{ex:ES}
One of the most popular monetary risk measures is the \emph{expected shortfall} (also called the ``average value-at-risk'' or ``conditional value-at-risk'').  This risk measure provides the capital necessary so that the tail expectation is nonnegative.  Formally, as defined in Chapter 4.4 of \cite{FS04}, the expected shortfall $ES_{\lambda}(X)$ at level $\lambda \in (0,1]$ and position $X \in L^\infty$ is given by
\[ES_{\lambda}(X) = \frac{1}{\lambda}\int_0^\lambda VaR_{\gamma}(X)d\gamma\]
where $VaR_{\gamma}(X) := \inf\{m \in \bbr \; | \; \P(X+m < 0) \leq \gamma\}$ is the \emph{value-at-risk} for $X$ at level $\gamma \in (0,1]$.  The expected shortfall is a coherent risk measure.  

In fact, as discussed in e.g.~\cite{BT07}, the expected shortfall is an example of an \emph{optimized certainty equivalent} 
\begin{equation}\label{eq:oce}
\rho(X) := \inf_{y \in \bbr} \left(y - \E[u(X+y)]\right)
\end{equation}
for any position $X \in L^\infty$ with concave utility function $u(t) := \frac{1}{\lambda} \min\{0,t\}$.  
In fact, the optimized certainty equivalent~\eqref{eq:oce} for any proper closed concave and nondecreasing utility function $u: \bbr \to \bbr \cup \{-\infty\}$ provides a convex risk measure.
\end{example}

Any monetary risk measure is bijective with the class of \emph{acceptance sets}, i.e., nonempty sets $\acal \subseteq L^\infty$ such that (i) if $X \in \acal$ and $X \leq Y$ a.s.\ then $Y \in \acal$ and (ii) $\inf\{m \in \bbr \; | \; m \in \acal\} > -\infty$.  The representation between risk measures and acceptance sets presented in the following proposition is often called a ``primal representation.''
\begin{proposition}[Propositions 4.6 and 4.7 of \cite{FS04}]\label{prop:primal}
Let $\rho: L^\infty \to \bbr$ be a monetary risk measure, then
\begin{equation}\label{eq:primal-acc}
\acal_{\rho} := \{X \in L^\infty \; | \; \rho(X) \leq 0\}
\end{equation}
is an acceptance set.  If $\rho$ is convex (coherent) then $\acal_{\rho}$ is a weak* closed convex set (weak* closed convex cone).  Moreover, $\rho$ can be recovered from its acceptance set:
\begin{equation*}
\rho(X) = \inf\{m \in \bbr \; | \; X + m \in \acal_{\rho}\} \quad \forall X \in L^\infty.
\end{equation*}

Conversely, let $\acal \subseteq L^\infty$ be an acceptance set, then
\begin{equation*}
\rho_{\acal}(X) := \inf\{m \in \bbr \; | \; X + m \in \acal\} \quad \forall X \in L^\infty
\end{equation*}
defines a monetary risk measure.  If $\acal$ is a weak* closed convex set (weak* closed convex cone) then $\rho_{\acal}$ is a convex (coherent) risk measure.  Moreover, $\acal$ can be recovered from its risk measure:
\begin{equation*}
\acal = \{X \in L^\infty \; | \; \rho_{\acal}(X) \leq 0\}
\end{equation*}
\end{proposition}

\begin{example}\label{ex:ubsr}
Besides the optimized certainty equivalent, there exists another common choice of convex risk measures based on a proper closed concave and nondecreasing utility function $u: \bbr \to \bbr \cup\{-\infty\}$.  The \emph{utility based shortfall risk measure}, presented in Chapter 4.9 of \cite{FS04}, is provided by the acceptance set 
\begin{equation}\label{eq:ubsr}
\acal = \{X \in L^\infty \; | \; \E[u(X)] \geq c\}
\end{equation}
for some threshold utility $c \in \bbr$.  That is, the utility based shortfall risk measure provides the necessary capital so that the expected utility exceeds a given threshold utility.
\end{example}

Finally, we wish to provide a dual or robust representation for convex risk measures.  Such a representation presents a risk measure as a supremum over ``plausible'' probability measures (with penalization for, e.g., divergence from $\P$). 
\begin{theorem}[Theorem 4.31 and Corollary 4.34 of~\cite{FS04}]\label{thm:dual}
Let $\rho: L^\infty \to \bbr$ be a convex risk measure, then
\begin{align*}
\rho(X) &= \sup_{\Q \in \mcal} \left(-\alpha(\Q) - \E^\Q[X]\right) \quad \forall X \in L^\infty\\
\alpha(\Q) &:= \sup_{Y \in \acal} -\E^\Q[Y] \quad \forall \Q \in \mcal
\end{align*}
where $\mcal := \{\Q \ll \P\}$ is the set of all probability measure absolutely continuous w.r.t.\ $\P$.

If $\rho$ is a coherent risk measure then
\[\rho(X) = \sup_{\Q \in \mcal^\Q} -\E^\Q[X] \quad \forall X \in L^\infty\]
where $\mcal^\Q := \{\Q \in \mcal \; | \; \alpha(\Q) = 0\}$.
\end{theorem}

\subsection{Stochastic orders}\label{sec:so}
Before continuing on to the main results of this section, we wish to review some stochastic orders which will be widely used in the following proofs.  Specifically, we will focus on the supermodular and directionally convex orderings.  Notably for the results within this work, a random vector is always dominated by its comonotonic copula under either of these two orders (see Theorem~\ref{thm:sm} and Corollary~\ref{cor:sm-dcx}). 

\subsubsection{The supermodular order}\label{sec:so-sm}
First, we will consider the supermodular order.  This stochastic ordering is provided in much greater detail in Chapter 9.A.4 of~\cite{SS07stochastic}.
\begin{definition}\label{defn:sm}
Let $X,Z$ be two $n$-dimensional random vectors such that 
    \[\E[\phi(X)] \leq \E[\phi(Z)]\]
    for all supermodular functions $\phi: \bbr^n \to \bbr$ (provided the expectations exist).  Then $X$ is said to be smaller than $Z$ in the supermodular order (denoted by $X \leq_{sm} Z$).
\end{definition}
The supermodular order is important to us because of its relation to comonotonicity.  In particular, the comonotonic copula of a random vector always dominates it in the supermodular order.
\begin{theorem}[Theorem 9.A.21 of~\cite{SS07stochastic}]\label{thm:sm}
Let $X = (X_1,...,X_n)$ be a random vector and let $F_{X_i}$ be the marginal distribution of $X_i$, $i = 1,...,n$.  Then, for a uniform random variable $U$ with support $[0,1]$ we have that
\[X \leq_{sm} (F_{X_1}^{-1}(U),...,F_{X_n}^{-1}(U)).\]
\end{theorem}
\subsubsection{The directionally convex order}\label{sec:so-dcx}
We now wish to consider the directionally convex order; this is a weaker order than the supermodular order.  This stochastic ordering is provided in much greater detail in Chapter 7.A.8 of~\cite{SS07stochastic}.
\begin{definition}\label{defn:dcx}
Let $X,Z$ be two $n$-dimensional random vectors such that
    \[\E[\phi(X)] \leq \E[\phi(Z)]\]
    for all directionally convex functions $\phi: \bbr^n \to \bbr$ (provided the expectations exist).\footnote{{$\phi: \bbr^n \to \bbr$ is directionally convex if $\phi(x_2) + \phi(x_3) \leq \phi(x_1) + \phi(x_4)$ for any $x_1 \leq x_2 \leq x_4$ and $x_3 := x_1 + x_4 - x_2$.  $\phi$ is directionally concave if $-\phi$ is directionally convex.}}  Then $X$ is said to be smaller than $Z$ in the directionally convex order (denoted by $X \leq_{dcx} Z$).
\end{definition}
As will be made explicit in the next proposition, if $X$ is smaller than $Z$ in the supermodular order then it must be in the directionally convex order as well.  This is because every directionally convex function is a supermodular function.  Using this result, we are able to show that any random vector is smaller than its comonotonic copula in the directionally convex order in Corollary~\ref{cor:sm-dcx}.
\begin{proposition}[Proposition 7.A.27 of~\cite{SS07stochastic}]\label{prop:dcx}
The following statements are equivalent:
\begin{enumerate}
\item The function $\phi: \bbr^n \to \bbr$ is directionally convex.
\item The function $\phi: \bbr^n \to \bbr$ is supermodular and coordinate-wise convex.
\item For any $x_1,x_2,y \in \bbr^n$ such that $x_1 \leq x_2$ and $y \geq 0$, one has \[\phi(x_1 + y) - \phi(x_1) \leq \phi(x_2 + y) - \phi(x_2).\]
\end{enumerate}
\end{proposition} 
Though implied by the supermodular order, one advantage of the directional convex ordering is that the composition of directionally convex functions is guaranteed to result in a directionally convex function.  We take advantage of this composition structure extensively in Sections~\ref{sec:insensitive} and~\ref{sec:sensitive} due to the structure of systemic risk measures as the composition of functions.
\begin{proposition}[Proposition 7.A.28(a) of~\cite{SS07stochastic}]\label{prop:dcx-comp}
If $\psi: \bbr^m \to \bbr^k$ is nondecreasing and directionally convex (concave) and $\phi: \bbr^n \to \bbr^m$ is nondecreasing and directionally convex (concave), then the composition $\psi \circ \phi$ is nondecreasing and directionally convex (concave).
\end{proposition}
\begin{corollary}\label{cor:sm-dcx}
Let $X,Z$ be two $n$-dimensional random vectors.  If $X \leq_{sm} Z$ then $X \leq_{dcx} Z$.
\end{corollary}
\begin{proof}
As every directionally convex function $\phi$ is supermodular (by Proposition~\ref{prop:dcx}), the result follows trivially.
\end{proof}

\subsection{Scalar systemic risk measures}\label{sec:insensitive}
Consider now the scalar systemic risk measures as defined in, e.g.,~\cite{chen2013axiomatic,kromer2013systemic}.  Such systemic risk measures are defined to be the composition of a monetary risk measure $\rho: L^\infty \to \bbr$ and a nondecreasing aggregation function $\bar\Lambda: \bbr^d \to \bbr$
\begin{equation*}
\rho^{sys}(Z) := \rho(\bar\Lambda(Z))~\quad \forall Z \in (L^\infty)^d
\end{equation*}
for a system of $d \geq 1$ components.
These scalar systemic risk measures are called ``insensitive'' in~\cite{AR16} since the aggregation is insensitive to the capital injection.

Within this discussion, we will focus on those aggregation functions $\bar\Lambda := \Lambda \circ V$ that are constructed as the composition of some nondecreasing aggregation function $\Lambda: \bbr^{n+1} \to \bbr$ and the system wealths $V$.  That is, specifically within this section we consider those systemic risk measures satisfying
\begin{equation}\label{eq:insensitive}
\rho^{sys}(X;\Pi,\bar p,\alpha_x,\alpha_L) := \rho(\Lambda \circ V(X;\Pi,\bar p,\alpha_x,\alpha_L))
\end{equation}
for any financial network $(X;\Pi,\bar p,\alpha_x,\alpha_L)$.
With this construction, we can immediately provide Lemma~\ref{lemma:insensitive} which provides comonotonic bounds on these systemic risk measures.

\begin{lemma}\label{lemma:insensitive}
Let $\rho: L^\infty \to \bbr$ be a convex risk measure.
Let $\Lambda: \bbr^{n+1} \to \bbr$ be a nondecreasing, directionally concave aggregation function.  
Define $\rho^{sys}$ by~\eqref{eq:insensitive} to be the associated scalar systemic risk measure.
Let $X \in (L^\infty_+)^n$ and $Z$ be its comonotonic copula, i.e., $Z = (F_{X_1}^{-1}(U),...,F_{X_n}^{-1}(U))$ for uniform random variable $U$ on the support $[0,1]$ and marginal distributions $F_{X_1},...,F_{X_n}$ for $X_1,...,X_n$ respectively, then \[\rho^{sys}(X;\Pi,\bar p,\alpha_x,\alpha_L) \leq \rho^{sys}(\alpha_x Z; \alpha_L \Pi,\bar p,1,1).\] 

Further, let $\bbg = \left\{\gcal \subseteq \fcal \; | \; \gcal \text{ is a $\sigma$-algebra}, \; \E[X \; | \; \gcal] \text{ is comonotonic}\right\}$ be the set of sub-$\sigma$-algebras such that $\E[X \; | \; \gcal]$ is a comonotonic projection of $X$. 
If, additionally, $\rho$ is either an optimized certainty equivalent~\eqref{eq:oce} or utility-based shortfall risk measure~\eqref{eq:ubsr} and $\Lambda$ is concave, then \[\rho^{sys}(X;\Pi,\bar p,\alpha_x,\alpha_L) \geq \sup_{\gcal \in \bbg} \rho^{sys}(\E[X | \gcal];\Pi,\bar p,1,1) \geq \rho^{sys}(\E[X];\Pi,\bar p,1,1).\] 
\end{lemma}
\begin{proof}
First, we will consider the comonotonic upper bound for these scalar systemic risk measures.  Consider the case of full recovery as in~\cite{EN01}, i.e., $\alpha_x = \alpha_L = 1$.  
By Proposition~\ref{prop:dcx-comp}, and noting that $V$ is directionally concave by application of Proposition~\ref{prop:concave-submodular} and Proposition~\ref{prop:dcx}, $\Lambda \circ V$ is nondecreasing and directionally concave.
Note that $X \leq_{dcx} Z$ with respect to the directionally convex order by application of Theorem~\ref{thm:sm} and Corollary~\ref{cor:sm-dcx}.
Therefore, $-\E^\Q[\Lambda(V(X;\Pi,\bar p,1,1))] \leq -\E^\Q[\Lambda(V(Z;\Pi,\bar p,1,1))]$ for any measure $\Q \in \mcal$ (i.e., $\Q \ll \P$).  In particular, by applying the robust representation presented in Theorem~\ref{thm:dual}, this implies:
\begin{align*}
\rho^{sys}(X;\Pi,\bar p,1,1) &= \sup_{\Q\in\mcal}\left(-\alpha(\Q) - \E^\Q[\Lambda(V(X;\Pi,\bar p,1,1))]\right)\\
    &\leq \sup_{\Q\in\mcal}\left(-\alpha(\Q) - \E^\Q[\Lambda(V(Z;\Pi,\bar p,1,1))]\right) = \rho^{sys}(Z;\Pi,\bar p,1,1).
\end{align*}
Now, consider $\alpha_x,\alpha_L \in [0,1]$ arbitrary.  By Proposition~\ref{prop:alpha-bound}, monotonicity of the risk measure and aggregation function, and results from the full recovery setting:
\begin{align*}
\rho^{sys}(X;\Pi,\bar p,\alpha_x,\alpha_L) &\leq \rho^{sys}(\alpha_x X;\alpha_L \Pi,\bar p,1,1) \leq \rho^{sys}(\alpha_x Z;\alpha_L \Pi,\bar p,1,1).
\end{align*}

We will now consider Jensen's inequality for the lower bound on these specific classes of scalar systemic risk measures.  As before we will first consider the case of full recovery $\alpha_x = \alpha_L = 1$.  By concavity of the wealth mapping $V$ as demonstrated in Proposition~\ref{prop:concave-submodular}, $\Lambda \circ V$ is concave.  By iterated application of Jensen's inequality 
\begin{align*}
\E[u(\Lambda(V(X;\Pi,\bar p,1,1))+y)] &= \E[\E[u(\Lambda(V(X;\Pi,\bar p,1,1))+y)|\gcal]]\\ 
&\leq \E[u(\E[\Lambda(V(X;\Pi,\bar p,1,1))|\gcal]+y)]\\
&\leq \E[u(\Lambda(V(\E[X|\gcal];\Pi,\bar p,1,1))+y)]
\end{align*}
for any $y \in \bbr$ and $\gcal \in \bbg$.  We will now consider these two classes of risk measures separately.
\begin{itemize}
\item Let $\rho$ be an optimized certainty equivalent.  By the above inequality
\[y - \E[u(\Lambda(V(X;\Pi,\bar p,1,1))+y)] \geq y - \E[u(\Lambda(V(\E[X|\gcal];\Pi,\bar p,1,1))+y)]\]
for any $y \in \bbr$ and $\gcal \in \bbg$.  Therefore, taking an infimum over $y \in \bbr$ retains this inequality and, thus, $\rho^{sys}(X;\Pi,\bar p,1,1) \geq \rho^{sys}(\E[X|\gcal];\Pi,\bar p,1,1)$ for any $\gcal \in \bbg$.
\item Let $\rho$ be a utility-based shortfall risk measure. By the above inequality, for any $y \in \bbr$, if $\E[u(\Lambda(V(X;\Pi,\bar p,1,1))+y)] \geq c$ then $\E[u(\Lambda(V(\E[X|\gcal];\Pi,\bar p,1,1))+y)] \geq c$ as well and, thus, $\rho^{sys}(X;\Pi,\bar p,1,1) \geq \rho^{sys}(\E[X|\gcal];\Pi,\bar p,1,1)$ for any $\gcal \in \bbg$.  
\end{itemize}
(Note that taking the trivial $\sigma$-algebra $\gcal := \{\emptyset,\Omega\}$ provides the final lower bound.)
Now, consider $\alpha_x,\alpha_L \in [0,1]$ arbitrary.  By Proposition~\ref{prop:alpha-bound}, monotonicity of the risk measure and aggregation function, and results from the full recovery setting:
\begin{align*}
\rho^{sys}(X;\Pi,\bar p,\alpha_x,\alpha_L) &\geq \rho^{sys}(X;\Pi,\bar p,1,1) \geq \rho^{sys}(\E[X|\gcal];\Pi,\bar p,1,1)
\end{align*}
for any $\gcal \in \bbg$.
\end{proof}

\begin{remark}
Though the lower bound in Lemma~\ref{lemma:insensitive} is provided only for two specific classes of risk measures, we wish to recall that the expected shortfall, as described in Example~\ref{ex:ES}, is an example of an optimized certainty equivalent.  The expected shortfall with systematic (i.e., comonotonic) shocks was studied as a systemic risk measure in~\cite{amini2021optimal}.  (These bounds are demonstrated numerically in Example~\ref{ex:rm-bound}.)
\end{remark}

\begin{example}\label{ex:agg-society}
A common aggregation function is the societal wealth, i.e., $\Lambda(V) := V_{n+1}$ (see, e.g.,~\cite{amini2021optimal}).  With this aggregation function, we can immediately apply Lemma~\ref{lemma:insensitive} to bound the systemic risk.
Note that there is nothing special about the societal node herein and any other bank's wealth $\Lambda(V) := V_i$ can equally be considered.  Such an approach will be taken in the proof of Lemma~\ref{lemma:bound} (within Appendix~\ref{sec:proofs}) so as to consider that result a corollary of the bounds for these scalar systemic risk measures.
\end{example}

We wish to conclude this discussion of the scalar systemic risk measures by providing an analytical description of the expected utility that appears in optimized certainty equivalents and utility-based shortfall risk measures in the comonotonic setting.  (Though presented as the expectation of the utility of an aggregation function, such results equally apply to the expectation of any function of the clearing wealths; in such a way, this representation can be used for, e.g., stochastic root finding for the optimized certainty equivalent as presented in~\cite{duwe08long,HSW}.)  As such, these systemic risk measures have computationally tractable upper and lower bounds as provided in Lemma~\ref{lemma:insensitive}.
\begin{corollary}\label{cor:comonotonic-rm}
Let the endowments be defined by $X = f(q)$ satisfy Assumption~\ref{ass:comonotonic} be uniformly bounded.
Let $\Lambda: \bbr^{n+1} \to \bbr$ be a nondecreasing aggregation function.
The expected utility which can be used to characterize either an optimized certainty equivalent or utility-based shortfall risk measure can be computed as
\begin{align*}
\E[u(\Lambda(V(X))+y)] &= \sum_{k = 0}^n \E\left[u\left(\Lambda(\Delta_k f(q) - \delta_k)+y\right)\ind{q \in [q_{k+1}^*,q_k^*)}\right] 
\end{align*}
for any $y \in \bbr$
where we define $\Delta_k$ and $\delta_k$ as in Theorem~\ref{thm:comonotonic} by:
\begin{align*}
\Delta_k &:= \begin{cases}\Delta\left(\sum_{j = 1}^k e_j\right) &\text{if } k = 1,2,...,n\\ I &\text{if } k = 0\end{cases} \text{ and } \delta_k := \begin{cases}\delta\left(\sum_{j = 1}^k e_j\right) &\text{if } k = 1,...,n\\ (I - \Pi^\T)\bar p &\text{if } k = 0\end{cases}
\end{align*}
with $\Delta,\delta$ defined in~\eqref{eq:Delta} and~\eqref{eq:delta} respectively and $q^*$ as in Definition~\ref{defn:q*}.
\end{corollary}
\begin{proof}
Using the comonotonicity of $X$, the piecewise linear construction of $V$ in the fictitious default algorithm of Corollary~\ref{cor:FDA}, and the construction of $q^*$:
\begin{align*}
\E[u(\Lambda(V(X))+y)] &= \E[u(\Lambda(V(f(q)))+y)]\\
    &= \sum_{k = 0}^n \E\left[u\left(\Lambda(V(f(q)))+y\right)\ind{q \in [q_{k+1}^*,q_k^*)}\right] \\
    &= \sum_{k = 0}^n \E\left[u\left(\Lambda(\Delta_k f(q) - \delta_k)+y\right)\ind{q \in [q_{k+1}^*,q_k^*)}\right].
\end{align*}
\end{proof}

\subsection{Set-valued systemic risk measures}\label{sec:sensitive}
Set-valued systemic risk measures were studied in, e.g.,~\cite{feinstein2014measures,AR16}. These risk measures are called ``sensitive'' in~\cite{AR16} as the aggregated value is sensitive to the specific capital injections.  Though presented more generally in~\cite{feinstein2014measures}, herein we follow the structure of~\cite{AR16} so that these set-valued risk measures are constructed explicitly from a scalar systemic risk measure.  In particular, taking the scalar systemic risk measures for a financial network~\eqref{eq:insensitive}, herein we will consider a set-valued systemic risk measure to be the mapping $R^{sys}$ defined by
\begin{equation}\label{eq:sensitive}
R^{sys}(X;\Pi,\bar p,\alpha_x,\alpha_L) := \{y \in \bbr^n \; | \; \rho^{sys}(X+y;\Pi,\bar p,\alpha_x,\alpha_L) \leq 0, \; X+y \in (L^\infty_+)^n\}
\end{equation}
for any financial network $(X;\Pi,\bar p,\alpha_x,\alpha_L)$.
Due to this construction, we are able to bound these set-valued systemic risk measures by application of the bounds provided in Lemma~\ref{lemma:insensitive}.

\begin{corollary}\label{cor:sensitive}
Let $\rho: L^\infty \to \bbr$ be a convex risk measure.
Let $\Lambda: \bbr^{n+1} \to \bbr$ be a nondecreasing, directionally concave aggregation function.  
Define $R^{sys}$ by~\eqref{eq:sensitive} to be the associated set-valued systemic risk measure.
Let $X \in (L^\infty_+)^n$ and $Z$ be its comonotonic copula, i.e., $Z = (F_{X_1}^{-1}(U),...,F_{X_n}^{-1}(U))$ for uniform random variable $U$ on the support $[0,1]$ and marginal distributions $F_{X_1},...,F_{X_n}$ for $X_1,...,X_n$ respectively, then \[R^{sys}(X;\Pi,\bar p,\alpha_x,\alpha_L) \supseteq \{y \in \bbr^n \; | \; \alpha_x y \in R^{sys}(\alpha_x Z;\alpha_L \Pi,\bar p,1,1), \; y \geq -Z \text{ a.s.}\}.\]

Further, let $\bbg = \left\{\gcal \subseteq \fcal \; | \; \gcal \text{ is a $\sigma$-algebra}, \; \E[X \; | \; \gcal] \text{ is comonotonic}\right\}$ be the set of sub-$\sigma$-algebras such that $\E[X \; | \; \gcal]$ is a comonotonic projection of $X$. 
If, additionally, $\rho$ is either an optimized certainty equivalent or a utility-based shortfall risk measure and $\Lambda$ is concave, then \[R^{sys}(X;\Pi,\bar p,\alpha_x,\alpha_L) \subseteq \bigcap_{\gcal \in \bbg} R^{sys}(\E[X|\gcal];\Pi,\bar p,1,1) \subseteq R^{sys}(\E[X];\Pi,\bar p,1,1).\]
\end{corollary}
\begin{proof}
Let $y \in \bbr^n$ such that $\alpha_x y \in R^{sys}(\alpha_x Z;\alpha_L \Pi,\bar p,1,1)$ and $y \geq -Z$ a.s.  By construction, this implies $\rho^{sys}(\alpha_x[Z+y];\alpha_L \Pi,\bar p,1,1) \leq 0$.  By simple application of Lemma~\ref{lemma:insensitive}, this implies $\rho^{sys}(X+y;\Pi,\bar p,\alpha_x,\alpha_L) \leq 0$ and $X+y \in (L^\infty_+)^n$, i.e., $y \in R^{sys}(X;\Pi,\bar p,\alpha_x,\alpha_L)$.
Note that, if $\alpha_x = 0$, then $0 \cdot y \in R^{sys}(0 \cdot Z;\alpha_L \Pi,\bar p,1,1)$ if and only if $\rho^{sys}(\vec{0};\alpha_L \Pi,\bar p,1,1) := \rho(\Lambda(\vec{0})) \leq 0$; therefore, the comonotonic bound is either the empty set or $\{y \in \bbr^n \; | \; y \geq -X \text{ a.s.}\}$.  

Consider now the bound for an optimized certainty equivalent or utility-based shortfall risk measure.  By simple application of Lemma~\ref{lemma:insensitive} with fixed $\gcal \in \bbg$, $\rho^{sys}(X+y;\Pi,\bar p,\alpha_x,\alpha_L) \geq \rho^{sys}(\E[X|\gcal]+y;\Pi,\bar p,1,1)$ for any $y \in \bbr^n$ such that $X+y \in (L^\infty_+)^n$.  Therefore $y \in R^{sys}(X;\Pi,\bar p,\alpha_x,\alpha_L)$ implies $y \in R^{sys}(\E[X|\gcal];\Pi,\bar p,1,1)$ as well.
\end{proof}

\section{Strictness of bounds in Lemma~\ref{lemma:bound}}\label{sec:strict-bound}
Herein, we wish to provide two simple examples that demonstrate that the upper and lower bounds provided within Lemma~\ref{lemma:bound} are strict in the sense that there exist networks in which either of these bounds is binding.
\begin{example}\label{ex:counterexample}
Consider a general network with interbank obligations summarized by $(\Pi,\bar p)$ with recovery rates $\alpha_x,\alpha_L \in [0,1]$.  
\begin{enumerate}
\item Let $X = x \in \bbr^n_+$ a.s.\ -- which is comonotonic -- with financial network $(\Pi,\bar p)$ such that $x + \Pi^\T \bar p - \bar p \in -\bbr^n_{++}$, i.e., such that all banks are (almost surely) in default.  By the clearing procedure, the clearing wealths (and therefore also the clearing payments) must satisfy:
    $V(x;\Pi,\bar p,\alpha_x,\alpha_L) = \left(I - \alpha_L\Pi^\T\right)^{-1}\alpha_x x - \bar p = V(\alpha_x x;\alpha_L \Pi,\bar p,1,1)$.
    That is, the expectation over $X$ equals the lower bound provided in Lemma~\ref{lemma:bound}.
\item Let $X = x \in \bbr^n_+$ a.s.\ -- which is such that $\E[X] = x$ as well -- with financial network $(\Pi,\bar p)$ such that $x + \Pi^\T \bar p - \bar p \in \bbr^n_+$, i.e., such that (almost surely) no banks are in default.  By the clearing procedure, the clearing wealths (and therefore also the clearing payments) must satisfy:
    $V(x;\Pi,\bar p,\alpha_x,\alpha_L) = x + \Pi^\T \bar p - \bar p = V(x;\Pi,\bar p,1,1)$.
    That is, the expectation over $X$ equals the upper bound provided in Lemma~\ref{lemma:bound}.
\end{enumerate}
\end{example}

\section{Merton model for the pricing of debt and equity}\label{sec:merton}
In this section we wish to give a formalized description of the Merton model for debt and equity as defined in~\cite{merton1974}.  This model is utilized as a baseline to compare to the network valuation results presented within this work.  

In contrast to the financial networks generally utilized in this work, the Merton model studies the single firm setting.  This firm has only one kind of debt $\bar p \geq 0$ with a single maturity date $T > 0$.  On the other side of the balance sheet, the firm has assets worth $b e^{rT} + sq$ at maturity $T$ where the firm has invested $b \in \bbr$ in a risk-free bond with interest rate $r \geq 0$ and $s \geq 0$ in a risky asset with stochastic payout $q \geq 0$ almost surely. 
\begin{assumption}\label{ass:merton}
The risky asset has \emph{lognormal} payout $q = \exp([\mu - \frac{\sigma^2}{2}]T + \sigma Z \sqrt{T})$ with drift $\mu \in \bbr$ and volatility $\sigma > 0$ where $Z$ is some standard normal random variable.
\end{assumption}

The primary question of the Merton model is in regards to the price of debt and equity at a time prior to maturity and, in particular, at time $0$.  
The total payments made at maturity is 
\[p := \bar p \ind{b e^{rT} + sq \geq \bar p} + \alpha\left(b e^{rT} + sq\right)\ind{b e^{rT} + sq < \bar p}\] 
for recovery rate $\alpha \in [0,1]$, i.e., the firm pays its total liabilities $\bar p$ if its assets are worth at least as much as the liabilities and otherwise it pays out all of its assets (less a proportional bankruptcy cost $1-\alpha$).  We wish to note that~\cite{merton1974} considers only the case of full recovery, i.e., $\alpha = 1$.
The equity at maturity has the payoff of the European call option 
\[E := (b e^{rT} + sq - \bar p)^+,\] 
i.e., the surplus assets after paying out all liabilities.
Pricing of these claims is accomplished under the risk-neutral measure $\Q$ for the risky asset, i.e.
\[\frac{d\Q}{d\P} = \exp\left(-\frac{1}{2}\left(\frac{\mu - r}{\sigma}\right)^2 T - \left(\frac{\mu - r}{\sigma}\right)Z\sqrt{T}\right).\]
{(Though presented herein as a single-period setting, we assume that the lognormal payoff of $q$ is generated from a geometric Brownian motion.)} 
That is, the price of debt is the discounted $\Q$-expectation of the payments $p$ normalized by the total liabilities $\bar p$ and the market capitalization for the firm is the discounted $\Q$-expectation of the equity $E$.
\begin{lemma}\label{lemma:merton}
Under the risk-neutral measure $\Q$, the discounted price of the firm's debt and market capitalization are given, respectively, by:
\begin{align*}
\E^\Q\left[e^{-rT}\frac{p}{\bar p}\right] &= \begin{cases} e^{-rT} &\text{if } be^{rT} \geq \bar p \\
    e^{-rT} + \frac{\alpha s}{\bar p} \Phi(-d^1) - \left[e^{-rT} - \frac{\alpha b}{\bar p}\right]\Phi(-d^2) &\text{if } be^{rT} < \bar p \end{cases}\\ 
\E^\Q\left[e^{-rT}E\right] &= \begin{cases} b + s - e^{-rT}\bar p &\text{if } be^{rT} \geq \bar p \\
    s\Phi(d^1) - \left[\bar p e^{-rT} - b\right]\Phi(d^2)&\text{if } be^{rT} < \bar p \end{cases}
\end{align*}
where $\Phi: \bbr \to [0,1]$ is the standard normal cumulative distribution function and
\begin{align*}
d^1 &= \frac{\log\left(\frac{s}{\bar p - be^{-rT}}\right) + \left(r + \frac{\sigma^2}{2}\right)T}{\sigma\sqrt{T}}\\
d^2 &= d^1 - \sigma\sqrt{T}.
\end{align*}
\end{lemma}
\begin{proof}
First, in the case that $be^{rT} \geq \bar p$ then the firm will always pay its obligations in full and any excess is equity.  Consider the case with $be^{rT} < \bar p$.
Consider the price of debt from the definition of the payment $p$
\begin{align*}
\E^\Q\left[e^{-rT}\frac{p}{\bar p}\right] &= e^{-rT}\Q(sq \geq \bar p - be^{rT}) + \frac{\alpha}{\bar p}\E^\Q[e^{-rT}(be^{rT}+sq)\ind{sq < \bar p - be^{rT}}]\\
    &= e^{-rT}\Q(sq \geq \bar p - be^{rT}) + \frac{\alpha b}{\bar p}\Q(sq < \bar p - be^{rT}) + s\widehat\Q(sq < \bar p - be^{rT})
\end{align*}
where $\frac{d\widehat\Q}{d\Q} = e^{-rT}q$.  By construction of the measures $\Q,\widehat\Q$, these probabilities are explicitly given with respect to $d^1,d^2$, i.e.,
\begin{align*}
\Q(sq \geq \bar p - be^{rT}) &= \Phi(d^2) \quad \text{ and } \quad \widehat\Q(sq \geq \bar p - be^{rT}) = \Phi(d^1).
\end{align*}
Therefore, by symmetry of the normal distribution, it can easily be proven that
\[\E^\Q\left[e^{-rT}\frac{p}{\bar p}\right] = e^{-rT} + \frac{\alpha s}{\bar p} \Phi(-d^1) - \left[e^{-rT} - \frac{\alpha b}{\bar p}\right]\Phi(-d^2).\]
Likewise, the market capitalization for the firm can be decomposed as
\begin{align*}
\E^\Q\left[e^{-rT}E\right] &= \E^\Q[e^{-rT}(be^{rT}+sq)\ind{sq \geq \bar p - be^{rT}}] - e^{-rT}\bar p\Q(sq \geq \bar p - be^{rT})\\
    &= s\widehat\Q(sq \geq \bar p - be^{rT}) - \left[\bar p e^{-rT} - b\right]\Q(sq \geq \bar p - be^{rT})\\
    &= s\Phi(d^1) - \left[\bar p e^{-rT} - b\right]\Phi(d^2).
\end{align*}
\end{proof}

\section{Pricing under CAPM model with idiosyncratic risk}\label{sec:capm}

Herein we wish to consider the setting in which firms have idiosyncratic risks that are independent from the market portfolio, but may have complicated correlation structures between firm idiosyncratic risks.  That is, {firm endowments follow the CAPM structure}.
CAPM is a single factor model for the returns of assets; under this structure, bank $i$ has returns based on its correlation to the market portfolio $q$ and additional idiosyncratic risk.  The CAPM structure was first introduced in~\cite{sharpe1964,lintner1965}.  This model provides bank $i$'s logarithmic returns $r_i$ up to maturity $T$ as a function of the risk-free rate $r \geq 0$ and the market returns $r_M = \log(q)$ through the relation
\[r_i = r T + \beta_i\left(r_M - r T\right) +  \epsilon_i\]
where $\beta_i$ is the correlation of the bank $i$'s return to the market multiplied by the ratio of the standard deviation of bank $i$'s investment to the standard deviation of the market portfolio (i.e., the market beta for bank $i$) and $\epsilon_i$ is some idiosyncratic return for bank $i$.
Notably, this setting is \emph{not} comonotonic, though we find comonotonic upper and lower bounds along the lines of Lemma~\ref{lemma:bound} and Remark~\ref{rem:sfm} for this setting.  
We provide short numerical {examples} to demonstrate the value of all bounds provided in this setting as a function of market betas.

\begin{assumption}\label{ass:capm}
Let $r \geq 0$ be the risk-free rate.  Let $s \in \bbr^n_+$ denote the investment size of the firms.  Fix $\beta \in \bbr^n_+$ to be the vector of market betas for each firm and $\gamma \in \bbr^n_+$ to be the vector of standard deviations for the idiosyncratic risks.  Consider a terminal time $T > 0$.  Define {endowments} $X := \diag(s)\eta$ almost surely where, for each bank $i$,
\[\eta_i := \exp\left(\left[(1-\beta_i)r + \beta_i\mu_M - \frac{\beta_i^2\sigma_M^2 + \gamma_i^2}{2}\right]T + \left(\beta_i\sigma_M Z_M + \gamma_i Z_i\right)\sqrt{T}\right)\]
for standard normal random variables $Z_M,Z_1,...,Z_n$ such that $Z_M$ is independent from each $Z_i$ and $(Z_1,...,Z_n)$ follows a multivariate normal distribution.
Further, consider the market portfolio that follows the lognormal distribution 
\[q := \exp\left(\left[\mu_M - \frac{\sigma_M^2}{2}\right]T + \sigma_M Z_M \sqrt{T}\right)\]
with market drift $\mu_M$ and standard deviation $\sigma_M \geq 0$.
That is, in the CAPM language presented previously, bank $i$'s return is provided by $r_i = \log(\eta_i)$ with market return $r_M = \log(q)$ and idiosyncratic returns $\epsilon_i = -\frac{\gamma_i^2}{2}T + \gamma_i Z_i \sqrt{T}$.
\end{assumption}
Throughout we will also consider the individual firm volatilities $\sigma_i = \sqrt{\beta_i^2 \sigma_M^2 + \gamma_i^2}$.  With this we can define the market beta by the firm's correlation to the market portfolio $\rho_{iM}$, i.e.\ $\beta_i = \rho_{iM} \frac{\sigma_i}{\sigma_M}$, and we can determine the idiosyncratic volatility from this correlation as well, i.e.\ $\gamma_i = \sqrt{1-\rho_{iM}^2}\sigma_i$.  We note that the setting in which $\rho_{iM} = 1$ for all firms $i$ provides already a comonotonic setting.

Throughout this section, though a single-period setting is constructed, we assume that this structure is generated from a multivariate geometric Brownian motion with dynamic trading strategies (but a static network of obligations).  As such, we will be considering the expectations with respect to the (unique) risk-neutral measure $\Q$ for the market, i.e.\
\[\frac{d\Q}{d\P} = \exp\left(-\frac{1}{2}\left(\frac{\mu_M-r}{\sigma_M}\right)^2 T - \left(\frac{\mu_M-r}{\sigma_M}\right)Z_M\sqrt{T}\right).\]
Thus the market portfolio, under $\Q$, has value at maturity $T$ of
\[q = \exp\left((r - \frac{\sigma_M^2}{2})T + \sigma_M \tilde{Z}_M \sqrt{T}\right)\]
where $\tilde{Z}_M$ is a standard normal random variable under the risk-neutral measure $\Q$.
Additionally the portfolios for each firm $i$ has value at maturity $T$ of 
\begin{align}
\label{eq:eta-1} \eta_i &= \exp\left((r - \frac{\sigma_i^2}{2})T + \beta_i \sigma_M \tilde{Z}_M \sqrt{T} + \gamma_i Z_i \sqrt{T}\right)\\ 
\label{eq:eta-2} &= \exp\left(\left[(1-\beta_i)r - \frac{\gamma_i^2}{2}\right]T + \gamma_i Z_i \sqrt{T}\right)\exp\left((1-\beta_i)\frac{\beta_i \sigma_M^2}{2}T\right)q^{\beta_i}.
\end{align}

For this setting we now wish to consider analytical bounds from Lemma~\ref{lemma:bound}. These bounds permit the quantification of the effects of diversity of investment strategies on pricing and, therefore, also systemic risk.  By deconstructing the returns of any bank into the systematic or market component $q$ and the idiosyncratic component, the CAPM setting can suitably recover the tradeoffs between systematic risk and idiosyncratic risks on systemic risk.  In particular, by utilizing the comonotonic bounds from Lemma~\ref{lemma:bound}, we can analytically consider, e.g., the sensitivity of the price of debt -- a proxy for systemic risk -- to the various network parameters; an example of such a result is displayed in Example~\ref{ex:bound} below.
\begin{lemma}\label{lemma:capm}
Under the risk-neutral measure $\Q$, the discounted price of firm $i$'s debt $\EQ[e^{-rT}p_i(\diag(s)\eta)/\bar p_i]$ is bounded above and below by:
\begin{align}
\label{eq:capm-bound} \EQ\left[e^{-rT}\frac{p_i({\alpha_x}\diag(s) \hat \eta(\sigma) {; \alpha_L \Pi,\bar p,1,1})}{\bar p_i}\right] &\leq \EQ\left[e^{-rT}\frac{p_i(\diag(s) \eta {; \Pi,\bar p,\alpha_x,\alpha_L})}{\bar p_i}\right]\\ 
\nonumber &\leq \EQ\left[e^{-rT}\frac{p_i(\diag(s) \hat \eta(\beta \sigma_M) {; \Pi,\bar p,1,1})}{\bar p_i}\right] 
\end{align}
where $\hat \eta: \bbr^n_+ \to (L^1_+)^n$ is defined by 
\begin{align*}
\hat \eta_i(z) &= \exp\left((r - \frac{z_i^2}{2})T + z_i \tilde Z_M \sqrt{T}\right) = \exp\left((1-\frac{z_i}{\sigma_M})(r + \frac{z_i\sigma_M}{2})T\right)q^{\frac{z_i}{\sigma_M}}.
\end{align*}
That is, $\hat \eta(\sigma)$ and $\hat \eta(\beta \sigma_M)$ are the comonotonic version and conditional expected version (w.r.t.\ $q$) of $\eta$ respectively.
In fact, the comonotonic bounds can be computed via the general structure:
\begin{align}
\nonumber \EQ\left[e^{-rT}\frac{p_i(\diag(s) \hat \eta(z) {; \Pi,\bar p,1,1})}{\bar p_i}\right] &= e^{-rT} + \frac{1}{\bar p_i}e_i^\T \sum_{k = i}^n \left[\Delta_k\diag(s)\left(\Phi(-\hat d_k^1(z)) - \Phi(-\hat d_{k+1}^1(z))\right)\right.\\
\label{eq:capm-price} &\qquad\qquad\quad \left.- e^{-rT}\delta_k \left(\Phi(-\hat d_k^2(z)) - \Phi(-\hat d_{k+1}^2(z))\right) \right]
\end{align}
where the {threshold market prices $\hat q^*(z)$ as in Definition~\ref{defn:q*} with the} ordering of banks provided as in Assumption~\ref{ass:q*order}, $\Phi: \bbr \to [0,1]$ is the standard normal cumulative distribution function, 
\begin{align*}
\hat d_k^1(z) &= \frac{\log(1/\hat q_k^*(z))\vec{1} + (r\vec{1} - \frac{1}{2}(\sigma_M\vec{1} - 2z)\sigma_M)T}{\sigma_M \sqrt{T}} \quad \forall k = 0,1,...,n+1\\
\hat d_k^2(z) &= \frac{\log(1/\hat q_k^*(z)) + (r - \frac{1}{2}\sigma_M^2)T}{\sigma_M \sqrt{T}} \quad \forall k = 0,1,...,n+1,
\end{align*}
{and $\Delta_k,\delta_k$ are defined as in Theorem~\ref{thm:comonotonic}, i.e.,
\begin{align*}
\Delta_k &:= \begin{cases}\Delta\left(\sum_{j = 1}^k e_j\right) &\text{if } k = 1,2,...,n\\ I &\text{if } k = 0\end{cases} \text{ and } \delta_k := \begin{cases}\delta\left(\sum_{j = 1}^k e_j\right) &\text{if } k = 1,...,n\\ (I - \Pi^\T)\bar p &\text{if } k = 0\end{cases}
\end{align*}
{with $\Delta,\delta$ defined in~\eqref{eq:Delta} and~\eqref{eq:delta} respectively based on the ordering of $\hat q^*(z)$.}
}
\end{lemma}
\begin{proof}
{We will prove this result in two parts.  First, we will demonstrate the upper and lower bounds of~\eqref{eq:capm-bound} hold.  Second, we will prove the pricing structure for a given comonotonic structure as in~\eqref{eq:capm-price}.
\begin{enumerate}
\item Through an application of Lemma~\ref{lemma:bound}, we will prove the upper and lower bounds by showing that $\hat \eta(\sigma)$ is the comonotonic copula of $\eta$ and $\hat \eta(\beta \sigma_M) = \EQ[\eta | q]$ is a comonotonic projection of $\eta$. First, consider the marginal distribution of $\eta_i$ for bank $i$ w.r.t.\ $\Q$: a lognormal distribution (for Gaussian mean $(r - \frac{\sigma_i^2}{2})T$ and variance $\sigma_i^2 T$) as depicted in~\eqref{eq:eta-1}.  Therefore the comonotonic copula of $\eta$ is a random vector of lognormals with some consistent underlying standard normal distribution.  Therefore $\hat \eta(\sigma)$ is the comonotonic copula of $\eta$ w.r.t.\ the standard normal $\tilde Z_M$ (under the measure $\Q$).  Now, we will consider $\EQ[\eta | q]$ by taking advantage of the form given in~\eqref{eq:eta-2}:
    \begin{align*}
    \EQ[\eta_i | q] &= \EQ\left[\exp\left(\left[(1-\beta_i)r - \frac{\gamma_i^2}{2}\right]T + \gamma_i Z_i \sqrt{T}\right)\exp\left((1-\beta_i)\frac{\beta_i \sigma_M^2}{2}T\right)q^{\beta_i} \; | \; q\right] \\
    &= \EQ\left[\exp\left(\left[(1-\beta_i)r - \frac{\gamma_i^2}{2}\right]T + \gamma_i Z_i \sqrt{T}\right)\right] \exp\left((1-\beta_i)\frac{\beta_i \sigma_M^2}{2}T\right)q^{\beta_i} \\
    &= \exp\left((1-\beta_i)\left[r + \frac{\beta_i \sigma_M^2}{2}\right]T\right) q^{\beta_i} \\
    &= \hat \eta_i(\beta_i\sigma_M).
    \end{align*}
Furthermore, $\hat\eta_i(\beta_i\sigma_M)$ is comonotonic by the positivity of the exponential (i.e., $\exp\left((1-\beta_i)(r + \frac{\beta_i\sigma_M^2}{2})T\right)$) and the nonnegativity of the power of $q$ (i.e., $\beta_i$) for each bank $i$.
\item We now wish to prove the pricing structure for a given comonotonic structure as provided in~\eqref{eq:capm-price}.  For this purpose, we will consider the generating function $f(q) := \diag(s)\hat\eta(z)$ (which is implicitly a function of $q$) for the comonotonic endowments as considered in Proposition~\ref{prop:comonotonic-defn}.  Thus, following the pricing structure from Theorem~\ref{thm:comonotonic}, we find
    \begin{align*}
    \EQ&\left[e^{-rT}\frac{p_i(\diag(s)\hat\eta(z);\Pi,\bar p,1,1)}{\bar p_i}\right] \\
    &= \frac{e^{-rT}}{\bar p_i}\left(\bar p_i + e_i^\T \sum_{k = i}^n\left[\Delta_k\EQ\left[f(q)\ind{q \in [\hat q_{k+1}^*(z),\hat q_k^*(z))}\right] - \delta_k \Q(q \in [\hat q_{k+1}^*(z),\hat q_k^*(z))) \right]\right) \\
    &= e^{-rT} + \frac{1}{\bar p_i}e_i^\T \sum_{k = i}^n\Big[\Delta_k\diag(s)\EQ\left[e^{-rT}\hat\eta(z)\ind{q \in [\hat q_{k+1}^*(z),\hat q_k^*(z))}\right]\\
    &\qquad\qquad\qquad - e^{-rT}\delta_k \Q(q \in [\hat q_{k+1}^*(z),\hat q_k^*(z))) \Big].
    \end{align*}
Consider now the structures of $\EQ[e^{-rT}\hat\eta(z)\ind{q \leq \hat q_k^*(z)}]$ and $\Q(q \leq \hat q_k^*(z))$.  These expectations will be computed comparably to that undertaken in Lemma~\ref{lemma:merton} for the (single-firm) Merton model, i.e., let $\hat\Q$ denote the vector of probability measures such that $\frac{d\hat\Q_i}{d\Q} = e^{-rT}\hat\eta_i(z)$ for each bank $i$.  By construction of the measures $\Q,\hat\Q$ these expectations are explicitly given with respect to $\hat d^1(z),\hat d^2(z)$, i.e.,
    \begin{align*}
    \EQ[e^{-rT}\hat\eta_i(z)\ind{q \leq \hat q_k^*(z)}] = \hat\Q_i(q \leq \hat q_k^*(z)) &= \Phi(-\hat d_k^1(z)),\\
    \Q(q \leq \hat q_k^*(z)) &= \Phi(-\hat d_k^2(z)).
    \end{align*}
Therefore, by construction, we can immediately recover the form of~\eqref{eq:capm-price}.
\end{enumerate}
}
\end{proof}
Though not presented in the above lemma, similar bounds can be provided for the effective interest rate and system-wide market capitalization as in Remark~\ref{rem:bound-R} and Corollary~\ref{cor:bound} respectively. {Similar bounds can likewise be applied to systemic risk measures as presented in Appendix~\ref{sec:sysrisk}; this is numerically computed in Example~\ref{ex:rm-bound} below.}

{We wish to note that the upper bound for the price of debt as provided in Lemma~\ref{lemma:capm} is provided by the comonotonic projection of the endowments $X$ onto the market portfolio $q$ (as all banks are long in the market portfolio by $\beta_i \geq 0$ for all $i$). In fact, as the CAPM model is a single factor model, this upper bound follows the proposed structure as considered in Remark~\ref{rem:sfm}.  This demonstrates that a single factor model can readily lead to an interpretable comonotonic projection.}

\begin{remark}\label{rem:q*CAPM}
In the above representation we note that the ordering of banks may change based on the parameter $z$ (i.e., the effective order used will differ for the lower bound and upper bound).  
The computation of $\hat q^*(z)$ is simplified significantly if all firms have the same risk profiles, i.e., $\beta_i = \beta_j$ and $\gamma_i = \gamma_j$ for all firms $i,j$.  In that case 
\[\hat q_k^*(z\vec{1}) = \exp\left((1-\frac{\sigma_M}{z})(r + \frac{z\sigma_M}{2})T\right)\hat q_k^*(\sigma_M\vec{1})^{\frac{\sigma_M}{z}}\] 
where $\hat q_k^*(\sigma_M\vec{1})$ is computed exactly as in Proposition~\ref{prop:q*-gen} {(with $f(q) = s\hat\eta(\sigma_M\vec{1})$ implicitly depends on $q$)}.
\end{remark}

As briefly discussed above, the bounds on the price of debt presented in Lemma~\ref{lemma:capm}, permit a quantification of the tradeoff for systemic risk from the diversity versus diversification problem.  This is a well-studied problem in \emph{price-mediated contagion} (see, e.g.,~\cite{CW19}); as far as the authors are aware, this problem has not previously been studied within the context of \emph{default contagion}.  As evidenced by Lemma~\ref{lemma:capm}, with fixed risks for each bank $\sigma \in \bbr^n_+$, the purely diversified portfolio by investing purely in the market portfolio provides the greatest systemic risk; however, the improvements through diversity of investments (i.e., through idiosyncratic risks) are bounded by the exposure of each bank to the market portfolio.  Therefore, from the perspective of default contagion, diversity of investments through idiosyncratic risks and returns appears to generically outperform diversification.  We wish to note that, in the price-mediated contagion literature, a similar conclusion is drawn on the downsides of diversification on systemic risk in~\cite{detering2020suffocating} in which the comparable term is portfolio similarity.

We wish to conclude this section by providing simple numerical {examples} demonstrating the bounding properties first proposed in Lemma~\ref{lemma:bound} {(and Lemma~\ref{lemma:insensitive})} applied to the CAPM setting considered within this section (Lemma~\ref{lemma:capm}).  In this example we will focus on the simple 2 bank network with societal node as depicted in Figure~\ref{fig:2bank}.
\begin{example}\label{ex:bound}
Consider the financial system with 2 banks and an additional societal node presented in Figure~\ref{fig:2bank} and discussed in detail in Section~\ref{sec:2bank}; that is, such that bank $1$ has external assets $s_1 = 3$ with interbank obligations $L_{12} = 7$ and societal obligations $L_{13} = 3$ and such that bank $2$ has external assets $s_2 = 4$ with interbank obligations $L_{21} = 3$ and societal obligations $L_{23} = 3$. For simplicity, we will consider this financial system with full recovery $\alpha_x = \alpha_L = 1$.  Herein we wish to visualize the bounds found in Lemma~\ref{lemma:capm} over varying market betas.  
To simplify this setting we will take $\sigma_M = \sigma_1 = \sigma_2 = 1$ and $\beta_1 = \beta_2$ throughout (hereafter just written as $\beta$).  
We first draw our attention to Figure~\ref{fig:bound-2}.  In this figure we see that the comonotonic lower bound and Jensen upper bound are insensitive to the choice of $\beta \in [0,1]$.  However, the conditional upper bound approaches the lower bound as market beta $\beta$ tends to $1$ (i.e.\ the comonotonic setting).  Thus for high correlations we consider the bounds to be tight, and therefore of great use.  However, as discussed in Section~\ref{sec:equity}, the market capitalization exceeds the values under the conditional and full Jensen upper bound scenarios (for banks 1 and 2). 
Finally, in Figure~\ref{fig:bound-q*} we plot the minimal solvency prices $\hat q^*(\sigma)$ and $\hat q^*(\beta\sigma_M)$ from the comonotonic and conditional expectation settings.  Though $\hat q^*(\sigma)$ is insensitive to the market beta $\beta$, we find that $\hat q_1^*(\beta\sigma_M)$ is monotonically decreasing but $\hat q_2^*(\beta\sigma_M)$ is monotonically increasing in $\beta$.  This makes it clear that the interaction of defaults between firms can greatly complicate the minimal solvency thresholds for the two extreme scenarios. 
Furthermore, these minimal solvency thresholds are one-to-one with the probability of default as summarized in Theorem~\ref{thm:comonotonic}.  As such, Figure~\ref{fig:bound-q*} demonstrates the counterintuitive result that, for bank 2, the probability of default is \emph{smaller} under the comonotonic setting than in the conditional setting.  This is contrary to what might be expected due to the relation with the price of debt and demonstrates that these bounds do not hold for the probability of default in general.
\begin{figure}[h!]
\centering
\begin{subfigure}[b]{.46\textwidth}
\includegraphics[width=\textwidth]{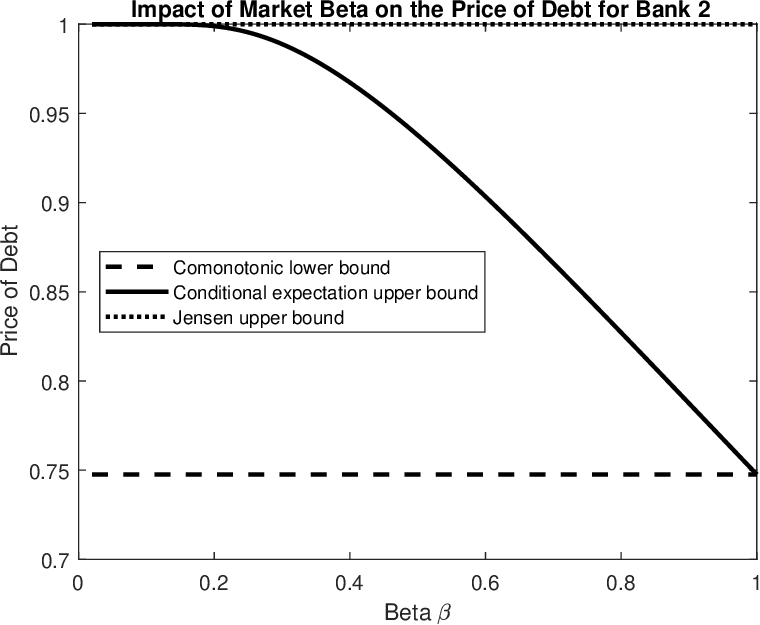}
\caption{Price of debt for bank 2 under changes in market beta from the upper (solid and dotted lines) and lower bounds (dashed line) from comonotonicity.}
\label{fig:bound-2}
\end{subfigure}
~
\begin{subfigure}[b]{.46\textwidth}
\includegraphics[width=\textwidth]{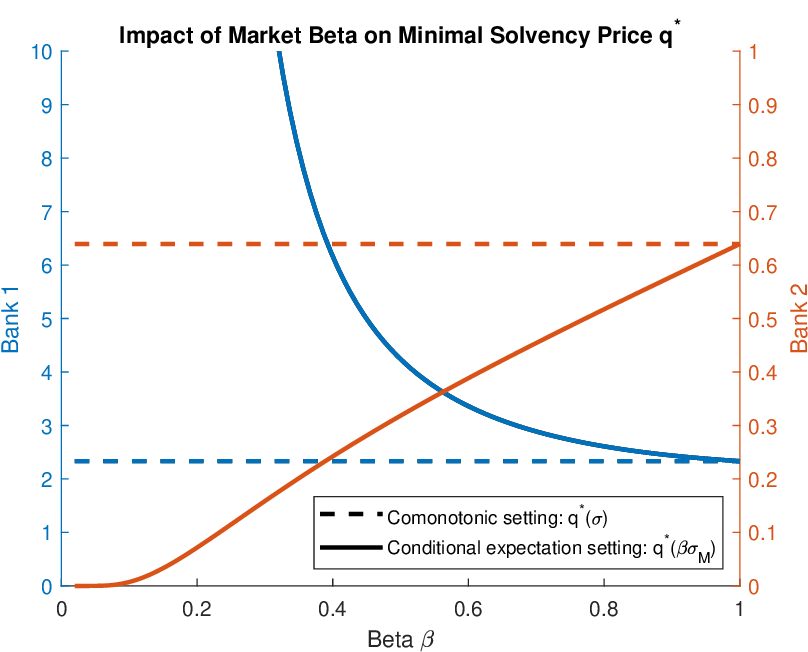}
\caption{Minimal solvency price under changes in market beta from the conditional (solid line) $\hat q^*(\beta\sigma_M)$ and comonotonic settings (dashed line) $\hat q^*(\sigma)$. Note that these no longer form upper and lower bounds.}
\label{fig:bound-q*}
\end{subfigure}
\caption{Example~\ref{ex:bound}: Demonstration of the bounds for the price of debt and minimal solvency prices as functions of the market beta.}
\label{fig:bound}
\end{figure}
\end{example}

\begin{example}\label{ex:rm-bound}
Consider again the financial system described in Example~\ref{ex:bound}.  Herein, rather than considering the price of debt, we will consider the bounds on the scalar systemic risk measures as described in Appendix~\ref{sec:sysrisk}. In particular, we will consider the systemic risk measures of Appendix~\ref{sec:insensitive} with coherent risk measure $\rho := ES_{10\%}$ (see Example~\ref{ex:ES}) and aggregation function given by the societal wealth $\Lambda(V) := V_{n+1}$.  The bounds, as provided in Lemma~\ref{lemma:insensitive}, are displayed in Figure~\ref{fig:ES_scalar}.  As for the price of debt in the above example, the comonotonic \emph{upper} bound and Jensen \emph{lower} bound are insensitive to the choice of $\beta \in [0,1]$, and the conditional \emph{lower} bound  approaches the comonotonic upper bound as market beta $\beta$ tends to $1$.  Therefore the conclusions regarding the bounds from the prior example still hold.  Notably, in contrast to Example~\ref{ex:bound}, the conditional bound is strictly better than the Jensen bound for market betas $\beta > 0$. 
\begin{figure}[h!]
\centering
\includegraphics[width=0.45\textwidth]{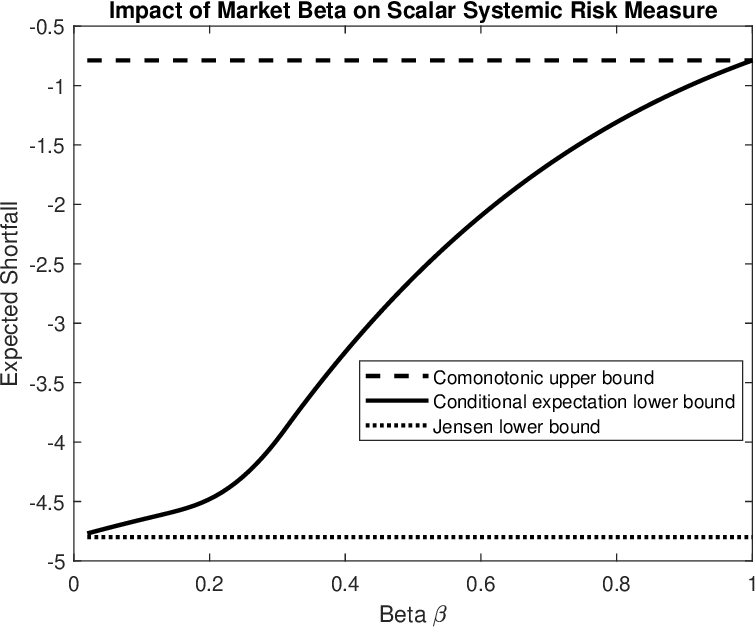}
\caption{Example~\ref{ex:rm-bound}: Demonstration of the bounds for a scalar systemic risk measure as functions of the market beta.}
\label{fig:ES_scalar}
\end{figure}
\end{example}

\section{Calibrating the interbank network for Section~\ref{sec:EBA}}\label{sec:calibration}
For {the network calibration in Section~\ref{sec:EBA}}, we consider a stylized balance sheet for each bank.  We consider banks with only two types of assets: \emph{interbank assets} $\sum_{j = 1}^n L_{ji}$ and \emph{external (risky) assets} $s_i$. Similarly, we consider three types of liabilities for each bank: \emph{interbank liabilities} $\sum_{j = 1}^n L_{ij}$, \emph{external liabilities} $L_{i,n+1}$, and \emph{capital} $C_i$.
In contrast, the EBA dataset provides the total assets $A_i$, capital $C_i$, and interbank liabilities $\sum_{j = 1}^n L_{ij}$ for each bank $i$.  

Therefore, to calibrate our interbank network, we will need to make a few simplifying assumptions and take advantage of techniques from prior literature.  In particular, as in~\cite{feinstein2017currency,CLY14,GY14}, the external (risky) assets are the difference between the total assets and interbank assets, the external obligations (owed to the societal node $L_{i,n+1}$) are equal to the total liabilities less the interbank liabilities and capital, and the interbank assets will be assumed equal to the interbank liabilities, i.e., $\sum_{j = 1}^n L_{ij} = \sum_{j = 1}^n L_{ji}$ for all banks $i$.  Thus, we can construct the remainder of our stylized balance sheet through the system of equations
\begin{align*}
s_i &= A_i - \sum_{j = 1}^n L_{ij}, \quad L_{i,n+1} = A_i - \sum_{j = 1}^n L_{ij} - C_i, \quad \bar p_i = L_{i,n+1} + \sum_{j = 1}^n L_{ij}.
\end{align*}
To verify the consistency of this calibration, we note that firm $i$'s net worth is equal to its capital, i.e., $C_i = A_i - \bar p_i$.

Finally, for our calibration, we need to consider the full nominal liabilities matrix $L \in \bbr_+^{87 \times 87}$ and not just the total interbank assets and liabilities.  In order to accomplish this task we consider the methodology of~\cite{GV16}.  That paper presents an MCMC methodology to construct the nominal liabilities matrix consistent with the total interbank assets and liabilities and which allows for a (randomized) sparsity structure.  As noted {within Section~\ref{sec:EBA}}, this example is for illustrative purposes only and thus we will consider only a single calibration of the interbank network.

\section{Details on the B\"uhlmann equilibrium and comonotonicity}\label{sec:buhlmann}
In Section~\ref{sec:motivation}, we introduced the basic details of the B\"uhlmann equilibrium.  In this section we wish to present the general construction of the B\"uhlmann equilibrium $(Y^*,\Q)$ in order to provide the comonotonicity argument that motivates this work.  In particular, we use this construction to demonstrate that every portfolio holdings solution $Y^*$ will be a comonotonic random vector.  The construction presented here follows directly from~\cite{buhlmann1984general}.  

First, recall the definition of a B\"uhlmann equilibrium from Section~\ref{sec:motivation}.  That is, consider a market of $n$ participants with concave and nondecreasing utility functions $u_i: \bbd\to \bbr$ with common domain $\bbd \subseteq \bbr$ and endowments $X_i \in L^\infty$ (such that $X_i \in \bbd$ a.s.).  A B\"uhlmann equilibrium is a pair $(Y^*,\Q)$ that is both
\begin{enumerate}
\item \emph{utility maximizing}: for every agent $i$, $Y_i^* \in \argmax_{Y_i \in L^\infty} \left\{\E\left[u_i\left(Y_i\right)\right] \; | \; \E^\Q[Y_i] \leq \E^\Q[X_i]\right\}$ and
\item \emph{market clearing}: $\sum_{i = 1}^n Y_i^* = \sum_{i = 1}^n X_i =: \xcal$.
\end{enumerate}
Note that the B\"uhlmann equilibrium problem reduces to the Arrow-Debreu equilibrium~\cite{arrow1954existence} under a finite probability space.

Since every market participant $i$ has a nondecreasing utility function $u_i$, without loss of generality we can assume $\E^\Q[Y_i^*] = \E^\Q[X_i]$ given the equilibrium measure $\Q$.  With such a reformulation, the utility maximizing condition is, trivially, equivalent to the reformulation
\begin{align*}
Y_i^* &= X_i + \hat Y_i^* - \E^\Q[\hat Y_i^*]\\
\hat Y_i^* &\in \argmax_{\hat Y_i \in L^\infty} \E\left[u_i\left(X_i + \hat Y_i - \E^\Q[\hat Y_i]\right)\right]
\end{align*}
in equilibrium such that market clearing is reformulated as $\sum_{i = 1}^n \hat Y_i^* = 0$.  With this reformulation, we have an unconstrained concave maximization problem.  As such we can consider the first order conditions for this utility maximizing problem, i.e.,
\begin{equation}\label{eq:foc}
u_i'(X_i + \hat Y_i^* - \E^\Q[\hat Y_i^*]) = \underbrace{\E\left[u_i'(X_i + \hat Y_i^* - \E^\Q[\hat Y_i^*])\right]}_{=: C_i \in \bbr_{++}} \frac{d\Q}{d\P} \quad \text{a.s.}
\end{equation}
In fact, any equilibrium must depend on the probability space only through $\xcal$; this is formalized within~\cite{borch1960,borch1962}.  Therefore, we will let $\ycal_i(\gamma) := X_i + \hat Y_i^* - \E^\Q[\hat Y_i^*]$ for every $i$ with $\sum_{i = 1}^n \ycal_i(\gamma) = \gamma$ and $\phi(\gamma)$ is the associated Radon-Nikodym derivative of $\Q$ w.r.t.\ $\P$.  (In this way $Y^* = \ycal(\xcal)$.)
With this notation, the first order conditions~\eqref{eq:foc} can be written as $u_i'(\ycal_i(\gamma)) = C_i \phi(\gamma)$ for any $\gamma$.  

Consider now the derivative (w.r.t.\ $\gamma$) of the logarithm of the first order condition~\eqref{eq:foc}, i.e., for every market participant $i$
\begin{equation}\label{eq:foc-2}
\underbrace{\frac{u_i''(\ycal_i(\gamma))}{u_i'(\ycal_i(\gamma))}}_{= -\rho_i(\ycal_i(\gamma))} \ycal_i'(\gamma) = \frac{\phi'(\gamma)}{\phi(\gamma)}
\end{equation}
where $\rho_i$ denotes the risk aversion of bank $i$.
Note that, by construction, $\sum_{i = 1}^n \ycal_i'(\gamma) = 1$, therefore~\eqref{eq:foc-2} can be exploited to provide the relation
\begin{equation}\label{eq:buhlmann-2}
1 = -\frac{\phi'(\gamma)}{\phi(\gamma)}\underbrace{\sum_{i = 1}^n \frac{1}{\rho_i(\ycal_i(\gamma))}}_{=: n \rho(\gamma)^{-1}}
\end{equation}
where $\rho$ is the harmonic average of the risk aversions $\rho_i$ for $i = 1,...,n$.

The B\"uhlmann equilibrium $(Y^*,\Q)$, through the formulation $(\ycal(\xcal),\Q)$, can be constructed from~\eqref{eq:buhlmann-2}.  
First, let us consider the equilibrium measure $\Q$ through its Radon-Nikodym derivative $\phi(\gamma)$.  Through a reformulation of~\eqref{eq:buhlmann-2}, $\phi'(\gamma) = -\frac{1}{n}\rho(\gamma)\phi(\gamma)$.  As we wish for $\phi$ to define a measure and recalling that $\frac{d\Q}{d\P} = \phi(\xcal)$, we impose the additional condition that $\E[\phi(\xcal)] = 1$.  As such, we can define the measure $\Q$ via
\[\frac{d\Q}{d\P} = \frac{\exp\left(-\frac{1}{n}\int_{c}^\xcal \rho(\xi)d\xi\right)}{\E\left[\exp\left(-\frac{1}{n}\int_{c}^\xcal \rho(\xi)d\xi\right)\right]}\]
for arbitrary $c \in \bbd$.
Let us now consider $\ycal_i(\xcal)$.  By studying both~\eqref{eq:foc-2} and~\eqref{eq:buhlmann-2}, we can construct the differential system
\begin{equation}\label{eq:ycal}
\ycal_i'(\gamma) = \frac{1}{n}\frac{\rho(\gamma)}{\rho_i(\ycal_i(\gamma))}
\end{equation}
with initial condition $\ycal_i(c) \in \bbr$ such that $\E^\Q[\ycal_i(\xcal)] = \E^\Q[X_i]$ for every bank $i$ (at the equilibrium measure $\Q$ constructed above).  
In fact, the existence and uniqueness of the B\"uhlmann equilibrium problem is equivalent to the existence and uniqueness of such an initial condition for $\ycal$; this was the approach taken in~\cite{buhlmann1984general}.  Such a result on existence and uniqueness is presented in Theorem~\ref{thm:buhlmann} below.

\begin{remark}
By construction in~\eqref{eq:ycal}, the portfolio holdings of all banks $\ycal_i$ is nondecreasing in $\gamma$.  As the equilibrium portfolio holdings are $Y^* = \ycal(\xcal)$, the comonotonicity of these holdings (as presented in Section~\ref{sec:motivation}) immediately follows.
\end{remark}

We conclude this section by presenting the existence and uniqueness results for the B\"uhlmann equilibrium.
\begin{theorem}\label{thm:buhlmann}
There exists a unique B\"uhlmann equilibrium $(Y^*,\Q)$ if either:
\begin{enumerate}
\item $\bbd = \bbr$ and the absolute risk aversions $z \mapsto \rho_i(z) := -u_i''(z)/u_i'(z) > 0$ are Lipschitz continuous for every $i=1, ...,n$; or 
\item $\bbd = \bbr_{++}$, the Inada conditions are satisfied (i.e., $\lim_{z \to 0} u_i'(z) = \infty$ and $\lim_{z \to \infty} u_i'(z) = 0$), and $z \mapsto z u_i'(z)$ are nondecreasing for every $i = 1,...,n$.
\end{enumerate}
\end{theorem}
\begin{proof}
This is proven in \cite{buhlmann1984general} if $\bbd = \bbr$ and \cite{aase1993equilibrium} if $\bbd = \bbr_{++}$.
\end{proof}

\section{Proofs of the Main Results}\label{sec:proofs}

\begin{proof}[Proof of Theorem~\ref{thm:comonotonic}]
{As in the proof of Corollary~\ref{cor:comonotonic-rm},} this follows directly from {the comonotonicity of $X$,} the {piecewise linear} construction of $V$ in the fictitious default algorithm of Corollary~\ref{cor:FDA} and the construction of $q^*$.
\end{proof}


\begin{proof}[Proof of Lemma~\ref{lemma:bound}]
{
We will prove this result as a corollary to Lemma~\ref{lemma:insensitive}.  Specifically, consider the utility-based shortfall risk measure with linear utility function $u(z) := z$ with threshold utility $c = 0$ (see Example~\ref{ex:ubsr}).  By construction this risk measure can be written as the negative expectation $\rho(Y) = -\E[Y]$.  We will consider two classes of aggregation functions, each indexed by the banks $i \in \{1,2,...,n+1\}$; let $\Lambda_i^V(V) = V_i$ and $\Lambda_i^p(V) = \bar p_i - V_i^-$ for any bank $i$.  Note that both classes of aggregation functions are concave and submodular (and therefore directionally concave by Proposition~\ref{prop:dcx}).  Further note that $p_i(x;\Pi,\bar p,\alpha_x,\alpha_L) = \Lambda_i^p(V(x;\Pi,\bar p,\alpha_x,\alpha_L))$ for any financial system by construction.
Therefore the bounds immediately follow from an application of Lemma~\ref{lemma:insensitive}.
}
\end{proof}

\begin{proof}[Proof of Corollary~\ref{cor:bound}]
As with the proof of Lemma~\ref{lemma:insensitive}, we will first prove these results in the full recovery setting, i.e., $\alpha_x = \alpha_L = 1$.
First, consider deterministic endowments $x \in \bbr^n_+$ and notice that $E_i(x) = x_i + \sum_{j = 1}^n \pi_{ji} p_j(x) - p_i(x)$ as shown in~\cite{EN01}.  As such the total equity in the financial system is given by the total system endowments, i.e., $\sum_{i = 1}^{n+1} E_i(x) = \sum_{i = 1}^{n+1} x_i$ since $\sum_{i = 1}^{n+1} \pi_{ji} = 1$ for any bank $j$.  For our considerations, since the societal node has no obligations back into the system, $x_{n+1}$ is arbitrary and only appears as a shift for the equity of that node.
Second, consider the random endowments $X \in (L^1_+)^n$ and its comonotonic copula $Z$.  By the above system-wide equity, it immediately follows that $\sum_{i = 1}^{n+1} \E[E_i(X)] = \sum_{i = 1}^{n+1} \E[E_i(Z)]$.
However, for the societal node $n+1$, $\E[E_{n+1}(X)] \in [\E[E_{n+1}(Z)] \, , \, \E[E_{n+1}(\E[X \; | \; {\gcal}])]]$ for any {$\gcal \in \bbg$} because $E_{n+1} \equiv V_{n+1}$ as the societal node has no obligations and thus cannot default by construction.  This immediately provides the desired bounds.

Now let $\alpha_x,\alpha_L \in [0,1]$ arbitrary. Following {Proposition~\ref{prop:alpha-bound}}, for any $x \in \bbr^n_+$ and any bank $i$,
$V_i(x;\Pi,\bar p,\alpha_x,\alpha_L) \in [V_i(\alpha_x x;\alpha_L \Pi,\bar p,1,1) \, , \, V_i(x;\Pi,\bar p,1,1)]$.
As equity is the positive part of the wealths, 
\begin{align*}
E_i(x;\Pi,\bar p,\alpha_x,\alpha_L) &\in [E_i(\alpha_x x;\alpha_L \Pi,\bar p,1,1) \, , \, E_i(x;\Pi,\bar p,1,1)].
\end{align*}
As both upper and lower bounds with bankruptcy costs are clearing solutions with full recovery, the bounds on the expected system equity follows by the above argument.
\end{proof}

\begin{proof}[Proof of Corollary~\ref{cor:comonotonic}]
First, in the grand coalition, by Corollary~\ref{cor:bound} all banks are invested comonotonically; this follows from a simple contradiction argument as the comonotonic copula of any endowment scheme provides an upper bound on the total market capitalization.
Second, we wish to show that the Shapley value is an imputation.  As provided in Chapter II.2 of~\cite{driessen2013cooperative}, this is true if $v: 2^\ncal \to \bbr_+$ is superadditive.  Let $\ccal,\dcal \subseteq \ncal$ be disjoint sets:
    \begin{align*}
    v(\ccal \cup \dcal) &= \sup_{X_{\ccal \cup \dcal} \in \prod_{i \in \ccal \cup \dcal} \bcal_i} \inf_{X_{-(\ccal \cup \dcal)} \in \prod_{j \not\in \ccal \cup \dcal} \bcal_j} \sum_{i \in \ccal \cup \dcal} \E[E_i(X_{\ccal \cup \dcal} , X_{-(\ccal \cup \dcal)})] \\
        &= \sup_{X_{\ccal \cup \dcal} \in \prod_{i \in \ccal \cup \dcal} \bcal_i} \inf_{X_{-(\ccal \cup \dcal)} \in \prod_{j \not\in \ccal \cup \dcal} \bcal_j} \left(\sum_{i \in \ccal} \E[E_i(X_{\ccal \cup \dcal} , X_{-(\ccal \cup \dcal)})] \right.\\
        &\qquad\qquad \left.+ \sum_{i \in \dcal} \E[E_i(X_{\ccal \cup \dcal},X_{-(\ccal \cup \dcal)})]\right)\\
        &\geq \sup_{X_{\ccal \cup \dcal} \in \prod_{i \in \ccal \cup \dcal} \bcal_i} \left(\inf_{X_{-(\ccal \cup \dcal)} \in \prod_{j \not\in \ccal \cup \dcal} \bcal_j} \sum_{i \in \ccal} \E[E_i(X_{\ccal \cup \dcal} , X_{-(\ccal \cup \dcal)})] \right.\\
        &\qquad\qquad \left.+ \inf_{X_{-(\ccal \cup \dcal)} \in \prod_{j \not\in \ccal \cup \dcal} \bcal_j} \sum_{i \in \dcal} \E[E_i(X_{\ccal \cup \dcal},X_{-(\ccal \cup \dcal)})]\right)\\
        &= \sup_{X_\ccal \in \prod_{i \in \ccal} \bcal_i} \sup_{X_\dcal \in \prod_{i \in \dcal} \bcal_i} \left(\inf_{X_{-(\ccal \cup \dcal)} \in \prod_{j \not\in \ccal \cup \dcal} \bcal_j} \sum_{i \in \ccal} \E[E_i(X_\ccal , X_\dcal , X_{-(\ccal \cup \dcal)})] \right.\\
        &\qquad\qquad \left.+ \inf_{X_{-(\ccal \cup \dcal)} \in \prod_{j \not\in \ccal \cup \dcal} \bcal_j} \sum_{i \in \dcal} \E[E_i(X_\ccal , X_\dcal , X_{-(\ccal \cup \dcal)})]\right)\\
        &\geq \sup_{X_\ccal \in \prod_{i \in \ccal} \bcal_i} \left(\inf_{X_\dcal \in \prod_{i \in \dcal} \bcal_i} \inf_{X_{-(\ccal \cup \dcal)} \in \prod_{j \not\in \ccal \cup \dcal} \bcal_j} \sum_{i \in \ccal} \E[E_i(X_\ccal , X_\dcal , X_{-(\ccal \cup \dcal)})] \right.\\
        &\qquad\qquad \left.+ \sup_{X_\dcal \in \prod_{i \in \dcal} \bcal_i} \inf_{X_{-(\ccal \cup \dcal)} \in \prod_{j \not\in \ccal \cup \dcal} \bcal_j} \sum_{i \in \dcal} \E[E_i(X_\ccal , X_\dcal , X_{-(\ccal \cup \dcal)})]\right)\\
        &\geq \sup_{X_\ccal \in \prod_{i \in \ccal} \bcal_i} \inf_{X_\dcal \in \prod_{i \in \dcal} \bcal_i} \inf_{X_{-(\ccal \cup \dcal)} \in \prod_{j \not\in \ccal \cup \dcal} \bcal_j} \sum_{i \in \ccal} \E[E_i(X_\ccal , X_\dcal , X_{-(\ccal \cup \dcal)})] \\
        &\qquad\qquad + \inf_{X_\ccal \in \prod_{i \in \ccal} \bcal_i} \sup_{X_\dcal \in \prod_{i \in \dcal} \bcal_i} \inf_{X_{-(\ccal \cup \dcal)} \in \prod_{j \not\in \ccal \cup \dcal} \bcal_j} \sum_{i \in \dcal} \E[E_i(X_\ccal , X_\dcal , X_{-(\ccal \cup \dcal)})]\\
        &\geq \sup_{X_\ccal \in \prod_{i \in \ccal} \bcal_i} \inf_{X_\dcal \in \prod_{i \in \dcal} \bcal_i} \inf_{X_{-(\ccal \cup \dcal)} \in \prod_{j \not\in \ccal \cup \dcal} \bcal_j} \sum_{i \in \ccal} \E[E_i(X_\ccal , X_\dcal , X_{-(\ccal \cup \dcal)})] \\
        &\qquad\qquad + \sup_{X_\dcal \in \prod_{i \in \dcal} \bcal_i} \inf_{X_\ccal \in \prod_{i \in \ccal} \bcal_i} \inf_{X_{-(\ccal \cup \dcal)} \in \prod_{j \not\in \ccal \cup \dcal} \bcal_j} \sum_{i \in \dcal} \E[E_i(X_\ccal , X_\dcal , X_{-(\ccal \cup \dcal)})]\\
        &= \sup_{X_\ccal \in \prod_{i \in \ccal} \bcal_i} \inf_{X_{-\ccal} \in \prod_{j \not\in \ccal} \bcal_j} \sum_{i \in \ccal} \E[E_i(X_\ccal , X_{-\ccal})] + \sup_{X_\dcal \in \prod_{i \in \dcal} \bcal_i} \inf_{X_{-\dcal} \in \prod_{j \not\in \dcal} \bcal_j} \sum_{i \in \dcal} \E[E_i(X_\dcal , X_{-\dcal})]\\
        &= v(\ccal) + v(\dcal)
    \end{align*}
    and the proof is complete.
\end{proof}


\end{document}